% !TEX useTabs
% !TEX root = main.tex
%
\documentclass[compsoc,conference]{IEEEtran}
%chsp: removed letter, times, 10pt (no effect or unused options)

%%%%%%%
% packages
%%%%%%%
\usepackage[english]{babel}
\usepackage[utf8]{inputenc}
\usepackage[T1]{fontenc}
\usepackage{amsfonts, amsmath, amsthm, amssymb}

\usepackage{arydshln}
\usepackage{balance}
\usepackage{enumerate}
\usepackage[shortlabels]{enumitem}
\usepackage[b]{esvect}
\usepackage{framed}
\usepackage[scr=boondoxo]{mathalfa}
\usepackage{mathpartir}
\usepackage{mathtools}
\usepackage{multirow}
\usepackage{pifont}
\usepackage{soul}
\usepackage{stmaryrd}
\usepackage{xcolor}
\usepackage{xifthen}
\usepackage{xspace}
\usepackage{tikz}
\usepackage{listings}
\usetikzlibrary{positioning}
\usepackage[pass]{geometry} %only for the appendix
\usepackage[textsize=tiny]{todonotes}
\usepackage{orcidlink}

\usepackage[colorlinks=true, urlcolor=blue]{hyperref}
\usepackage{microtype} 
\usepackage{comment}

% boolean to hide comments and stuff
\newboolean{draft}
% SWITCH THE FOLLOWING TWO LINES COMMENT/UNCOMMENT TO REMOVE ALL FOOTNOTES
\setboolean{draft}{false}
%\setboolean{draft}{true}

% boolean to show colored text marked as new (blue) and as to be removed (red)
\newboolean{colored}
\setboolean{colored}{false}

% boolean to include the appendix with proofs in the full version
\newboolean{fullversion}
\setboolean{fullversion}{true}
\newcommand{\ifproofs}[2]{\ifthenelse{\boolean{fullversion}}{#1}{#2}}
\newcommand{\iffull}[2]{\ifthenelse{\boolean{fullversion}}{#1}{#2}}

\pagestyle{plain}

% load the macros file
% !TEX root = main.tex

\theoremstyle{plain}
\newtheorem{theorem}{Theorem}
\newtheorem{lemma}{Lemma}
\newtheorem{proposition}{Proposition}
\newtheorem{corollary}{Corollary}

\theoremstyle{definition}
\newtheorem{definition}{Definition}
\newtheorem{assumption}{Assumption}
\newtheorem{example}{Example}
\newtheorem*{remark}{Remark}

\newcommand{\case}[1]{\underline{#1}}
\newcommand{\mypar}[1]{\smallskip\noindent\textbf{#1.}}

%comments
\ifthenelse{\boolean{draft}}{%draft
\newcommand{\fix}[1]{\footnote{#1}}
}{% final version
\newcommand{\fix}[1]{}
}
\newcommand{\namedfix}[2]{\fix{\textbf{#1}: #2}}
\newcommand{\christoph}[1]{\namedfix{CS}{\textcolor{blue}{#1}}}

\ifthenelse{\boolean{colored}}{ %colored
\newcommand{\newcr}[1]{\textcolor{blue}{#1}}    % new stuff for camera-ready version
\newcommand{\new}[1]{\textcolor{blue}{#1}}

\newenvironment{newtext}{\color{blue}}{}
}{ %final version
\newcommand{\newcr}[1]{#1}    % new stuff for camera-ready version
\newcommand{\new}[1]{#1}

\newenvironment{newtext}{}{}
}

% math fonts
\newcommand{\id}[1]{\mathit{#1}} 		% identifiers, variables, etc
\newcommand{\kw}[1]{\mathsf{#1}}		% keywords, constructors, etc
\newcommand{\perm}[1]{\mathbf{#1}}	% I/O permissions
\newcommand{\op}[1]{\mathtt{#1}}		% I/O operation

\newcommand{\vect}[1]{\overline{#1}}    	% vectors of terms etc.

% general math stuff
     	% powerset
\newcommand{\fun}{\rightarrow}     		% function 
 	% partial function
\newcommand{\kleene}[1]{#1^{\star}} 	% sequences / lists 
\newcommand{\img}{\mathit{img}}		% image of a function

%\newcommand{\implies}{\Longrightarrow}

% nats, bytes, bytestrings, and crypto algebras
\newcommand{\N}{\mathbb{N}}
\newcommand{\B}{\mathbb{B}}        		% set of bytes
\newcommand{\bytestrings}{\kleene{\B}}	
	
\newcommand{\Calg}{\mathcal{B}}  

% multiset rewrite systems
\newcommand{\msrsys}{\mathcal{R}}                  % generic/global MSR system
\newcommand{\RR}{\msrsys}				 % shorthand
\newcommand{\msrsub}[1]{\msrsys_\mathsf{#1}}
\newcommand{\Renv}{\msrsub{env}}
\newcommand{\MD}{\mathit{MD}}
\newcommand{\FreshR}{\mathit{Fresh}}
\newcommand{\Rbuf}{\msrsub{io}}
\newcommand{\Rintf}{\msrsub{intf}}
\newcommand{\Rrole}[1]{\msrsub{role}^{#1}}

% ground instances of rule labels (multiset of actions)

% roles
\newcommand{\Alice}{\id{Alice}}

% state and action facts
\newcommand{\facts}{\mathrm{facts}}
\newcommand{\fact}[1]{\mathsf{#1}}

\newcommand{\Frf}{\fact{Fr}}

\newcommand{\Inf}{\fact{in}}
\newcommand{\Outf}{\fact{out}}
\newcommand{\Eq}{\mathsf{E}}
\newcommand{\Fre}{\fact{Fresh}}
\newcommand{\Setupf}{\fact{Setup}}

\newcommand{\pkf}{\fact{pk}}		% no bang!
\newcommand{\skf}{\fact{sk}}		% no bang!
\newcommand{\K}{\mathsf{K}}		% no bang!

\newcommand{\knowsf}{\fact{K}}

% crypto
\newcommand{\encSYM}{\mathsf{enc}}
\newcommand{\enc}[2]{\encSYM(#1,#2)}

\newcommand{\decSYM}{\mathsf{dec}}
\newcommand{\dec}[2]{\decSYM(#1,#2)}
\newcommand{\pairSYM}[1]{\mathsf{pair}}

\newcommand{\tuple}[1]{\langle #1 \rangle}

\newcommand{\getmsgSYM}[1]{\mathsf{getmsg}}

\newcommand{\hashSYM}{\mathsf{hash}}
\newcommand{\hash}[1]{\hashSYM(#1)}
%\newcommand{\junk}{\mathsf{junk}}

% crypto algebra functions
\newcommand{\Calgify}[1]{#1^{\Calg}}
\newcommand{\encBsym}{\Calgify{\encSYM}}
\newcommand{\encB}[2]{\encBsym(#1,#2)}
\newcommand{\decBsym}{\Calgify{\decSYM}}
\newcommand{\decB}[2]{\decBsym(#1,#2)}

% disjoint unions
\newcommand{\udis}{\uplus}
\newcommand{\bigudis}{\biguplus}

% parallel compositions
\newcommand{\interl}{\mathrel{|||}}
\newcommand{\sync}[1]{\mathrel{\parallel_{#1}}}

% mediator functions / relabeling
\newcommand{\med}{\pi}
\newcommand{\medint}{\new{\pi_{\mathit{int}}}}
\newcommand{\medext}{\new{\pi_{\mathit{ext}}}}
\newcommand{\medextp}{\new{\pi'_{\mathit{ext}}}}

% fact signatures
\newcommand{\sigsub}[1]{\Sigma_{\mathsf{#1}}}
\newcommand{\sigfacts}{\sigsub{facts}}
\newcommand{\siglin}{\sigsub{lin}}
\newcommand{\sigper}{\sigsub{per}}
\newcommand{\sigact}{\sigsub{act}}
\newcommand{\sigenv}{\sigsub{env}}
\newcommand{\sigstate}[1]{\sigsub{state}^{#1}}
\newcommand{\sigin}{\sigsub{in}}
\newcommand{\sigout}{\sigsub{out}}
\newcommand{\sigbuf}[1]{\sigsub{buf}^{#1}}
\newcommand{\sigrole}[1]{\sigsub{role}^{#1}}

% all, linear, persistent facts
\newcommand{\fac}{\mathcal{F}}
\newcommand{\faclin}{\mathcal{F}_{\mathsf{lin}}}
\newcommand{\facper}{\mathcal{F}_{\mathsf{per}}}
\newcommand{\faceq}{\Sigma_{\mathsf{eq}}}
\newcommand{\feq}{\phi_{\mathsf{eq}}}
\newcommand{\req}{\mathsf{R}_{\mathsf{eq}}}

% types of atomic messages, run ids
\newcommand{\freshtype}{\mathit{fresh}}
\newcommand{\pubtype}{\mathit{pub}}
\newcommand{\names}{\mathcal{N}}       % names = fresh \cup pub
\newcommand{\vartype}{\mathcal{V}}
\newcommand{\rid}{\mathit{rid}}

% terms, ground terms, and patterns
\newcommand{\Terms}{\mathcal{T}}
\newcommand{\TermsSig}{\Terms_{\Sigma}}
\newcommand{\TermsFull}{\TermsSig(\names\cup\vartype)}
\newcommand{\GTerms}{\TermsSig(\names)}

\newcommand{\Msgs}{\mathcal{M}}
\newcommand{\Msg}{m}
\newcommand{\GT}{\mathscr{G}}

% different kinds of trace sets
\newcommand{\traces}{\mathrm{Tr}}
\newcommand{\tamtraces}{\traces_{\mathsf{t}}}
\newcommand{\tamtracesp}{\traces'}
\newcommand{\tamtraceseq}{\traces_{\mathsf{t}}}
\newcommand{\tamtracespeq}{\traces'_{eq}}
\newcommand{\tamtraceseqq}{\traces_{\mathsf{t},eq}}

\newcommand{\Phieq}{\Phi_{\mathsf{eq}}}

%\newcommand{\tracesp}{\traces'}
%\newcommand{\tamtraceseq}{\traces_{\mathrm{t,=}}}

% different kinds of trace inclusions / preorders
\newcommand{\tracePre}{\mathrel{\preccurlyeq}} 
\newcommand{\tamtracePre}{\mathrel{\tracePre_{\mathrm{t}}}} 
\newcommand{\tamtracepPre}{\mathrel{\tracePre'}} 

\newcommand{\redms}[2]{\stackrel{\;#2\;}{\Longrightarrow}_{#1}}
\newcommand{\redmseq}[2]{\stackrel{\;#2\;}{\Longrightarrow}_{#1,\mathsf{eq}}}

\newcommand{\redev}[1]{\xrightarrow{\;#1\;}}
\newcommand{\trans}[1]{\xrightarrow{#1}}

% labeled transition systems
\newcommand{\EE}{\mathscr{E}}
\newcommand{\CC}{\mathscr{C}}

\newcommand{\realEnv}{\EE}

\renewcommand{\S}{\mathcal{S}}
\newcommand{\E}{\mathcal{E}}
\newcommand{\G}{\mathcal{G}}
\newcommand{\U}{\mathcal{U}}

\newcommand{\vars}{\mathrm{vars}}

\newcommand{\evs}[1]{e^s_{#1}}
\newcommand{\eve}[1]{e^e_{#1}}

\newcommand{\pip}{\tilde{\pi}}

\newcommand{\coll}{\mathcal{C}}

\newcommand{\PP}{\psi}
\newcommand{\prog}{c}

\newcommand{\rew}[1]{\xrightarrow{#1}}

% I/O specs
\newcommand{\IOSpecRole}[1]{P_{#1}}

\newcommand{\Token}[1]{token(#1)}

% multiset operations
\newcommand{\msetopify}[1]{#1^\mathsf{m}}
\newcommand{\enumM}[1]{[#1]}
\newcommand{\emptyM}{\enumM{}}
\newcommand{\inM}{\msetopify{\in}}
\newcommand{\cupM}{\msetopify{\cup}}
\newcommand{\capM}{\msetopify{\cap}}
\newcommand{\subM}{\msetopify{\subseteq}}
\newcommand{\setminusM}{\msetopify{\setminus}}
\newcommand{\setM}{\mathrm{set}}

\newcommand{\skipE}{\mathsf{skip}}

\renewcommand{\ll}{\mathscr{l}}
\renewcommand{\aa}{\mathscr{a}}
\newcommand{\rr}{\mathscr{r}}

\newcommand{\multileft}{\{\hspace{-0.2em}|}
\newcommand{\multiright}{|\hspace{-0.2em}\}}

% Wireguard signature
\newcommand{\aead}{\mathrm{aead}}
\newcommand{\h}{\mathrm{h}}
\newcommand{\kdf}{\mathrm{kdf}}

% Wireguard variables
\newcommand{\IR}[1]{{#1}_\mathit{IR}}
\newcommand{\RI}[1]{{#1}_\mathit{RI}}
\newcommand{\kIR}{\IR{k}}
\newcommand{\kRI}{\RI{k}}
\newcommand{\nIR}{\IR{n}}
\newcommand{\nRI}{\RI{n}}

%%%%%%% Listings %%%%%%%

\lstdefinelanguage{gobra}{
  language=go,
  sensitive=true,
  morecomment=[l]{//},
  morecomment=[s]{/*}{*/},
  morekeywords=[1]{ %% Keywords of the programming language subset
    pred, implements, ghost, set
  },
  morekeywords=[2]{ %% Keywords of the specification language subset
    requires, ensures, invariant, req, ens, pure,  unfolding, in, forall, acc
  },
  morekeywords=[3]{ %% Keywords of the proof language subset
    fold, unfold,
    assume, assert, inhale, exhale
  },
  basicstyle={\ttfamily\footnotesize},
  commentstyle={\color[HTML]{747678}\textit},
  keywordstyle={[1]\color[HTML]{0005FF}},%\bfseries
  keywordstyle={[2]\color[HTML]{CC5500}},%\bfseries
  keywordstyle={[3]\color[HTML]{EC008C}},%\bfseries
  mathescape=true,
%  escapechar=#,
  moredelim=**[is][\normalfont\itshape]{'}{'}
    %% Should use \var rather than \normalfont\itshape, but that doesn't work.
    % Probably because \var expects an argument.
}

\lstnewenvironment{gobra}[1][]{
  \lstset{
    language=gobra,
    floatplacement={tbp},
    captionpos=b,
    frame=lines,
    numbers=left, % note that several listings in section "introduction" rely on the line numbers
    numberblanklines=false,
    xleftmargin=8pt,
    xrightmargin=8pt,
    numberstyle=\scriptsize,
    breaklines=true,
    #1
  }
}{}

\def\code{%
    \lstinline[language=gobra,basicstyle=\ttfamily]}

\title{Sound Verification of Security Protocols:\\From Design to Interoperable Implementations \iffull{(extended version)}{}\vspace{-.25em}}

\author{
% put all authors into a single `\IEEEauthorblockN` as recommended for 4+ authors by IEEEtran documentation:
\IEEEauthorblockN{%
	Linard Arquint\IEEEauthorrefmark{1} \orcidlink{0000-0002-6230-8014},
	Felix A.\ Wolf\IEEEauthorrefmark{1} \orcidlink{0000-0002-8573-2387},
	Joseph Lallemand\IEEEauthorrefmark{2},
	Ralf Sasse\IEEEauthorrefmark{1} \orcidlink{0000-0002-5632-6099},\\
	Christoph Sprenger\IEEEauthorrefmark{1} \orcidlink{0000-0003-2941-5165},
	Sven N.\ Wiesner\IEEEauthorrefmark{1} \orcidlink{0000-0002-0468-5196},
	David Basin\IEEEauthorrefmark{1} \orcidlink{0000-0003-2952-939X}, and
	Peter M{\"{u}}ller\IEEEauthorrefmark{1} \orcidlink{0000-0001-7001-2566}
}
\IEEEauthorblockA{\IEEEauthorrefmark{1}\textit{Department of Computer Science, ETH Zurich, Switzerland}}
\IEEEauthorblockA{\IEEEauthorrefmark{2}\textit{Univ Rennes, CNRS, IRISA, France}}
\IEEEauthorblockA{\texttt{\{linard.arquint, felix.wolf, ralf.sasse, sprenger, sven.wiesner,}\\\texttt{basin, peter.mueller\}@inf.ethz.ch, joseph.lallemand@irisa.fr}}
}

\begin{document}

% to also have the page number on the first page
\thispagestyle{plain}

\maketitle

\begin{abstract}
\new{We provide a framework consisting of tools and metatheorems for the 
end-to-end verification of security protocols, 
which bridges the gap between automated protocol verification 
and code-level proofs.
We automatically translate a
Tamarin protocol model 
into a set of I/O specifications expressed in separation logic. 
Each such specification describes a protocol role's intended I/O behavior 
against which the role's implementation is then verified.
Our soundness result guarantees that the verified implementation
inherits all security (trace) properties proved for the Tamarin model.
Our framework thus enables us to leverage the substantial body of prior verification work in Tamarin 
to verify new and existing implementations. 
The possibility to use any separation logic code verifier provides flexibility regarding 
the target language.
To validate our approach and show that it scales to real-world protocols, 
we verify a substantial part of the official Go implementation of the WireGuard VPN key 
exchange protocol.}
\end{abstract}

\begin{IEEEkeywords}
Protocol verification, Symbolic security, Automated verification, Tamarin, Separation logic, Implementation.
\end{IEEEkeywords}

% !TEX root = main.tex

\section{Introduction}
\label{sec:introduction}

Security protocols are central to securing communication
and distributed computation and, by nature, they are often employed in critical applications.
Unfortunately, as history amply demonstrates, they are notoriously difficult to get right,
and their flaws can be a source of devastating attacks.
Hence the importance of their formal modeling and verification. 

Over the past decades, expressive and highly automated security protocol verifiers
have been developed, including the two state-of-the-art tools 
Tamarin~\cite{SchmidtMCB12,tamarin13} and ProVerif~\cite{DBLP:conf/csfw/Blanchet01}, \new{which 
have been used to analyze real-world protocols
such as TLS~\cite{DBLP:conf/sp/BhargavanBK17}, 5G~\cite{DBLP:conf/ccs/BasinDHRSS18}, and EMV~\cite{DBLP:conf/sp/BasinST21}.}
These tools build on a model of cryptographic protocols
called the \emph{symbolic} or \emph{Dolev-Yao model}, where cryptographic primitives are idealized,
protocols are modeled by process algebras or rewriting systems,
and the attacker is an abstract entity controlling the network and manipulating messages represented as terms.
However,
the protocol verified is a highly abstract version of the actual protocol executed
and there is \textit{a priori} no formal link between these two versions.
The problem of verifying the security of protocol implementations has been studied before,
but the solutions usually come with severe limitations, \new{the most important of which we highlight here. Section~\ref{sec:relatedWork} contains a detailed overview of related work.}

\looseness=-1
Many  existing approaches are based on code generation or model extraction, and
either extract executable code from a relatively abstract verified model (e.g.~\cite{DBLP:conf/fps/Modesti15,DBLP:conf/birthday/AlmousaMV15,DBLP:journals/fac/SistoCAP18}) or conversely extract a model from the code for verification
(e.g.~\cite{DBLP:journals/toplas/BhargavanFGT08}).
Another recent approach, DY$^\star$~\cite{BBHHKSW21},
provides a framework to write an executable implementation in F$^\star$ and obtain a corresponding model that can be reasoned about symbolically, also in F$^\star$.
Other methods adopt an approach, where security is proved in a computational model (e.g.~\cite{DBLP:conf/post/CadeB13, DBLP:conf/sp/Delignat-Lavaud17, DBLP:conf/sp/Delignat-Lavaud21}).
These models are more precise and thus give stronger guarantees than symbolic ones. However, their proofs are very difficult to automate.

\looseness=-1
These previous approaches are usually tied to a specific implementation language,
like ML, F\#, or Java dialects, and they are difficult to extend to other languages.
They are therefore ill-suited to verifying pre-existing implementations, especially
when used for code extraction.
In addition, the extraction mechanisms used are not always proved correct or even formalized, which 
weakens the guarantees for the resulting code.
Moreover, in many cases, the security proof is tailored to the implementation considered rather than constructed at an abstract level by a standard security protocol verifier such as Tamarin or ProVerif.  Hence one can neither leverage
these tools' automation capabilities nor the
substantial prior work invested in security protocol proofs using them.

\subsubsection*{Our approach}
We propose a novel approach to end-to-end verified security protocol implementations.
Our approach leverages the combined power of state-of-the-art security protocol verifiers
and source code verifiers.
This provides abstract, concise, and expressive security protocol specifications on the modeling side
and flexibility and versatility on the implementation side.

More precisely, our approach bridges abstract security protocol models expressed in 
Tamarin as multi-set rewriting systems 
with concrete program specifications expressed as I/O specifications (in a dialect of separation logic~\cite{DBLP:conf/lics/Reynolds02}), 
against which implementations can be verified.
Its technical core is a procedure, \new{implemented in an associated tool,} that translates Tamarin models into I/O specifications along with a soundness
proof, stating that an implementation satisfying the I/O specifications refines the abstract model 
in terms of trace inclusion. As a result, any \emph{trace property} proved for the abstract model using Tamarin, 
including standard security protocol properties such as secrecy and authentication, also holds for
the implementation.
 
Our approach provides a modular and flexible way to verify security protocol implementations.
On the model verification side, \new{we can leverage} Tamarin's proof automation capabilities to prove protocols secure.
Moreover, we can prove a \new{given} protocol's security once in Tamarin, and reuse this proof to verify
multiple \new{implementations of this protocol}, rather than having to produce a custom security proof for each implementation. 
In fact, numerous complex, real-world protocols have been analyzed using Tamarin over the years. Using our method, 
this substantial body of prior work can be exploited to verify implementations.

On the code verification side, \new{the I/O specifications we produce} can be encoded in many existing verifiers that support separation logic. \new{Our tool currently generates I/O specifications for the Go code verifier Gobra~\cite{Gobra} and for the Java code verifier VeriFast~\cite{DBLP:conf/nfm/JacobsSPVPP11}, which we respectively use for our case study and for our running example. It could easily be extended to other verifiers supporting I/O specifications such as Nagini~\cite{DBLP:conf/cav/Eilers018} for Python code.}
In addition, the requirements for adding other code verifiers based on separation logic to our arsenal are low: they need \new{to} only support abstract predicates to encode I/O specifications and to guarantee that successful verification implies trace inclusion between the I/O traces of the program and those of its I/O specification.

\looseness=-1
\new{We establish our central soundness result relating Tamarin models via I/O specifications to implementations.}
This result follows
a methodology inspired by the Igloo framework~\cite{Igloo}, which provides a series of generic steps that 
gradually transform an abstract model into an I/O specification, and requires establishing a refinement relation 
between each successive pair of steps.
We take similar steps and prove these refinements \emph{once and for all} starting from a generic Tamarin system, so that
our method can be applied to obtain an I/O specification from any Tamarin protocol model (under some mild syntactic assumptions) without any additional proof.

\subsubsection*{Our contributions}
We summarize our contributions as follows.
First, we design a framework for the end-to-end verification of security protocol implementations. This consists of a procedure \new{and an associated tool} to extract \new{I/O specifications} from a Tamarin model, which can be verified on implementation code. 
\new{Our soundness result ensures that the implementation inherits all properties proven in Tamarin.}

\new{Second, we propose a novel approach to relate the I/O specifications' symbolic terms to the code's bytestring messages. We parameterize the code verifiers' semantics with an abstraction function, instantiate it to a message abstraction function, and identify assumptions and proof obligations to verify that the code correctly implements terms as bytestrings.}

\looseness=-1
Finally,  to validate our approach, we perform a substantial case study on a \new{complex, real-world} protocol: the WireGuard key exchange, which is part of the widely-used WireGuard VPN \new{in the Linux kernel. 
We verify a part of} the official Go implementation of WireGuard which is interoperable with the full version. 
\new{Using our method, we thereby obtain an end-to-end symbolically verified WireGuard implementation. \newcr{All our models, code, and proofs are available online~\ifproofs{\cite{artifact}}{\cite{full-version,artifact}}}.}

% !TEX root = main.tex

%%%%%%%%%%%%%%%%%%%%%%%%%%%%%%%%%%%%%%%%%%%%%%%%%%%%%%%%%%%%
\section{Background}%% [1 page]
\label{sec:background}
%%%%%%%%%%%%%%%%%%%%%%%%%%%%%%%%%%%%%%%%%%%%%%%%%%%%%%%%%%%%

We present background on tools and methodology.

\subsection{Tamarin and multiset rewriting}
\label{ssec:tamarin-msr}
%%%%%%%%%%%%%%%%%%%%%%%%%%%%%%%%%%%%%%%%%%%%%%%%%%%%%%%%%%%%

The Tamarin prover~\cite{SchmidtMCB12,tamarin13} is an automatic,
state-of-the-art, security protocol verification tool that works in the
symbolic model.
Tamarin has been used to find 
weaknesses in and verify improvements to substantial
real-world protocols
like the 5G AKA
protocol~\cite{DBLP:conf/ccs/BasinDHRSS18,CD20195gaka}, 
TLS~1.3~\cite{CremersHHSM17},
the Noise framework~\cite{girol-noise-analysis}, and
EMV payment card protocols~\cite{DBLP:conf/sp/BasinST21,EMV-brand-mixup}.

Protocols are represented as \emph{multiset rewriting (MSR) systems}, where each
rewrite rule represents a step or action taken by a protocol
participant or the attacker.  We present the following building blocks: messages, facts, and rules.

\looseness=-1
Message \emph{terms} are elements of a \emph{term algebra} $\Terms = \TermsFull$. These are built over 
a signature $\Sigma$ of function symbols and \new{a set of names $\names = \freshtype \cup \pubtype$
consisting of} \new{a set} 
of fresh names $\freshtype$ (for secret values, 
generated by parties, unguessable by the attacker), 
a countably infinite set of public names $\pubtype$ (for globally known values), and \new{a set of} variables $\vartype$.
Cryptographic \emph{messages} $\Msgs$ are modeled as ground terms, i.e., terms without variables.
The term algebra is equipped with an \emph{equational theory} $\Eq$, which is a set of equations, and we denote by $=_\Eq$ the equality modulo $\Eq$.
\begin{example}[Diffie-Hellman equational theory]
\label{example:DH1}
The signed Diffie-Hellman (DH) protocol is a well-known key exchange protocol,
where two agents $A$ and $B$ exchange two DH public keys, $g^x$ and $g^y$, to establish the shared key $g^{xy}$ (where $g$ is a group generator).
\new{For the Tamarin model,} we use a term signature containing
symbols
$\hat{}$, $g$, $\mathit{sign}$, $\mathit{verify}$, $\mathit{pk}$ %, $\mathit{true}$
modeling respectively
exponentiation, the  group generator, signature, verification, and public keys.
We use
the simplified theory:
\[\begin{array}{r@{\qquad}l}
(g^x)^y = (g^y)^x  & \mathit{verify}(\mathit{sign}(x,k),\mathit{pk}(k)) = \mathit{true}
\end{array}
\]
Tamarin's actual model includes further equations.
The protocol's informal description is as follows: 
\[\resizebox{\linewidth}{!}{%
$\begin{array}{lll}
A \rightarrow B: & g^x                              & \text{\small $x$ fresh}\\
B \rightarrow A: & sign(\langle \new{0, B, A}, g^x,g^y\rangle,k_B) & \text{\small $y$ fresh}\\
A \rightarrow B: & sign(\langle \new{1, A, B}, g^y,g^x\rangle,k_A) & \text{\small agree on $(g^x)^y =_\Eq (g^y)^x$}\\
\end{array}
$}
\]
\new{The tags 0 and 1 are used to distinguish the last two messages.}
\end{example}

All parties, including the attacker, can use the equational theory.
The attacker also has its own set of rewriting rules, expressing
that it can
intercept, modify, block, and recombine all network messages,
following the classic Dolev-Yao (DY) model~\cite{DolevY83}.
These rules are generated automatically from the equational
theory, but users may formalize additional rules giving the attacker
further scenario-specific capabilities.

Facts are simply atomic predicates 
applied to message terms, constructed over a signature $\sigfacts=\siglin\uplus\sigper$ of \emph{fact symbols}, partitioned into \emph{linear} ($\siglin$), i.e., single-use, and \emph{persistent} ($\sigper$) facts, which encode the state of agents and the
network. 
We write $\fac=\{F(t_1,\dots,t_k)\;|\; F\in \sigfacts \text{ with arity } k, \text{ and } t_1,\dots,t_k\in\Terms\}$
for the set of facts instantiated with terms, partitioned into $\faclin\uplus\facper$ as expected.
In addition, $\cupM$, $\capM$,  $\setminusM$, $\subM$, and $\inM$ denote the usual
operations and relations on multisets, and for a multiset~$m$, $\setM(m)$ denotes the set of its elements.

A \emph{multiset rewriting (MSR) rule}, written
$\ll\rew{\aa}\rr$,
contains multisets of \emph{facts} $\ll$ and $\rr$ on
the left and right-hand side, and is labeled with $\aa$, a
multiset of actions (also facts, but disjoint from state facts) used for
property specification.
\newcr{A MSR system~$\msrsys$ is a finite set of rewrite rules.}
A MSR system~$\msrsys$ and an equational theory~$\Eq$ have a semantics as a labeled transition system (LTS), whose states are multisets of ground facts from $\fac$, the initial state is empty ($\emptyM$),
and the transition relation $\redms{\msrsys,\Eq}{\cdot}$ is defined by
\begin{equation}
\label{eq:msr-transitions}
 \inferrule
  {\ll \rew{\aa} \rr \in \msrsys \\
   \ll' \rew{\aa'} \rr' =_\Eq (\ll \rew{\aa} \rr)\theta \\
   \ll' \capM \faclin \subM S \\
   \setM(\ll') \cap \facper \subseteq \setM(S)}
  {S \redms{\msrsys,\Eq}{\aa'} S\setminusM (\ll'\capM \faclin) \cupM \rr'},
\end{equation}
where $\theta$ is a ground instance of the variables in $\ll$, $\aa$, and~$\rr$.
Intuitively, the relation describes an update of state $S$ to a successor state, that is possible when a given rule in $\msrsys$ is applicable, i.e., an instantiation with some $\theta$ \new{(mod $\Eq$)} of its left-hand side appears in $S$. Applying the rule consumes the linear \new{but not the persistent} facts appearing in its left-hand side, and adds the instantiations under $\theta$ of all the facts of its right-hand side to the resulting successor state.

\looseness=-1
The MSR rules used in Tamarin feature the reserved fact symbols $\K\in\sigper$, and $\Frf,\Inf,\Outf\in\siglin$,
modeling respectively the attacker's knowledge, freshness generation, inputs, and outputs.
The attacker is modeled by a set of \emph{message deduction rules} $\MD_\Sigma$,
giving it the DY capabilities mentioned above.
A distinguished \emph{freshness rule},
labeled $\Frf(n)$, generates fresh values $n$, which either protocol agents or the attacker directly learn, but never both. 

Finally, a protocol's observable behaviors are its \emph{traces}, which are sequences of multisets of actions labeling a sequence of transitions. We define the sets of full traces and of filtered traces with empty labels removed.
\begin{equation*}
\label{eq:msr-traces}
\begin{array}{l}
\traces(\msrsys) = \{ \tuple{a_i}_{1 \leq i \leq m} \mid  \\
\quad \exists s_1,\dots,s_m.\; \emptyM \redms{\msrsys,\Eq}{a_1} s_1 \redms{\msrsys,\Eq}{a_2}\dots\redms{\msrsys,\Eq}{a_m} s_m\}  \\[.7ex]
\tamtracesp(\msrsys) = \{ \tuple{a_i}_{1 \leq i \leq m, a_i \neq \emptyM} \mid  \tuple{a_i}_{1 \leq i \leq m} \in \traces(\msrsys) \}.
\end{array}
\end{equation*}
To ensure that fresh values are unique, we exclude traces with colliding fresh values by defining
\begin{equation*}
\begin{array}{l}
\label{eq:msr-traces-coll-free}
\tamtraces(\msrsys) = \{
\tuple{a_i}_{1 \leq i \leq m}
\in \tamtracesp(\msrsys) \mid \\
\qquad \forall i,j,n.\; \Frf(n)\in a_i\capM a_j \Rightarrow i=j
\}.
\end{array}
\end{equation*}

We will abbreviate inclusions between each kind of trace sets using the relation symbols $\tracePre$, $\tamtracepPre$, and $\tamtracePre$. For example, $\RR_1 \tamtracePre \RR_2$ denotes $\tamtraces(\RR_1) \subseteq \tamtraces(\RR_2)$, and similarly for the other two. Note that  $\tracePre \,\subseteq\, \tamtracepPre \,\subseteq\, \tamtracePre$.

We focus here on Tamarin's \emph{trace properties} (i.e., sets of traces) such as secrecy and authentication~\cite{Lowe97a}. An MSR $\RR$ satisfies a trace property $\Phi$, if $\tamtraces(\msrsys)\subseteq \Phi$. 

\begin{example}[Diffie-Hellman]
\label{example:DH2}
Continuing Example~\ref{example:DH1}, we use the linear fact symbols $\Setupf_{\Alice}(\vect{init})$,
$Step^1_{\Alice}(\vect{init}, x)$, $Step^2_{\Alice}(\vect{init}, x, g^y)$ to initialize and record 
the progress of agent $A$ playing the role of Alice in the protocol.
The parameters of the facts store her knowledge.
Alice's state is initialized with $\vect{init}=\rid, A, k_A, B, pk_B$, i.e.,
a thread identifier, her identity, her private key, and her partner's
identity and public key.
This state is then extended with her share of the secret $x$, and the DH public key 
$g^y$ she received.
For Alice, the two steps of the protocol can then be modeled by the rules:
\[\resizebox{\linewidth}{!}{%
$\begin{array}{l}
[\Setupf_{\Alice}(\vect{init}), \Frf(x)] \rew{\emptyM} [Step^1_{\Alice}(\vect{init}, x), \Outf(g^x)]
\\[1em]
[Step^1_{\Alice}(\vect{init}, x), \Inf(sign(\langle 0, B, A, g^x,Y\rangle,k_B))] \rew{[\id{Secret}(Y^x)]}\hspace{2em}~\\
\hfill [Step^2_{\Alice}(\vect{init}, x, Y), \Outf(sign(\langle 1, A, B, Y,g^x\rangle,k_A))]
\end{array}$}
\]
The received signature is checked \newcr{using pattern-matching, knowing that $pk_B = pk(k_B)$.}
The action fact $\id{Secret}(Y^x)$ in the second rule is used to specify key secrecy.
It records Alice's belief that the key she computes from the value $Y$ (supposedly $g^y$)
received from Bob remains secret.
The fact $\Setupf_{\Alice}(\vect{init})$ \new{in the first rule} is produced by 
another rule, modeling the environment \new{initializing Alice's knowledge:}
\[\resizebox{\linewidth}{!}{$%
\begin{array}{l}
   [\Frf(\rid), \skf(A,k_A), \pkf(B, pk_B)] \rew{\emptyM} %\hspace{6em} \\ 
   %\hfill 
   [\Setupf_{\Alice}(\rid, A, k_A, B, pk_B)].
\end{array}$}
\]
\end{example}

\subsection{Separation logic and I/O specifications}
\label{ssec:separation-logic-io-specs}
%%%%%%%%%%%%%%%%%%%%%%%%%%%%%%%%%%%%%%%%%%%%%%%%%%%%%%%%%%%%

% separation logic

\looseness=-1
Separation logic enables sound and modular reasoning about heap manipulating programs by associating every allocated heap location with a \emph{permission}. Permissions are a static concept used to verify programs, but do not affect their runtime behavior. Each permission is held by at most one function execution at each point in the program execution. 
\new{To access a heap location, a function must hold the associated permission;} otherwise, a verification error occurs.
The \emph{separating conjunction} $\star$ sums up the permissions in its conjuncts.
Permissions to an unbounded set of locations, for instance, all locations of a linked list, can be expressed via \new{co-recursive predicates}.
Moreover, \emph{abstract} predicates are useful to specify permissions to an unknown set of locations.

% I/O specification
\looseness=-1
Permission-based reasoning generalizes from heap locations to other kinds of program resources.
Penninckx et al.~\cite{Penninckx15} reason about a program's I/O behavior by associating each I/O operation with a permission that is required to call the operation and is then consumed.
They equip the main function's precondition with an \emph{I/O specification} that grants all permissions necessary to perform the desired I/O operations of the entire program execution.
These I/O specifications can easily be encoded in standard separation logic, such that existing program verifiers supporting different programming languages can be used to verify I/O behavior.  

Every I/O operation $\op{io}$, such as sending or receiving a value, is associated with an abstract predicate $\perm{io}$, called an \emph{I/O permission}.
Intuitively, $\perm{io}(p_1, \bar{v}, \bar{w}, p_2)$ expresses the permission to perform $\op{io}$ with outputs~$\bar{v}$ and inputs~$\bar{w}$.
We use $\bar{x}$ to denote a vector of zero or more values. 
The parameters $p_1$ and $p_2$ are called \emph{source and target places}, respectively.
An abstract predicate $\Token{p}$ is called a \emph{token} at place~$p$.
The I/O operation $\op{io}$ moves a token from the source place $p_1$ to the target place $p_2$ 
by consuming $\Token{p_1}$ and producing $\Token{p_2}$.
\new{Hence, the position of the token indicates the currently allowed I/O operations.}

%% example of an I/O operation
\begin{figure}[t]
\begin{gobra}[numbers=none]
requires token(?p$_1$) && out(p$_1$,v,?p$_2$)
ensures  ok $\Longrightarrow$ token(p$_2$)
ensures !ok $\Longrightarrow$ token(p$_1$) && out(p$_1$,v,p$_2$)
func send(v int) (ok bool)
\end{gobra}
\vspace{-0.5em}
\caption{
  Specification of the $\op{send}$~operation with I/O permissions.
  Variables starting with~\code{?} are implicitly existentially quantified.
  The code verifier uses~\code{&&} to denote the separating conjunction~$\star$.
}
\label{fig:send-io-spec}
\vspace{-.5em}
\end{figure}

\begin{example}[Send I/O operation]
Figure~\ref{fig:send-io-spec} shows the signature and specification of a \code{send} function.
The precondition requires an I/O permission~$\perm{out}$ to send the value \code{v} at some source place~$p_1$ with the corresponding token.
When the send operation succeeds, the I/O permission is consumed and the token is moved to some target place~$p_2$.
In case of failure, the token remains at the source place and the I/O permission is not consumed.
\end{example}

\looseness=-1
\new{The separating conjunction of I/O permissions with the same source place encodes non-deterministic choice between such permissions.}
Moreover, co-recursion enables repeated as well as non-terminating sequences of I/O operations.

\begin{example}[I/O specification for a server]
\[\begin{array}{r@{\;}l}
P(p,S) 		& = Q(p,S) \star R(p,S)\\
Q(p_1,S)	& = \exists v, p_2, p_3.\, \perm{in}(p_1, v, p_2) \star \perm{out}(p_2, v, p_3)\\
			& \qquad \star P(p_3, S \cup \{v\})\\
R(p_1,S)	& = \exists p_2.\, \perm{out}(p_1, \text{``Ping''}, p_2) \star P(p_2,S)
\end{array}\]
The formula $\phi = \Token{p} \star P(p,\emptyset)$ specifies a non-terminating server that repeatedly and non-deterministically chooses between \new{receiving and forwarding a value~$v$ or sending a $\text{``Ping''}$ message.
All received values~$v$ are recorded in the state~$S$, which is initially empty.}
Input parameters, like $v$ in $\perm{in}$ here, are existentially quantified to avoid imposing restrictions on the \new{values received from the environment.}
\end{example}

\looseness=-1
To enforce certain state updates between I/O operations, it is useful to associate permissions also with certain internal (that is, non-I/O) operations and include those \emph{internal  permissions} in an I/O specification. For instance, the above server could include an internal operation to reset the state $S$ when it exceeds a certain size.

\looseness=-1
I/O specifications induce a transition system and hence have a trace semantics. The traces can intuitively be seen as the sequences of I/O permissions consumed by possible executions of the programs that satisfy it.
We write $\traces(\phi)$ for the set of traces of an I/O specification~$\phi$. In the example above, the sequence $\perm{in}(5) \cdot \perm{out}(5) \cdot \perm{out}(\text{``Ping''}) \cdot \perm{in}(7) \cdot \perm{out}(7)$ is one example of a trace of $\phi$.
Note that the I/O permissions' place arguments do not appear in the trace.

% !TEX root = main.tex

%%%%%%%%%%%%%%%%%%%%%%%%%%%%%%%%%%%%%%%%%%%%%%%%%%%%%%%%%%%%
%\clearpage
\section{From Tamarin models to I/O specifications}
\label{sec:theory}
%%%%%%%%%%%%%%%%%%%%%%%%%%%%%%%%%%%%%%%%%%%%%%%%%%%%%%%%%%%%

\looseness=-1
In this section, we present our transformation of a Tamarin protocol model, given as an MSR system~$\RR$, into a set of I/O specifications $\PP_i$, one for each protocol role~$i$. The~$\psi_i$ serve as program specifications, against which the roles' implementations $c_i$ are verified (Section~\ref{sec:implementation}). Our main result is an overall soundness guarantee stating that the traces of the complete system $C(c_1, \ldots, c_n, \EE)$, composed of the roles' verified implementations $c_i$ 
and the environment $\EE$, are contained in the traces of the protocol model~$\RR$  (Section~\ref{sec:overall-soundness}): 
\[
  C(c_1, \ldots, c_n, \EE) \tamtracePre \RR.
\]
Hence, any trace property $\Phi$ proven for the protocol model, i.e., $\tamtraces(\RR) \subseteq \Phi$, is inherited by the  implementation.

The sound transformation of an MSR protocol model $\RR$ into a set of I/O specifications is challenging:%
\begin{enumerate}
\item Tamarin's MSR formalism is very general and offers great flexibility in modeling protocols and their properties. We want to preserve this generality as much as possible.

\item For the transformation to I/O specifications, we require a separate description of each protocol role and of the environment, with a clear interface between the two parts. This interface will be mapped to I/O permissions in the I/O specification and eventually to (e.g., I/O or cryptographic) library calls in the implementation.

\item The protocol models operate on abstract terms, whereas the implementation manipulates bytestrings. We need to bridge this gap in a sound manner.
\end{enumerate}

Our solution is based on a general encoding of the MSR semantics into I/O specifications. To separate the different roles' rewrite rules from each other and from the environment, we partition the fact symbols and rewrite rules accordingly. The interface between the roles and the environment is defined by identifying I/O fact symbols, for which we introduce separate I/O rules. 
This isolates the I/O operations from others and allows us to map them to I/O permissions and later to library functions. Moreover, we keep I/O specifications as abstract as possible by using message terms rather than bytestrings.
We handle the transition to bytestrings in the code verification process (Section~\ref{sec:implementation}).

The proofs for the results stated in this section can be found in \ifproofs{Appendix~\ref{app:proofs}.}{the full version of this paper~\cite{full-version}.}

\subsection{Protocol format}
\label{sec:formatting}
%%%%%%%%%%%%%%%%%%%%%%%%%%%%%%%%%%%%%%%%%%%%%%%%%%%%%%%%%%%%

We introduce a few mild formatting assumptions on the Tamarin model.
They mostly correspond to common modeling practice and serve to cleanly separate the 
different protocol roles and the environment. They do not restrict
Tamarin's expressiveness for modeling protocols.
\new{In particular, all protocols in the Tamarin distribution would fit our assumptions after some minor modifications.}

\subsubsection{Rule format}
\label{ssec:rule-format}

To model an $n$-role protocol, we use a fact signature of the form
\[
\sigfacts = \sigact \udis \sigenv \udis \Big(\bigudis_{1\leq i \leq n} \sigstate{i}\Big),
\]
%\vspace{-0.5em}
\noindent where \new{(i)} $\sigact$, $\sigenv$, and $\sigstate{i}$ are mutually disjoint sets of fact symbols, used to construct action facts (used in transition labels), environment facts, and each role $i$'s state facts;
\new{(ii)} $\sigenv$ contains two disjoint subsets, $\sigin$ and $\sigout$, of input and output fact symbols \new{such that} $\Frf, \Inf \in \sigin$, $\Outf \in \sigout$, and $\K \in \sigenv \setminus (\sigin \cup \sigout)$; \new{and (iii)} there is an initialization fact symbol $\Setupf_i \in \sigin$ for each protocol role $i$. 

 We consider MSR systems $\msrsys$ whose rules are given by:
\[
\msrsys = \Renv \udis \Big(\bigudis_{1 \leq i \leq n} \msrsys_i\Big).
\]

\vspace{-0.5em}
\noindent Here, $\Renv$ and the $\msrsys_i$s
are pairwise disjoint rule sets containing
rules for the environment and each protocol role.
Protocol rules use input and output facts to communicate with the environment. For example, the following two environment rules transfer a message to and from the attacker's knowledge:

\vspace{-1em}
\begin{align}
   [\Outf(x)] \rew{\emptyM} [\K(x)]  \qquad\quad   [\K(x)] \rew{[]} [\Inf(x)].      
\end{align}
The
attacker rules $\MD_\Sigma$, the freshness rule, and the rules that generate the $\Setupf_i$ facts (cf.~Example~\ref{example:DH2}), are also in $\Renv$.
The protocol rules for role $i$ (and only those) use role state facts from $\sigstate{i}$ to keep track of the role's progress. We require that their first $k_i \geq 1$ arguments are reserved for role $i$'s parameters from the $\Setupf_i$ fact, and that the first of these parameters is the thread identifier $\rid$. 

For a more detailed specification of the format restrictions on the rewrite rules, see Appendix~\ref{app:protocol-format-details}.

\subsubsection{Protocol messages}
\label{ssec:protocol-messages}

\new{We support both the usual ways of checking received messages using pattern matching and explicit equality checks. The latter are formalized, as usual in Tamarin, as a combination of action facts labeling the given rule (e.g., $\mathsf{Eq}(x, \hash{z})$) and restrictions associating these facts with (a boolean combination of) equalities (e.g., $x =_\Eq y$  whenever $\mathsf{Eq}(x,y)$ occurs in a trace). These restrictions act as assumptions on the traces considered by Tamarin.%
\footnote{\new{To handle these, we use a modified, but equivalent MSR semantics, where the equalities are checked at each step rather than globally on traces. This semantics allows us to translate these checks into the I/O specifications and thus enforce their correct implementation. For simplicity, we present our results under the standard semantics and refer to~\newcr{\ifproofs{Appendix~\ref{app:proofs}}{\cite{full-version}}} for more details.}}
The I/O specifications resulting from our procedure require that these equality checks are implemented (Section~\ref{ssec:io-specs}).}

\new{Furthermore, we recommend, but do not require, the replacement of nested pairs and tuples by \emph{formats}~\cite{DBLP:conf/csfw/ModersheimK14}. These are user-defined function symbols, along with projections for all arguments, that behave like tuples. In the implementation, each format is mapped to a combination of tags (i.e., constant bytestrings), fixed-size fields, and variable-sized fields prepended with a length field. Formats help to soundly relate term and bytestring messages, if we prove that they are unambiguous and non-overlapping (see Section~\ref{sec:implementation}).}

\begin{example}[Diffie-Hellman formatting]
\label{example:DH4}
The rules for Alice's role from Example~\ref{example:DH2} satisfy the format conditions above.
The initiator setup rule produces a fact $\Setupf_{\Alice}(\vect{init})$, whose parameters $\vect{init}$ appear as the first parameters 
of the state facts $Step^1_{\Alice}(\vect{init}, \ldots)$ and $Step^2_{\Alice}(\vect{init}, \ldots)$. Both protocol rules produce an $\Outf$ fact to send a message. The second rule also consumes an $\Inf$ fact to receive a message.
\new{To follow our recommendation to use formats, we can model the message $\langle 0, B, A, g^x, Y\rangle$ being signed as a format with five fields, rather than a tuple.
Note that, at the Tamarin level, tuples containing unique tags to distinguish them behave essentially the same as formats.}
\end{example}

\subsection{Transformation to component models}
%%%%%%%%%%%%%%%%%%%%%%%%%%%%%%%%%%%%%%%%%%%%%%%%%%%%%%%%%%%%

\looseness=-1
We decompose an MSR system $\msrsys$ that satisfies our format requirements
into several component models, one for each role, and 
a separate environment model, which includes the attacker. 
In doing so, we move from a global view of the protocol,  useful 
for security analysis,
to a local view of each role, more appropriate 
for their implementation.
In Section~\ref{ssec:io-specs}, we transform the component models into 
I/O specifications for the programs implementing them.

As a preparatory step, we refine $\msrsys$ into
an \emph{interface model} which starts decoupling the roles from the environment 
by introducing separate rewrite rules for their interactions.

\subsubsection{Interface model}
\label{ssec:interface-model}
%%%%%%%%%%%%%%%%%%%%%

The protocol roles and the environment interact using input and output facts, including the built-in facts 
$\Frf$,  $\Inf$, and $\Outf$. For example, the protocol roles receive messages by consuming 
$\Inf$ facts produced by the attacker. The interface model adds an \emph{I/O rule} for each such fact, 
which turns it into a buffered version.
These I/O rules will later be implemented as calls to library 
functions.

\looseness=-1
\new{Let $\sigin^-$ be the set $\sigin$ without the initialization facts $\Setupf_i$.}
We first add to the fact signature, for each non-setup input or output fact $F$ and role $i$, 
a copy (the ``buffer'') $F_{i}$: 
\[\resizebox{\linewidth}{!}{%
$\begin{array}{c}
\sigbuf{i} = \{F_{i} \mid F \in \new{\sigin^-} \cup \sigout \} \quad\quad
\sigrole{i}  = \sigstate{i} \cup \sigbuf{i}\\
[0.4em]
\sigfacts'   = \sigact \udis \sigenv \udis (\bigudis_{i} \sigrole{i}).
\end{array}$}\]

We then replace the facts used by the protocol rules as follows.
For each role $i$, let $\msrsys'_{i}$ be the set of rules obtained by replacing,
in all rules in $\msrsys_{i}$, each fact $F(t_1, \ldots, t_k)$ such that $F \in \sigin^- \cup \sigout$ 
by $F_{i}(\rid, t_1, \ldots, t_k)$. The latter fact has $\rid$ as an additional parameter.

We also introduce the set $\Rbuf$ of \emph{I/O rules}, which translate between input or output 
facts and their buffered versions. The set $\Rbuf$ contains the following rules, for each role $i$. 
\begin{align*}
&[F(x_1,\dots,x_k)] \rew{\new{\emptyM}}  [F_{i}(\rid, x_1,\dots,x_k)] && \text{for $F \in \sigin^-$} \\
&[G_{i}(\rid, x_1,\dots,x_k)] \rew{\new{\emptyM}}  [G(x_1,\dots,x_k)] && \text{for $G \in \sigout$}.
\end{align*}

\vspace{-0.2em}

For reasons that will become clear later, we also count the role setup rules as I/O rules. Hence, we move them from $\Renv$ 
to $\Rbuf$, calling the remaining environment rules $\Renv^-$.

Finally, the interface model is specified by:
\begin{equation}
\label{eq:interface-model-rules}
\Rintf = \Renv^- \udis \Rbuf \udis (\bigudis_{i} \msrsys'_{i}).
\end{equation}

\vspace{-1em}
\begin{example}
\label{example:DH5}
Continuing Example~\ref{example:DH4}, we introduce the buffer facts $\Inf_{\Alice}$, $\Outf_{\Alice}$, and
$\Frf_{\Alice}$.
Recall that $\rid$ is included in $\vect{init}$.
This yields the modified set $\RR'_{\Alice}$ for the role $\Alice$:
\vspace{-0.5em}
\[\resizebox{\linewidth}{!}{%
$\begin{array}{l}
[\Setupf_{\Alice}(\vect{init}), \Frf_{\Alice}(\rid,x)] \rew{\emptyM}\\\hfill [Step^1_{\Alice}(\vect{init}, x), \Outf_{\Alice}(\rid, g^x)]\\
[0.5em]
[Step^1_{\Alice}(\vect{init}, x), \Inf_{\Alice}(\rid, sign(\langle 0,B,A,g^x,Y\rangle,k_B))] \rew{[\id{Secret}(Y^x)]}\\\hfill [Step^2_{\Alice}(\vect{init}, x, Y), \Outf_{\Alice}(\rid, sign(\langle 1,A,B,Y,g^x\rangle,k_A))].
\end{array}$
}\]
\end{example}
\looseness=-1
We show that the interface model refines the original one.
\begin{lemma}
\label{lem:interface-model}
$\Rintf \tamtracepPre \msrsys$.
\end{lemma}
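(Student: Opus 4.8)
The plan is to prove the trace inclusion $\traces'(\Rintf) \subseteq \traces'(\msrsys)$ by exhibiting a \emph{forward simulation} of $\Rintf$ by $\msrsys$ in which the newly introduced buffering steps appear as invisible (empty-labeled) stuttering. The central device is an abstraction function $\alpha$ on states (multisets of ground facts) that erases the buffering: $\alpha$ rewrites every buffered fact $F_{i}(\rid, t_1, \ldots, t_k)$, with $F_i \in \sigbuf{i}$, to its unbuffered counterpart $F(t_1, \ldots, t_k)$ by dropping the thread-identifier argument $\rid$, and leaves every other fact unchanged---in particular the state facts $\sigstate{i}$, the action facts $\sigact$, and all environment facts, including the unbuffered $\Frf,\Inf,\Outf,\K$ and the setup facts $\Setupf_i$ (which, being in $\sigin \setminus \sigin^-$, are never buffered). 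Since $\alpha$ acts fact-wise, it is a multiset homomorphism, so $\alpha(A \cupM B) = \alpha(A) \cupM \alpha(B)$ always, and $\alpha(A \setminusM B) = \alpha(A) \setminusM \alpha(B)$ whenever $B \subM A$; it also commutes with $=_\Eq$, as it touches only fact symbols and argument positions and never the term structure underneath.

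The key step is a one-step simulation lemma: for every transition $S \redms{\Rintf, \Eq}{a} S'$, either (i) the applied rule is an input/output buffering rule of $\Rbuf$, in which case $a = \emptyM$ and $\alpha(S) = \alpha(S')$; or (ii) the applied rule is a role rule of some $\msrsys'_i$, an environment rule of $\Renv^-$, or a setup rule, in which case $\alpha(S) \redms{\msrsys, \Eq}{a} \alpha(S')$ with the \emph{same} label $a$. Case (i) is exactly why buffering is invisible: such a rule consumes a (linear) fact $F(t)$ and produces $F_i(\rid, t)$, or vice versa, and since $\alpha$ identifies these two facts the net change to the abstract state is zero. For case (ii) the essential observation is that $\msrsys'_i$ arises from $\msrsys_i$ by precisely the syntactic fact-renaming that $\alpha$ inverts, so that $\alpha$ applied to a ground instance of a buffered rule yields the corresponding ground instance of the original rule; the environment and setup rules are literally unchanged and mention no buffered facts, hence map to themselves. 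Applicability and the successor state then transfer along $\alpha$ using that it commutes with $\capM$, $\cupM$, $\setminusM$, and $=_\Eq$, and using $\alpha(\emptyM)=\emptyM$.

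With this lemma in hand, I would lift the correspondence to whole runs by induction on their length, maintaining the invariant that the current $\msrsys$ state equals $\alpha$ of the current $\Rintf$ state: each buffering step is simply dropped (stuttering, empty label), while each step of the second kind is matched by the corresponding $\msrsys$ step carrying the identical label. The resulting $\msrsys$ run has, after removing the empty label multisets, exactly the same label sequence as the original $\Rintf$ run, which gives $\Rintf \tamtracepPre \msrsys$.

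I expect the main obstacle to be case (ii) for the role rules, specifically the bookkeeping around the thread identifier. In a buffered role rule the I/O fact $F_i(\rid, t)$ shares its first argument $\rid$ with the role's state facts, whereas in the original rule the unbuffered $F(t)$ carries no such tag; I must verify that firing the buffered rule---which only imposes the \emph{additional} constraint tying the input to the correct thread---always licenses firing the unbuffered rule in the abstract state, and that the produced facts agree under $\alpha$. A secondary point to pin down is that the I/O facts in $\sigin^- \cup \sigout$ are \emph{linear} (true of the reserved $\Frf,\Inf,\Outf$ and guaranteed for the remaining ones by the format of Section~\ref{ssec:rule-format}), so that buffering genuinely nets to zero under $\alpha$; were such a fact persistent, the buffering step would add a copy under $\alpha$ instead of cancelling. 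Since we only need the forward direction, the converse routing argument (matching an unbuffered input to the right thread) is not required here.
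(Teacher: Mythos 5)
Your proposal is correct and matches the paper's own proof: the appendix establishes the same one-step simulation via a relation $\R$ that is exactly the graph of your erasure map $\alpha$ (drop the role index and the added $\rid$ from buffered facts), with the $\Rbuf$ buffering rules treated as empty-labeled stuttering and all other rules matched one-to-one with identical labels. Your side remarks — that the setup rules, though formally in $\Rbuf$, are matched by a real $\Renv$ step rather than stuttered, and that the cancellation under $\alpha$ relies on the I/O facts being linear — are both consistent with (and implicitly assumed in) the paper's argument.
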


\subsubsection{Decomposition}
\label{ssec:decomposition}
%%%%%%%%%%%%%%%%%%%%%%%%%%%%%%

\looseness=-1
We can now decompose the interface model into the role components and the environment.
In a nutshell, we assign the rules $\msrsys'_{i}$ to the component for role~$i$ 
and the rules $\Renv^-$, including the attacker rules $\MD_\Sigma$, to the environment. 
The protocol communicates with the environment using the I/O rules. We split them 
into two synchronized parts, one belonging to the environment and the other to the protocol. 
Below, we show that the re-composed system implements the interface model.

\looseness=-1
We explain the splitting of the I/O rules into a protocol part and an environment part using 
the example rule
\[
[\Outf_{i}(\rid, x)] \rew{\emptyM} [\Outf(x)]. 
\]
This rule models instance $\rid$ of role $i$ outputting a message to the attacker.
We split this rule into two parts: 
\begin{align}
[\Outf_{i}(\rid, x)] & \rew{[\lambda_{\Outf}(\rid, x)]} \emptyM,  \label{eq:out-proto} \\
\emptyM                         & \rew{[\lambda_{\Outf}(\rid, x)]} [\Outf(x)],  \label{eq:out-env}
\end{align}
where the first rule belongs to role $i$ and the second to the environment.
We label both rules with a new action fact $\lambda_{\Outf}(\rid, x)$, which uniquely 
identifies the original I/O rule and has as parameters all variables occurring in it. 
We call this fact a \emph{synchronization label}, as we later use it
for synchronizing the two parts to recover the original rule's behavior.
We similarly split all rules in $\Rbuf$, yielding two sets $\Rbuf^{i}$ and $\Rbuf^e$ 
belonging to the protocol role $i$ and to the environment.

The components for each protocol role $i$ and for the environment are then 
defined as follows.
\begin{align}
\Rrole{i} & = \RR'_{i} \udis \Rbuf^{i} &
\Renv^e & = \Renv^- \udis \Rbuf^{e}.
\end{align}
Note that these two rule sets operate on pairwise disjoint sets of facts, 
namely over the signatures $\sigrole{i}$ and $\sigenv$, respectively. 
Hence they can interact with each other only by synchronizing the split I/O rules.

\begin{example}[Component for Diffie-Hellman]
\label{example:DH7}
The MSR system $\Rrole{\Alice}$ for Alice's role contains the two protocol rules in $\RR'_{\Alice}$
from Example~\ref{example:DH5}, the output rule~\eqref{eq:out-proto} described above, 
and similar rules for inputs and freshness generation. In addition, it contains the protocol part of the split 
setup rule for Alice's role from~Example~\ref{example:DH2}, i.e.
\vspace{-0.5em}
\[
   \emptyM \rew{[\lambda_{\Alice}(\rid, A, k_A, B, pk_B)]} [\Setupf_{\Alice}(\rid, A, k_A, B, pk_B)].
\]
\end{example}
%\vspace{-0.5em}

\looseness=-1
The traces of the recomposition of all roles with the environment
are included in the traces of the interface model.
We define two kinds of parallel compositions on LTSs 
(induced here by the MSR systems' transition semantics). 
\new{We give the intuition here and refer the reader to Appendix~\ref{app:composition-defs} for the formal definitions.}
The (indexed) parallel composition $|||$ interleaves the transitions of a family of component systems without communication. 
The (binary) parallel composition $\sync{\Lambda}$ synchronizes transitions with labels from the set $\Lambda$, resulting in a transition labeled~$\emptyM$, and interleaves all other transitions. 
We then show:
\begin{lemma}[Decomposition]
\label{lem:decomposition}
Let $\Rrole{i}(\rid)$ be the MSR system $\Rrole{i}$ for a fixed thread id $\rid$. Then 
\[
  (|||_{i,\rid}\Rrole{i}(\rid)) \sync{\Lambda} \Renv^e \tracePre \Rintf,
\]
where $\Lambda = \bigcup_i \{ \aa\theta \mid \exists \ll, \rr.\,\ll \rew{\aa} \rr \in \Rbuf^i \land \id{range}(\theta) \subseteq \Msgs \}$ consists of all ground instances of 
synchronization labels.
\end{lemma}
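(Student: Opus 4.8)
The plan is to exhibit a label-preserving \emph{forward simulation} from the composed labeled transition system $L = (\interl_{i,\rid}\Rrole{i}(\rid)) \syncLam \Renv^e$ to the transition system of $\Rintf$ (using the definitions of $\interl$ and $\syncLam$ from Appendix~\ref{app:composition-defs}), and then to conclude $\tracePre$ by a routine induction on trace length, since a label-preserving forward simulation yields inclusion of full trace sets. A state of $L$ is a family $(S_{i,\rid})_{i,\rid}$ of ground multisets, one per component $\Rrole{i}(\rid)$, together with a ground multiset $S_e$ for $\Renv^e$. Since $\Rrole{i}(\rid)$ only touches facts over $\sigrole{i}$ whose thread-identifier argument is the fixed constant $\rid$, while $\Renv^e$ only touches facts over $\sigenv$, these multisets live over pairwise disjoint fact sets. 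I therefore relate such a composed state $S$ to the single $\Rintf$-state obtained as the multiset union $\sigma(S) = (\bigudis_{i,\rid} S_{i,\rid}) \cupM S_e$, a ground multiset over $\sigfacts'$. The initial states are related, both being $\emptyM$.

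For the step condition I proceed by case analysis on a transition of $L$. In the \emph{interleaved role} case, a single component $\Rrole{i}(\rid)$ fires a rule whose label lies outside $\Lambda$; by construction such a rule lies in $\RR'_i = \msrsys'_i \subseteq \Rintf$ and acts only on $\sigrole{i}$-facts with first argument $\rid$. Because these facts are disjoint from those of every other component and from $S_e$, the same rule instance applies to $\sigma(S)$, giving an $\Rintf$-transition with the same label and a related target. The \emph{interleaved environment} case is identical, using $\Renv^- \subseteq \Rintf$ and $\sigenv$-facts only. (Protocol and environment rule labels are indeed outside $\Lambda$, as the latter consists solely of ground instances of the fresh synchronization facts, and these instances are never empty.)

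The \emph{synchronized} case is the heart of the argument and the step I expect to be the main obstacle. Here some component $\Rrole{i}(\rid)$ and $\Renv^e$ fire the two halves into which a rule $\ll \rew{\emptyM} \rr \in \Rbuf$ was split, say $\ll_{\mathrm{r}} \rew{[\lambda]} \rr_{\mathrm{r}} \in \Rbuf^i$ and $\ll_{\mathrm{e}} \rew{[\lambda]} \rr_{\mathrm{e}} \in \Rbuf^e$, with $\ll = \ll_{\mathrm{r}} \cupM \ll_{\mathrm{e}}$ and $\rr = \rr_{\mathrm{r}} \cupM \rr_{\mathrm{e}}$, producing one $L$-transition labeled $\emptyM$. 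The crux is to show the two halves are forced to share an instantiation. This rests on the defining property of the synchronization label: $\lambda$ carries every variable of the original rule as a direct argument. Thus if the halves use $\theta_{\mathrm{r}}$ and $\theta_{\mathrm{e}}$, the synchronization requirement $\lambda\theta_{\mathrm{r}} =_\Eq \lambda\theta_{\mathrm{e}}$ forces $v\theta_{\mathrm{r}} =_\Eq v\theta_{\mathrm{e}}$ position-wise for each variable $v$, hence $\theta_{\mathrm{r}} =_\Eq \theta_{\mathrm{e}} =: \theta$. Consequently the $L$-step consumes $\ll_{\mathrm{r}}\theta \cupM \ll_{\mathrm{e}}\theta = \ll\theta$ and produces $\rr_{\mathrm{r}}\theta \cupM \rr_{\mathrm{e}}\theta = \rr\theta$, i.e.\ exactly one application of $\ll \rew{\emptyM} \rr \in \Rbuf \subseteq \Rintf$ to $\sigma(S)$, with the matching empty label.

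The remaining care in this case is bookkeeping that I would not belabor: all equalities must be carried out modulo $\Eq$, and one must check that the disjointness of the component signatures guarantees that the facts $\ll_{\mathrm{r}}\theta$ (drawn from $S_{i,\rid}$) and $\ll_{\mathrm{e}}\theta$ (drawn from $S_e$) cannot coincide and are therefore simultaneously present in $\sigma(S)$ with the required multiplicities, with persistent facts handled by the set-inclusion side condition of~\eqref{eq:msr-transitions}. With the simulation relation $\sigma$ and these three cases in hand, the induction on the length of a trace of $L$ shows every such trace is realized by $\Rintf$, which is precisely $(\interl_{i,\rid}\Rrole{i}(\rid)) \syncLam \Renv^e \tracePre \Rintf$.
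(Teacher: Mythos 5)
Your proposal is correct and takes essentially the same route as the paper's own proof: a forward simulation whose abstraction is the multiset union of the component states, a three-way case split (role-only, environment-only, synchronized), and the key observation that the synchronization label carries all of the rule's variables and therefore forces the two halves to agree (mod $\Eq$) on the instantiation, so that the recombined step is exactly an application of the original I/O rule. The only difference is presentational: the paper routes the argument through an intermediate guarded-event-system formulation of $\Rintf$ before decomposing, whereas you argue directly on the MSR-induced LTSs.
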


\subsection{Transformation to I/O specifications}
\label{ssec:io-specs}
%%%%%%%%%%%%%%%%%%%%%%%%%%%%%%%%%%%%%%%%%%%%%%%%%%%%%%%%%%%%

\looseness=-1
Finally, we extract an I/O specification $\PP_{i}$ from each role $i$'s MSR system $\Rrole{i}$, which serves as the specification for the role's implementation at the code level.
$\PP_i$ is parameterized by the thread identifier $\rid$, and it associates a token with the starting place $p$ of the predicate $\IOSpecRole{i}$:
\[
\PP_{i}(\rid) = \exists p.\: \id{token}(p) \star \IOSpecRole{i}(p,\rid,\emptyM). 
\]

The predicate $\IOSpecRole{i}(p,\rid,S)$'s parameters are a place~$p$, a thread identifier~$\rid$, and a state~$S$ of the MSR system $\Rrole{i}$ (i.e., a multiset of ground facts). Note that $\PP_{i}$ invokes $\IOSpecRole{i}$ with the initial state, i.e., the empty multiset~$\emptyM$ (see Section~\ref{ssec:tamarin-msr}). It is defined co-recursively as the separating conjunction over the formulas $\phi_{R}$, one for each rewrite rule $R \in \Rrole{i}$:
\[
\IOSpecRole{i}(p,\rid,S) = \bigstar_{R \in \Rrole{i}} \: \phi_R(p,\rid,S).
\]

$\phi_{R}$ encodes an application of the rewrite rule $R$ to the model state $S$. It contains an I/O or internal permission~$\perm{R}(p, \ldots, p')$, which an implementation must hold in order to execute the program part implementing $R$.
$\phi_R$ co-recursively calls $\IOSpecRole{i}(p',\rid,S')$ with the target place $p'$ of $\perm{R}$ and the updated state $S'$. We define the formulas $\phi_{R}$ separately for protocol rules in $\RR'_i$ and for I/O rules in $\Rbuf^i$. 

Consider a protocol rule $R = \ll \rew{\aa} \rr \in \RR'_{i}$ with variables~$\vect{x}$.
We associate the internal permission $\perm{R}(p,\vect{x},\ll',\aa',\rr',p')$ to $R$, and define $\phi_R(p,\rid,S)$ by
\[\resizebox{\linewidth}{!}{%
$\begin{array}{l}
\phi_{R}(p,\rid,S) = \forall \vect{x}, \ll', \aa', \rr'.\;\\
\ \ M(\ll',s) \,\land\, \ll' =_\Eq \ll \,\land\, \aa'=_\Eq\aa \:\land\: \rr'=_\Eq\rr  \new{\:\land\: \Phi_R(\vect{x})} \\
\ \ \Longrightarrow
\exists p'.\; \perm{R}(p,\vect{x},\ll',\aa',\rr',p') \,\star\, \IOSpecRole{i}(p',\rid,U(\ll',\rr',S))\\
\end{array}$
}
\]
where 
$M(\ll', S)      = (\ll' \capM \faclin \subM S) \land (\setM(\ll') \cap \facper \subseteq \setM(S))$,  
$U(\ll', \rr', S) = S \setminusM ( \ll' \capM \faclin) \cupM \rr'$, 
\new{and
$\Phi_R$ is the conjunction of all (boolean combinations of) equality checks \new{(mod $\Eq$)} that the rule $R$ performs using a combination of action facts in $\aa$ and associated restrictions (cf.~Section~\ref{ssec:protocol-messages}).}

This formula specifies that, for any instantiation (mod~$\Eq$) $\ll', \aa', \rr'$ of the facts in the rule, 
if the matching condition $M(\ll',S)$ \new{and the equational formula $\Phi_R$ are} satisfied, then we have an internal permission $\perm{R}$ to execute the rule's 
implementation. 
This yields an updated state $U(\ll',\rr',S)$, on which $\IOSpecRole{i}$ is co-recursively 
applied to produce the permissions for the rest of the execution.
\new{The formula $\Phi_R$ thus enforces that the implementation performs the explicit equality checks on messages specified in the rule $R$ (see also Section~\ref{ssec:proving-equalities}).}

We define similar formulas for all I/O rules. For output rules of the form $R_G = [G_i(\rid, \vect{x})] \rew{[\lambda_{G}(\rid, \vect{x})]} \emptyM \in \Rbuf^i$, we define the formula
\[\resizebox{\linewidth}{!}{%
$\begin{array}{l}
\phi_{R_G}(p,\rid,S) =\forall \vect{x}.\;
G_i(\rid, \vect{x}) \inM S \\
\ \ \Longrightarrow 
\exists p'.\, \perm{R_G}(p, \rid, \vect{x}, p') \star \IOSpecRole{i}(p', \rid, S \setminusM [G_i(\rid, \vect{x})])), \\
\end{array}$}
\]
which grants the I/O permission $\perm{R_G}(p, \rid, \vect{x}, p')$ for any $\rid$ and terms $\vect{x}$ for which the fact $G_i(\rid, \vect{x})$ exists in $S$. $\IOSpecRole{i}$ is called co-recursively with the target place of the permission and the updated state, where $G_i(\rid, \vect{x})$ is removed.

For input rules $R_F = \emptyM \rew{[\lambda_{F}(\rid, \vect{z})]} [F_i(\rid, \vect{z})] \in \Rbuf^i$, we define the formula
\[
\begin{array}{l}
\phi_{R_F}(p,\rid,S) = \\
\ \ \exists \new{p'}, \vect{z}.\; \perm{R_F}(p,\rid, \vect{z}, p') \,\star\, \IOSpecRole{i}(p', \rid, S \cupM [F_i(\rid, \vect{z})])),
\end{array}
\]
which grants the I/O permission $\perm{R_F}(p, \rid, \vect{z}, p')$ to read inputs~$\vect{z}$ for any $\rid$. Note that the input variables $\vect{z}$ are existentially quantified (cf.~Section~\ref{ssec:separation-logic-io-specs}) and the fact $F_i(\rid, \vect{z})$ is added to the state in the co-recursive call to $\IOSpecRole{i}$.

\begin{example}[I/O specification for Diffie-Hellman]
\label{example:DH9}
Continuing Examples~\ref{example:DH5} and~\ref{example:DH7}, the component system
is translated into an I/O specification that features
the following conjunct corresponding to the rule for the second step of Alice's
role:
\[\resizebox{\linewidth}{!}{$%
\begin{array}{l}
\phi_{\Alice_2}(p,\rid,S) = \forall \vect{init}, x, Y, \ll', \aa', \rr'.\;\\
  \;\; M(\ll',S) \wedge {} \\
  \;\; \ll'=_\Eq\multileft Step^1_{\Alice}(\vect{init}, x), \Inf_{\Alice}(\rid,sign(\langle 0,B,A,g^x,Y\rangle,k_B))\multiright \wedge {} \\
  \;\; \aa'=_\Eq\multileft \id{Secret}(Y^x)\multiright \wedge {}\\
  \;\; \rr'=_\Eq\multileft Step^2_{\Alice}(\vect{init}, x, Y), \Outf_{\Alice}(\rid, sign(\langle 1,A,B,Y,g^x\rangle,k_A))\multiright \wedge {} \\
  \;\; \Longrightarrow
  \exists p'.\, \perm{Alice_2}(p,\vect{init},x,Y,\ll',\aa',\rr',p') \!\star\! \IOSpecRole{i}(p',\rid,U(\ll',\rr',S)).\\
\end{array}
$}\]
Namely, the permission to execute this step is granted, provided
$S$ contains instantiations of the previous state fact
and the correct input fact, and that $S$ is updated by replacing them
with the new state and output facts. \new{Recall that $\vect{init}$ abbreviates $\rid, A, k_A, B, pk_B$.}
\end{example}

\new{The construction of $\PP_i$ from $\Rrole{i}$ can be seen as an instance of
the method presented in~\cite{Igloo} and by the soundness 
result from that paper, we get the following trace inclusion.}

\begin{theorem}[\cite{Igloo}]
\label{theorem:IO-soundness}
For all MSR systems $\Rrole{i}(rid)$, where the fresh name $\rid$
instantiates the
thread id in all facts,
\[
\med(\PP_{i}(\rid)) \tracePre \Rrole{i}(\rid),
\]
where $\med = \medint \circ \medext$ relabels (the LTS induced by) $\PP_{i}(\rid)$. Here, $\medint$ and $\medext$ are the identity functions except on the following labels:
\begin{align*}
\medint(\perm{R}(\vect{x},\ll',\aa',\rr')) & = \aa' 
&& \text{for $R \in \RR'_i$} \\
\medext(\perm{F}(\rid, \vect{x}))          & = \enumM{\lambda_F(\rid, \vect{x})}
&& \text{for $F \in \Rbuf^i$.}
\end{align*}
\end{theorem}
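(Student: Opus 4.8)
The plan is to establish the inclusion by recognizing the construction of $\PP_{i}(\rid)$ as an instance of the generic model-to-I/O-specification translation of~\cite{Igloo} and then invoking its soundness theorem; the concrete content of this reduction, and equivalently a self-contained proof, is a forward simulation between the labeled transition system induced by $\PP_{i}(\rid)$, relabeled by $\med$, and the MSR transition system of $\Rrole{i}(\rid)$. First I would make the LTS semantics of $\PP_{i}(\rid)$ precise: a reachable configuration is determined by the current place together with the state parameter $S$ threaded through $\IOSpecRole{i}(p,\rid,S)$, and each transition consumes exactly one permission $\perm{R}(\ldots)$ offered by the conjunct $\phi_{R}$, moving the token to the target place $p'$ and advancing the state parameter to the value with which $\IOSpecRole{i}$ is co-recursively re-invoked. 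The simulation relation $\rho$ I would use relates such a configuration to the bare MSR state $S$: the state parameter carried by $\IOSpecRole{i}$ is a faithful ghost copy of the current multiset of facts of $\Rrole{i}(\rid)$, and the initial configuration, with the token at some $p$ and state parameter $\emptyM$, is related to the empty initial MSR state.

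The core of the argument is a case analysis on rule shape verifying that $\rho$ is a forward simulation, i.e.\ that every $\med$-relabeled source transition is matched by an $\Rrole{i}(\rid)$ transition ending in a $\rho$-related state. For a protocol rule $R = \ll \rew{\aa} \rr \in \RR'_{i}$, the guard of $\phi_{R}$ demands an instance $\ll',\aa',\rr'$ with $\ll'=_\Eq\ll$, $\aa'=_\Eq\aa$, $\rr'=_\Eq\rr$ satisfying the matching condition $M(\ll',S)$, and the successor state parameter is $U(\ll',\rr',S)=S\setminusM(\ll'\capM\faclin)\cupM\rr'$; this is verbatim the premise and conclusion of the MSR transition rule~\eqref{eq:msr-transitions} for the witnessing ground instance $\theta$ with $(\ll\rew{\aa}\rr)\theta=_\Eq(\ll'\rew{\aa'}\rr')$, and $\medint$ relabels the consumed permission to precisely the action label $\aa'$ carried on that step. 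For the I/O rules in $\Rbuf^{i}$, the output formula $\phi_{R_G}$ is enabled exactly when $G_i(\rid,\vect{x})\inM S$ and deletes that fact, mirroring the application of $[G_i(\rid,\vect{x})]\rew{[\lambda_G(\rid,\vect{x})]}\emptyM$, whereas the input formula $\phi_{R_F}$ is enabled for any existentially chosen $\vect{z}$ and adds $F_i(\rid,\vect{z})$, mirroring $\emptyM\rew{[\lambda_F(\rid,\vect{z})]}[F_i(\rid,\vect{z})]$; in both cases $\medext$ sends the permission to the matching synchronization label. Since $\rho$ relates the initial states and is preserved by every transition class, it is a forward simulation, which yields the trace inclusion $\tracePre$.

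The step I expect to be the main obstacle is justifying that this is a genuine instance of the~\cite{Igloo} template despite two features absent from the vanilla construction: the quantification of $\ll',\aa',\rr'$ modulo $\Eq$, and the additional equational guard $\Phi_R$ inside $\phi_R$. The modulo-$\Eq$ quantification is handled by noting that the existential choice of $\ll',\aa',\rr'$ plays exactly the role of the instantiation $\theta$ (mod $\Eq$) in~\eqref{eq:msr-transitions}, so no generalization of the template is needed there. The guard $\Phi_R$, however, is an addition: since $\traces(\Rrole{i}(\rid))$ is defined under the \emph{standard} MSR semantics, in which $\Phi_R$ is not enforced per step, I would argue that $\Phi_R$ can only \emph{strengthen} the guard of a source transition and hence only removes source behaviors. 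Consequently every transition enabled in $\med(\PP_{i}(\rid))$ remains a transition of $\Rrole{i}(\rid)$, so $\rho$ is still a forward simulation and we obtain the inclusion $\tracePre$ rather than an equality; under the equivalent per-step semantics of the footnote the guard would be mirrored on the MSR side, tightening the correspondence to an exact match. Once this identification with the template is in place, $\med$ coincides with the permission-to-event abstraction of~\cite{Igloo} and the soundness theorem of that paper delivers the claim directly.
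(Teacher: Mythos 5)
Your proposal matches the paper's treatment: the paper does not prove this theorem itself but justifies it exactly as you do, by observing that the construction of $\PP_i$ from $\Rrole{i}$ is an instance of the Igloo translation and invoking that paper's soundness theorem. Your additional forward-simulation sketch and your handling of the two non-vanilla features (quantification modulo $\Eq$, which plays the role of the ground instantiation $\theta$, and the guard $\Phi_R$, which only strengthens the source system and so preserves the inclusion direction) correctly fill in the details the paper delegates to the citation, and are consistent with the appendix, which works with the equivalent per-step equality semantics.
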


% !TEX root = main.tex

%%%%%%%%%%%%%%%%%%%%%%%%%%%%%%%%%%%%%%%%%%%%%%%%%%%%%%%%%%%%
\section{Verified protocol implementations}
\label{sec:implementation}
%%%%%%%%%%%%%%%%%%%%%%%%%%%%%%%%%%%%%%%%%%%%%%%%%%%%%%%%%%%%

The implementation step consists of providing the code $\prog_i(rid)$ implementing each role $i$ and proving that it satisfies its I/O specification $\PP_{i}(\rid)$. The challenge here is bridging the abstraction gap between the message terms in the I/O specifications $\PP_{i}(\rid)$ and the bytestring messages manipulated by the code. 
In Section~\ref{ssec:code-verif-with-abstraction}, we present an extension of the code verifiers' \new{semantics of Hoare triples} to accommodate such abstractions. In~Section~\ref{ssec:terms-to-bytes}, we explain how we concretely relate bytestrings to terms. Finally,  in Section~\ref{ssec:library-specs}, we show how we verify the roles' I/O specifications based on appropriate I/O and cryptographic library specifications.

\subsection{Code verification with abstraction}
\label{ssec:code-verif-with-abstraction}
%%%%%%%%%%%%%%%%%%%%%%%%%%%%%%%%%%%%%%%%%%%%%%%%%%%%%%%%%%%%

In Penninckx et al.'s program logic~\cite{Penninckx15}, the statement that a program $c$ satisfies an I/O specification $\phi$ is expressed as the Hoare triple
\begin{align}
  \label{eq:prog-spec-basic}
  \{ \phi \}\ \prog \ \{ \kw{true} \},
\end{align}
with the I/O specification $\phi$ in the precondition and the post-condition $\kw{true}$. We assume that the program $\prog$ has an LTS semantics $\CC$ given by the programming language's operational semantics, where the labels represent the program's I/O (and internal) operations and the program's traces consist of sequences of such labels. We leave the exact semantics unspecified here, to keep our formulation generic with respect to the programming language used. The semantics of the Hoare triple~\eqref{eq:prog-spec-basic} implies that the program $\prog$'s traces are included in the traces of $\phi$, i.e. $\CC \tracePre \phi$. 

\new{To bridge the gap between message terms and bytestring messages}, we extend Penninckx et al.'s approach by introducing an \emph{abstraction} or relabeling function $\alpha$ between the implementation's transition labels and the I/O specification's transition labels. For example, $\alpha$ may map a concrete label $\perm{in}_c(l)$ to an abstract version $\perm{in}_a(s)$, where $l$ is a list implementation and $s$ is the mathematical set of $l$'s elements. We also extend the soundness assumption on the code verifier accordingly.
\begin{newtext}
\begin{assumption}[Verifier assumption]
\label{assm:verifier-assumption}
\begin{align*}
  \vdash_\alpha \{ \phi \}\ \prog \ \{ \mathsf{true} \} \implies \alpha(\CC) \tracePre \phi.
\end{align*}
\end{assumption}
\end{newtext}
This means that a successful verification implies that the program traces, abstracted under $\alpha$, are included in the I/O specification $\phi$'s traces. \new{In Appendix~\ref{app:verifier-assumption}, we sketch a semantics for Hoare triples
that entails this trace inclusion.}

To ensure this extension's soundness, we 
require that the \new{I/O 
operations' contracts are \emph{consistent with $\alpha$}, i.e.,} they imply a correct mapping of transition labels under $\alpha$. 
More precisely, suppose the specification of such an operation \code{op} induces a concrete transition label $\perm{op}_c(\vect{a})$, where $\vect{a}$ are \code{op}'s inputs and outputs, and the I/O permission in the precondition induces the abstract transition label $\perm{op}_a(\vect{b})$. Then we define $\alpha$ as lifting from an (overloaded) function $\alpha$ that maps concrete parameter types to abstract ones, i.e., $\alpha(\perm{op}_c(\vect{a})) = \perm{op}_a(\alpha(\vect{a}))$. We therefore require that $\vect{b} = \alpha(\vect{a})$ follows from \code{op}'s precondition (for arguments) and postcondition (for return values). 
Moreover, we allow $\alpha$ to be a partial function, in which case the specification must also imply that the concrete arguments are in its domain.

\subsubsection*{Application to role verification}
We now apply this idea to the verification of the protocol's role implementations (using Gobra in our case study). That is, we wish to establish
\begin{align}
  \label{eq:prog-spec}
  \vdash_{\alpha} \{\PP_{i}(\rid) \}\ \prog_i(rid) \ \{ \mathsf{true} \}
\end{align}
for a suitable $\alpha$. 
An obvious possibility would be to define an abstraction function $\alpha\!: \bytestrings \fun \Msgs$ from bytestrings to messages and then lift it to trace labels. For example, a concrete $\perm{in}_c(\rid, b)$ would be abstracted to $\alpha(\perm{in}_c(\rid, b)) = \perm{in}_a(\rid, \alpha(b))$. However, this mapping assumes that each bytestring corresponds to exactly one term, and consequently that  \emph{every} bytestring can be uniquely parsed as a term. To minimize our assumptions, however, we do not a priori want to exclude collisions between bytestrings, i.e., we allow a bytestring to have several term interpretations. 

In Section~\ref{ssec:terms-to-bytes}, we therefore relate bytestrings and terms using a concretization function $\gamma\!:  \Msgs \fun \bytestrings$. Since a bytestring may be related to several terms, we cannot define a function $\alpha$ mapping concrete labels to abstract I/O labels. Our solution is based on adding ghost term parameters to the I/O operations in the implementation code.  For example, the operation receiving a bytestring $b$ gets an additional ghost return value term \new{$m$} with $b=\gamma(\new{m})$ and the corresponding transition  label is $\perm{in}_c(\rid, (b,\new{m}))$.  
These ghost terms \new{aid} verification (see Section~\ref{ssec:library-specs}), but are not present in the executable code.
We instantiate $\alpha$ to the function~$\medextp$ that removes the bytestrings from the concrete I/O operation's labels and keeps only the ghost terms used for the reasoning. For instance, $\medextp(\perm{in}_c(\rid, (b,\new{m}))) = \perm{in}_a(\rid, \new{m})$. This function is defined only for $b = \gamma(\new{m})$, which is guaranteed by the receive operation's contract (cf. Figure~\ref{fig:netlib-contracts}).

Our proposed method enables us to verify that pre-existing real-world code satisfies I/O specifications produced from abstract Tamarin models (see Section~\ref{sec:wireguard}).

\subsection{Relating terms and bytestrings}
\label{ssec:terms-to-bytes}
%%%%%%%%%%%%%%%%%%%%%%%%%%%%%%%%%%%%%%%%%%%%%%%%%%%%%%%%%%%%

In Tamarin's MSR semantics, messages in $\Msgs$ are ground terms.
We model the concrete messages and the operations on them as \emph{\new{bytestring algebras}} defined as $\Sigma$-algebras $\Calg$ with the set of bytestrings $\bytestrings$
as the carrier set. 
To relate terms to bytestrings, we use a surjective $\Sigma$-algebra homomorphism 
$
  \gamma \! : \Msgs \fun \Calg, 
$
which maps \new{(fresh and public) names} to bytestrings and the signature's symbols to functions on bytestrings:
\[\begin{array}{rcll}
   \gamma(n)\!\!&=&\!\!\Calgify{n} 
   & \text{\new{for $n \in \names$}} \\
   \gamma(f(t_1, \ldots, t_k))\!\!&=&\!\!\Calgify{f}(\gamma(t_1), \ldots, \gamma(t_k))
   & \text{for $f \in \Sigma^k$}
\end{array}\]
With the requirement that $\gamma$ is surjective, we avoid junk bytestrings that do not represent any term (i.e., the algebra $\Calg$ is term-generated). \new{This is without loss of generality as there are countably infinitely many public names that can be mapped to potential junk bytestrings.}

Note that $\Sigma$-algebra homomorphisms are required to preserve equalities. For example, a symbolic equality $\dec{k}{\enc{k}{m}} =_\Eq m$ on terms implies the equality
\begin{align*}
\decB{\id{key}}{\encB{\id{key}}{\id{msg}}} = \id{msg}
\end{align*}
on bytestrings. 
In what follows, we will use the bytestring algebra's functions in our cryptographic library's specification. This enables us to reason about message parsing and construction.

\subsection{Verifying the I/O specification}
\label{ssec:library-specs}
%%%%%%%%%%%%%%%%%%%%%%%%%%%%%%%%%%%%%%%%%%%%%%%%%%%%%%%%%%%%

\begin{figure}[t]
  \makeatletter
  \lst@AddToHook{OnEmptyLine}{\vspace{-0.4\baselineskip}}
  \makeatother
\begin{gobra}[numbers=none]
ensures seq(ciph) $=$ enc$^B$(seq(key), seq(msg))
func encrypt(key, msg []byte) (ciph []byte)

ensures ok $\Longrightarrow$ seq(c) $=$ enc$^B$(seq(k), seq(m))
func decrypt(k, c []byte) (m []byte, ok bool)

requires token(?p$_1$) && $\texttt{in}$(p$_1$,?m,?p$_2$)
ensures  ok $\Longrightarrow$ token(p$_2$) && seq(b) $=$ $\gamma$(m)
ensures !ok $\Longrightarrow$ token(p$_1$) && $\texttt{in}$(p$_1$,m,p$_2$)
func receive() (b []byte,ghost m term,ok bool)
\end{gobra}
%\vspace{-0.7em}
\caption{Simplified specifications for encryption, decryption, and receive. The function \code|seq| abstracts an in-memory byte array to $\Calg$. We omit Gobra's memory annotations needed to reason about heap data structures and conditions on the size of bytestrings.}
\label{fig:cryptolib-contracts}
\label{fig:netlib-contracts}
\vspace{-0.5em}
\end{figure}

The verification of the I/O specification generally follows the same approach as in previous work~\cite{Penninckx15,Igloo}. 
Every I/O operation performed by the code requires that a corresponding I/O permission is held. The required I/O permissions must be obtained from the I/O specification. 
However, our introduction of abstraction makes reasoning about what is sent and, in particular, received
more challenging.

\subsubsection{Sending and receiving messages}
For a sent pair of a bytestring and a ghost message (in $\Msgs$), we must verify that the I/O specification permits sending the message.
Similarly, for a received pair of a bytestring and a ghost message, we must verify that the received message matches a term in the I/O specification, describing the expected protocol message.
We refer to such terms as \emph{patterns}. 
In Example~\ref{example:DH9}, there is a single pattern, namely $sign(\langle \new{0,B,A},g^x,Y\rangle,k_B)$, 
where \new{the unconstrained~$Y$ is a variable and all other entities are constrained by the fact $Step^1_{\Alice}(\vect{init}, x)$.} 

Verifying that the I/O specification permits sending a message boils down to verifying that \new{the} bytestring $\gamma(m)$ for a permitted message $m$ was constructed and then sent.
This becomes straightforward by equipping the cryptographic library with suitable specifications.
Consider the simplified specification of an encryption function shown in Figure~\ref{fig:cryptolib-contracts}.
The function \code{seq} abstracts an in-memory byte array into a mathematical sequence of bytes, i.e., an element of~$\bytestrings$.
Due to the specification and the surjectivity of $\gamma$, 
the result of \code{encrypt(key, msg)} 
is equal to $\gamma(\enc{\Msg_{\id{key}}}{\Msg_{\id{msg}}})$ for some messages $\Msg_{\id{key}}$ and $\Msg_{\id{msg}}$,
where $\gamma(\Msg_{\id{key}}) = \texttt{seq(key)}$ and $\gamma(\Msg_{\id{msg}}) = \texttt{seq(msg)}$.
To verify the construction of an entire message, we combine the information of all such calls. 

Verifying that a message~$\Msg$ \new{returned by \code{receive()} (cf. Figure~\ref{fig:netlib-contracts})} matches a pattern~$t$ is more involved.
Using our cryptographic library's specifications, we can verify that \new{$\gamma(\Msg)$} is equal to $\gamma(t\sigma)$, where
the substitution $\sigma$ instantiates the variables of $t$ with messages.
Unfortunately, this does not entail that the received message $\Msg$ matches the pattern~$t$.
The function $\gamma$ may have collisions and hence $\gamma(\Msg)$ may \new{equal} $\gamma(t\sigma)$, while $\Msg$ and $t\sigma$ differ.
We address this issue by requiring that \new{instances of} the I/O specification's patterns do not \new{collide with other bytestrings};
we discuss below how we justify this requirement.

\begin{newtext}
\begin{definition}
The \emph{pattern requirement} for a pattern $t \in \Terms$ is defined by
\begin{align}
\label{eq:qaxiom}\tag{PaR($t$)}
     \gamma(t\sigma) = \gamma(\Msg) \implies \exists \sigma'.\: \Msg =_\Eq t\sigma'.
\end{align}
\end{definition}
\end{newtext}
\new{This requirement} states that if messages $\Msg$ and $t\sigma$ \new{coincide} under $\gamma$, then $\Msg$ must match the pattern $t$ \new{(mod $\Eq$)} with some substitution $\sigma'$, \new{which may differ from $\sigma$}.

\looseness=-1
We \new{need} the pattern requirement for all patterns of the I/O specification.
For code verification, we express \new{the pattern requirement as} a ghost function whose pre- and postcondition are the left-hand and right-hand side of the pattern requirement, \new{for each pattern respectively}.
To apply the pattern requirement, the corresponding ghost function is called in the code.
\new{In Sections~\ref{ssec:deriving-pattern-requirement} and~\ref{ssec:assumption-proof-obligations}, we will explain how to prove the pattern requirement for a given pattern $t$.}

\begin{figure}[t]
\begin{gobra}[numbers=left,numbersep=5pt]
// seq(key) $=$ $\gamma$(k) holds
ciph, c, ok := receive(); if !ok {return}
assert seq(ciph) $=$ $\gamma$(c) 
msg, ok := decrypt(key, ciph); if !ok {return}
assert $\exists$u. seq(msg) $=$ $\gamma$(u)
        && seq(ciph) $=$ $\gamma$(enc(k, u))
PaR1(m, ...) // using the pattern requirement
assert $\exists$w. c $=_\Eq$ enc(k, w) && seq(msg) $=$ $\gamma$(w) 
\end{gobra}
%\vspace{-0.7em}
\caption{Reasoning about receiving and parsing a ciphertext.}
\label{fig:parsing-ciphertext}
%\vspace{-0.5em}
\end{figure}

\begin{figure}[t]
\begin{gobra}[numbers=none]
req token(p)$\;\,$&&$\;\,$$\IOSpecRole{\id{Ann}}$(p,r,S)$\;\,$&&$\;\,$$Step^1_{\id{Ann}}$(k)$\inM$S
req $\exists$x.$\;$$\gamma$(enc(k,x))$\;$$=$$\;$$\gamma$(m)
ens token(p)$\;\,$&&$\;\,$$\IOSpecRole{\id{Ann}}$(p,r,S)$\;\,$&&$\;\,$$\exists$x$^\prime$.$\;$m$\;$$=_\Eq$$\;$enc(k,x$^\prime$)
ghost func PaR1(m,p,r,S,k)
\end{gobra}
%\vspace{-0.7em}
\caption{
  Ghost function for the pattern requirement of Example~\ref{ex:message-parsing}. 
  There is the single pattern $\enc{k}{x}$, where $k$ is a constant and $x$ is a variable.
}
\label{fig:pattern-property-contract}
%\vspace{-0.5em}
\end{figure}

\begin{example}[Checking a ciphertext]
\label{ex:message-parsing}
Consider a simple protocol where a role Ann expects \new{a message matching} the pattern $\enc{k}{x}$, where $k$ is a pre-shared key.
We use the fact $Step^1_{\id{Ann}}(k)$ to bind $k$ in the model state.
Figure~\ref{fig:parsing-ciphertext} shows part of an implementation. 
The variable \code{key} stores the pre-shared key, expressed as \code{seq(key)$\;$$=$$\;$$\gamma$(k)}.
After successfully receiving a bytestring~\code{ciph} with a message~\code{c}, \code{seq(ciph)$\;$$=$$\;$$\gamma$(c)} holds due to \code{receive}'s specification (cf.~Figure~\ref{fig:netlib-contracts}). 
Next, the code decrypts \code{ciph}. 
If successful, \code{ciph} equals the \new{bytestring} \code{$\gamma($enc(k,u))} for some message \code{u} with \code{seq(msg)$\;$$=$$\;$$\gamma$(u)} (lines~5--6) 
\new{by} \code{decrypt}'s postcondition (cf.~Figure~\ref{fig:cryptolib-contracts}) and $\gamma$ \new{being} a surjective homomorphism.
Furthermore, we know that \code{$\gamma$(enc(k,u))} equals \code{$\gamma$(c)}, but not yet that the received message \code{c} matches the pattern $\enc{k}{x}$ (line 8).
For this, we apply the pattern requirement by calling the ghost function \code{PaR1} (cf.~Figure~\ref{fig:pattern-property-contract}).
The constant $k$ of the pattern $\enc{k}{x}$ is passed as an argument to the call, and related to the state facts of the I/O specification via the ghost function's precondition (with \code|$Step^1_{\id{Ann}}$(k)$\inM$S|).
\end{example}

\subsubsection{Deriving the pattern requirement}
\label{ssec:deriving-pattern-requirement}

The pattern requirement for a given pattern $t$ can be derived from two more basic properties. \new{We define these properties here and prove this implication. In Section~\ref{ssec:assumption-proof-obligations}, we will discuss assumptions and justifications regarding these properties.}

The first property is \emph{image disjointness}, 
\begin{newtext}% do not remove % here
which has two parts: (i) the images of (public and fresh) names under $\gamma$ are pairwise disjoint and (ii) the image of any function $\Calgify{f}$ for $f \in \Sigma$ neither collides with the image of any other function $\Calgify{g}$, for $g \in \Sigma$, nor with the image of names under $\gamma$.

\begin{definition}[Image disjointness]
\label{def:image-disjointness}
Image disjointness holds \newcr{under the following two conditions:}
\begin{enumerate}[(i)]
%(i) 
\item $\gamma$ is injective on the set of names $\names$ and
%(ii) 
\item for all $f,g \in \Sigma$ such that $f\neq g$, 
\begin{align}
\tag{ID$_{f,g}$}\label{eq:IDf}
\img(\Calgify{f}) \cap (\img(\Calgify{g}) \cup \gamma(\names)) & = \emptyset. 
\end{align}
\end{enumerate}
\end{definition}
\end{newtext}

\begin{newtext}
The second property is \emph{pattern injectivity} for \new{a} pattern~$t$. This constitutes a much weaker form of standard injectivity. It is required to hold only for subterms $t' \sqsubseteq t$ and where, again, equality is guaranteed only modulo a substitution~$\sigma'$.
\begin{definition}[Pattern injectivity] 
\label{def:pattern-injectivity}
Pattern injectivity holds for a pattern $t$ if, for all $f\in \Sigma$ occurring in $t$,
\begin{align}
&f(t'_1, ..., t'_k) \sqsubseteq t \wedge \Calgify{f}(\gamma(t'_1\sigma), ..., \gamma(t'_k\sigma)) = \Calgify{f}(b_1, ..., b_k) \notag \\  
\label{eq:qinject}\tag{PaI$_f$($t$)}
&\implies \exists \sigma'.\: b_1 = \gamma(t'_1\sigma') \wedge ... \wedge b_k = \gamma(t'_k\sigma'). 
\end{align}
\end{definition}
\end{newtext}

\begin{proposition}
\label{prop:pattern-requirement}
Given a linear pattern $t$ (where \new{every variable occurs only once}), image disjointness and pattern injectivity for~$t$ imply the pattern requirement for $t$.
\end{proposition}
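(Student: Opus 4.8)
The plan is to prove the pattern requirement $\mathrm{PaR}(t)$ by structural induction on the pattern $t$, where $\Msg$ ranges over ground messages in $\Msgs$ and the substitutions are ground. The three hypotheses play distinct roles: image disjointness is used to pin down the outermost function symbol (or name) of $\Msg$ from its $\gamma$-image; pattern injectivity lets me descend from an equality of $\Calgify{f}$-applications to component-wise equalities of the arguments; and linearity is what finally allows me to glue the matching substitutions obtained for the subterms into a single substitution $\sigma'$ witnessing $\Msg =_\Eq t\sigma'$.

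First I would dispatch the base cases. If $t$ is a variable $x$, then from $\gamma(x\sigma) = \gamma(\Msg)$ I take $\sigma' = [x \mapsto \Msg]$, so that $\Msg =_\Eq x\sigma'$ holds trivially. If $t$ is a name $n \in \names$, then $t\sigma = n$ and $\gamma(\Msg) = \gamma(n) \in \gamma(\names)$. I case-split on the syntactic shape of the ground term $\Msg$: if $\Msg = f(\dots)$ for some $f \in \Sigma$, then $\gamma(\Msg) \in \img(\Calgify{f})$, which contradicts image disjointness (part (ii), giving $\img(\Calgify{f}) \cap \gamma(\names) = \emptyset$); hence $\Msg$ is itself a name, and injectivity of $\gamma$ on $\names$ (part (i)) forces $\Msg = n = t\sigma'$ for any $\sigma'$.

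For the inductive step $t = f(t_1, \dots, t_k)$, I have $\gamma(\Msg) = \Calgify{f}(\gamma(t_1\sigma), \dots, \gamma(t_k\sigma)) \in \img(\Calgify{f})$. Image disjointness then forces $\Msg$ to have head symbol $f$: it cannot be a name (disjointness of $\img(\Calgify{f})$ from $\gamma(\names)$), nor can its head be some $g \neq f$ (disjointness of $\img(\Calgify{f})$ and $\img(\Calgify{g})$). Writing $\Msg = f(m_1, \dots, m_k)$, I obtain $\Calgify{f}(\gamma(t_1\sigma), \dots, \gamma(t_k\sigma)) = \Calgify{f}(\gamma(m_1), \dots, \gamma(m_k))$, and applying pattern injectivity $\mathrm{PaI}_f(t)$ to the subterm $t \sqsubseteq t$ (with $b_i = \gamma(m_i)$) yields a substitution $\sigma''$ with $\gamma(m_i) = \gamma(t_i\sigma'')$ for every $i$. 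Each $t_i$ is a subterm of the linear pattern $t$, hence itself linear and inheriting both image disjointness (a global property) and pattern injectivity (its symbols and subterms are among those of $t$, so $\mathrm{PaI}_g(t)$ for the relevant $g$ restricts to $\mathrm{PaI}_g(t_i)$); therefore the induction hypothesis applies and gives $\sigma'_i$ with $m_i =_\Eq t_i\sigma'_i$.

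Finally I would assemble the result. Because $t$ is linear, the variable sets $\vars(t_1), \dots, \vars(t_k)$ are pairwise disjoint, so I can define a single substitution $\sigma'$ that agrees with $\sigma'_i$ on $\vars(t_i)$ without any conflict. Then $t_i\sigma' = t_i\sigma'_i =_\Eq m_i$ for each $i$, and since $=_\Eq$ is a congruence we conclude $t\sigma' = f(t_1\sigma', \dots, t_k\sigma') =_\Eq f(m_1, \dots, m_k) = \Msg$, which is exactly $\mathrm{PaR}(t)$. I expect this reassembly step to be the only genuinely delicate point: it is precisely where linearity is needed, preventing the subterm matches from clashing on a shared variable. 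The head-symbol identification, by contrast, is a direct consequence of image disjointness, and the secondary technical care is simply to note that pattern injectivity for $t$ descends to each subterm $t_i$, which is what makes the induction close.
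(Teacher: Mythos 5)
Your proof is correct and takes essentially the same route as the paper, which simply states that the pattern requirement is proved ``for all $t' \sqsubseteq t$ by induction on $t'$''; your proposal is a faithful elaboration of that induction, using image disjointness to fix the head symbol, pattern injectivity to descend to the arguments, and linearity to merge the subterm substitutions.
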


\begin{proof}
We prove the proposition's statement for all $t' \sqsubseteq t$ by induction on $t'$.
\end{proof}
\looseness=-1
\new{We split non-linear patterns into multiple linear ones. For instance, the non-linear pattern $t = \tuple{x, \hash{x}}$ can be split into  $t_1 = \tuple{x, \_}$ and $t_2 = \tuple{\_, \hash{x}}$ (where $\_$ matches any term). Conceptually, we then first match a given term $\tuple{u,\hash{u}}$ against $t_1$, which binds $x$ to $u$, and then against $\tuple{\_, \hash{u}}=t_2[x \mapsto u]$. This is equivalent to matching against $t$. This turned out to be simpler to work with than a single linearized pattern with additional equality constraints.
}

\subsubsection{Assumptions and proof obligations}
\label{ssec:assumption-proof-obligations}

\begin{newtext}
We discuss assumptions and proof obligations regarding image disjointness and pattern injectivity. In doing so, we distinguish cryptographic operations from formats. 

Since we are working in a symbolic (Dolev-Yao) model, which assumes perfect cryptography, 
we maintain this assumption for cryptographic operations at the bytestring level in the following form.
\begin{assumption}[Cryptographic operations] We assume that 
\label{assm:crypto-ops}
\begin{enumerate}[(i)]
\item $\gamma$ is injective on the set of names $\names$,
\item \eqref{eq:IDf} holds for cryptographic $f \in \Sigma$ and all $g\in \Sigma$, 
\item \eqref{eq:qinject} holds for all protocol patterns $t$ and all cryptographic $f \in \Sigma$ occurring in $t$.
\end{enumerate}
\end{assumption}
We justify these assumptions by noting that we can expect collisions violating these assumptions to occur only with negligible probability in good cryptographic libraries.
Also recall that pattern injectivity is a much weaker requirement than standard injectivity.

The situation is different for formats (cf.~Section~\ref{ssec:protocol-messages}).  
We can expect that the formats of a well-designed protocol are unambiguously parseable (i.e., injective and hence pattern-injective) and mutually disjoint (i.e., image disjoint). 
We therefore require that these properties are \emph{proved} for formats, e.g., using the techniques proposed in~\cite{DBLP:conf/csfw/ModersheimK14,DBLP:conf/uss/RamananandroDFS19}.
\end{newtext}

\begin{remark}
An obvious way to achieve image disjointness \new{and pattern injectiveness} is to \emph{tag} each construct of the bytestring algebra with a different bytestring. This approach is followed, e.g., in~\cite{Igloo} but it is unrealistic for real protocols.

Alternatively, image disjointness \new{holds} if different operations result in differently sized bytestrings. For operations with varying output sizes, such as stream encryption, this may require restricting \new{the allowed argument sizes in the implementation.} In some cases, this approach may allow us to \emph{prove} image disjointness even for \new{some} cryptographic operations. 
Indeed, \new{we do this for} a pre-existing implementation of the WireGuard protocol, which does not use \new{tagging}.
However, this approach also has its limitations; for example, AES-256 or SHA-256 have the same output size. 
\end{remark}

\subsubsection{\new{Proving term equalities}}
\label{ssec:proving-equalities}
\begin{newtext}
Obligations to prove term equalities during code verification arise from equality constraints in the I/O specification.
However, while the code can check that an equality holds on the bytestring level (e.g., $\gamma(x) = \gamma(\hash{z})$), this does not in general imply the prescribed term equality (e.g., $x=_\Eq\hash{z}$\christoph{Check: All term equalities should be $mod \Eq$.}). 
Following~\cite{DBLP:conf/csfw/DupressoirGJN11,DBLP:journals/jcs/DupressoirGJN14,DBLP:conf/sefm/Vanspauwen015,VanspauwenGijs2017Vcpi}, we can reasonably assume that collisions violating this implication do not occur (with overwhelming probability) in actual protocol executions and thus prove the specification under the condition that bytestring equality implies term equality for the two concrete bytestrings at hand (e.g., $\gamma(x) = \gamma(\hash{z}) \Longrightarrow x=_\Eq\hash{z}$). 
We call this a \emph{collision-freedom} assumption for a given equation.
\end{newtext}

\begin{newtext}
\subsubsection{Summary}
\label{sssec:summary}
The verification of the role implementations, i.e., the Hoare triples $\vdash_{\medextp} \{\PP_{i}(\rid) \}\ \prog_i(rid) \ \{ \mathsf{true} \}$, relies on the following assumptions:
\begin{enumerate}
\item contracts for the I/O and cryptographic libraries, where the former's operations are consistent with~$\medextp$;
\item the pattern requirement for each pattern $t$ occurring in the I/O specification; and
\item collision-freedom for all equalities $\Phi_R$ occurring in the I/O specification.
\end{enumerate}
We suggest to also prove the pattern requirement for a given pattern $t$ whenever possible, e.g., by showing image disjointness and pattern injectivity (at least) for formats and assuming them for the cryptographic operations (Assumption~\ref{assm:crypto-ops}).
\end{newtext}

% !TEX root = main.tex

%%%%%%%%%%%%%%%%%%%%%%%%%%%%%%%%%%%%%%%%%%%%%%%%%%%%%%%%%%%%
\section{\hspace{-1pt}\new{Concrete environment and overall soundness}}
\label{sec:overall-soundness}
%%%%%%%%%%%%%%%%%%%%%%%%%%%%%%%%%%%%%%%%%%%%%%%%%%%%%%%%%%%%

We now derive an overall soundness result for our approach, which relates the abstract Tamarin model to the concrete protocol implementation. 

\subsection{\new{Concrete environment}}
\label{ssec:concrete-environment}
To formulate such a result, we must first define a concrete environment model~$\realEnv$, including a concrete attacker, which can interact with the roles' implementations. These implementations communicate with the environment using I/O library functions, which include non-ghost and ghost parameters: they send and receive both bytestrings (used by the program) and ghost terms (used for the reasoning), related by $\gamma$. The ghost parameters should be reflected in~$\realEnv$'s interface, i.e., its synchronization labels.
Moreover, to fit into an overall soundness result, $\realEnv$ must be trace-included in $\Renv^{e}$.
Hence, the concrete attacker must not be more powerful than the Dolev-Yao attacker, i.e., we must prevent attacks at the bytestring level such as exploiting collisions.

To achieve this, we construct the concrete environment from the term-level environment $\Renv^{e}$ by changing only its \emph{interface} with the protocol. Concretely, we rename and extend every synchronization label $[\lambda_F(rid, \vect{x})]$ of (the LTS induced by) $\Renv^{e}$ to the label $\perm{F}(rid, \new{\gamma(\vect{x}), \vect{x}})$ in $\realEnv$ and we keep the labels of $\Renv^{e}$'s internal actions. Note that applying the relabeling $\medext \circ \medextp$ to $\realEnv$ recovers $\Renv^{e}$'s original labels. Hence, we record the following property.
\begin{proposition}
\label{prop:environment-inclusion}
$\medext(\medextp(\realEnv)) \tracePre \Renv^{e}.$
\end{proposition}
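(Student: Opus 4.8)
The plan is to show that $\realEnv$ and $\Renv^{e}$ share the \emph{same} underlying transition system, differing only in how transitions are labelled, and that the composite relabelling $\medext \circ \medextp$ precisely undoes the renaming used to build $\realEnv$. Consequently the two trace sets coincide, and the claimed inclusion follows a fortiori.

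First I would make the construction of $\realEnv$ precise as a relabelling. By definition, $\realEnv$ is obtained from (the LTS induced by) $\Renv^{e}$ by leaving the states and transitions untouched and applying the label map $\rho$ that sends each synchronization label $\enumM{\lambda_F(\rid,\vect{x})}$ to $\perm{F}(\rid,\gamma(\vect{x}),\vect{x})$ and is the identity on every internal label inherited from $\Renv^{-}$. Note that $\rho$ is genuinely a function on labels: the added bytestring component $\gamma(\vect{x})$ is determined by $\vect{x}$, so the renaming maps each old label to exactly one new label. Thus $\realEnv = \rho(\Renv^{e})$ as labelled transition systems.

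Next I would compute $\medext \circ \medextp$ on the labels occurring in $\realEnv$. On a relabelled synchronization label, $\medextp$ is defined precisely because its bytestring component $\gamma(\vect{x})$ equals $\gamma$ of its ghost-term component $\vect{x}$ (this is the domain condition $b=\gamma(m)$ of $\medextp$), and it strips the bytestrings, yielding $\medextp(\perm{F}(\rid,\gamma(\vect{x}),\vect{x})) = \perm{F}(\rid,\vect{x})$. Applying $\medext$ then gives $\medext(\perm{F}(\rid,\vect{x})) = \enumM{\lambda_F(\rid,\vect{x})}$ by Theorem~\ref{theorem:IO-soundness}. Since both $\medextp$ and $\medext$ act as the identity on internal labels, I conclude that $(\medext \circ \medextp) \circ \rho$ is the identity on every label appearing in $\Renv^{e}$.

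Finally, relabelling an LTS amounts to applying the label map pointwise to each label of each of its traces. Hence $\traces(\medext(\medextp(\realEnv)))$ is the image of $\traces(\Renv^{e})$ under the pointwise extension of $(\medext \circ \medextp) \circ \rho$, which by the previous step is the identity; the two trace sets therefore coincide, and in particular $\medext(\medextp(\realEnv)) \tracePre \Renv^{e}$. The only point requiring care — and the reason the construction is phrased as ``rename and extend'' rather than as an arbitrary relabelling — is the partiality of $\medextp$: I must verify that every concrete label produced in $\realEnv$ satisfies $b=\gamma(m)$, so that $\medextp$ is defined throughout and the composition is a genuine left inverse of $\rho$. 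This holds by construction, since the bytestring component is introduced as exactly $\gamma(\vect{x})$; the remainder is routine label bookkeeping.
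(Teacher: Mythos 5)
Your proof is correct and takes essentially the same route as the paper, which records this proposition as immediate from the single observation that applying the relabelling $\medext \circ \medextp$ to $\realEnv$ recovers $\Renv^{e}$'s original labels. You merely spell out the routine details the paper leaves implicit — that $\realEnv$ is by construction a relabelling of $\Renv^{e}$, that the composite map is a left inverse of that relabelling (so the trace sets in fact coincide), and that the domain condition $b=\gamma(m)$ of the partial function $\medextp$ is satisfied on every label of $\realEnv$.
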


\colorlet{colormed}{blue!75!black}
\colorlet{colorlem}{green!50!blue}
\newcommand{\colormediator}[1]{\textcolor{colormed}{#1}}

\subsection{\new{Overall soundness result}}
\label{ssec:soundness}
Our goal now is to show that any trace property proved for the Tamarin model is preserved
in the concrete system
\[
     \big(\interl_{i,\rid} \medint(\CC_{i}(\rid))\big) \sync{\Lambda'} \realEnv,
\]
which is composed of the \new{verified programs'} LTSs $\CC_{i}$ and the concrete environment $\realEnv$, where \new{the programs' internal operations are mapped back to their action fact arguments} and $\Lambda' = (\medext \circ \medextp)^{-1}(\Lambda)$ synchronizes I/O permissions that also include bytestrings beside terms.
\new{Note that our soundness result assumes that the role implementations are already verified and that the verifier assumption (Assumption~\ref{assm:verifier-assumption}) holds. The verification of the role implementations themselves and the related assumptions are discussed in Section~\ref{sec:implementation}.}
\begin{theorem}[Soundness]
\label{theorem:soundness}
Suppose Assumption~\ref{assm:verifier-assumption} holds 
and that we have verified, for each role $i$, the Hoare triple $\vdash_{\new{\medextp}} \{\PP_{i}(\rid) \}\ \prog_i(rid) \ \{ \mathsf{true} \}$. Then
\[
(\interl_{i,\rid} \new{\medint}(\CC_{i}(\rid))) \sync{\Lambda'} \realEnv \,\tamtracePre\, \msrsys.
\]
\end{theorem}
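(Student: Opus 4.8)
The plan is to establish the inclusion by transitivity, reducing the claim to a single component-wise refinement that I then lift to the composed systems and chain with the decomposition and interface lemmas already proved. Concretely, I aim to show
\[
(\interl_{i,\rid} \medint(\CC_i(\rid))) \sync{\Lambda'} \realEnv \tracePre (|||_{i,\rid} \Rrole{i}(\rid)) \sync{\Lambda} \Renv^e,
\]
and then invoke Lemma~\ref{lem:decomposition} and Lemma~\ref{lem:interface-model} to continue the chain down to $\Rintf$ and finally to $\msrsys$. Since $\tracePre \subseteq \tamtracepPre \subseteq \tamtracePre$, a chain mixing $\tracePre$ (from the new refinement and from decomposition) with $\tamtracepPre$ (from the interface lemma) yields $\tamtracePre$ overall, which is exactly the statement.

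First I would assemble the component-level refinements. For each role $i$, the verified Hoare triple together with Assumption~\ref{assm:verifier-assumption} instantiated at $\alpha = \medextp$ gives $\medextp(\CC_i(\rid)) \tracePre \PP_i(\rid)$. Applying the relabeling $\med = \medint \circ \medext$, which is monotone with respect to $\tracePre$, and composing with Theorem~\ref{theorem:IO-soundness} yields
\[
(\med \circ \medextp)(\CC_i(\rid)) \tracePre \med(\PP_i(\rid)) \tracePre \Rrole{i}(\rid).
\]
On the environment side, Proposition~\ref{prop:environment-inclusion} already provides $(\medext \circ \medextp)(\realEnv) \tracePre \Renv^e$.

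The core step is a compositionality argument lifting these component refinements to the synchronized product. Here I would exploit that the three relabelings act on pairwise disjoint sets of labels — $\medint$ on internal permissions, $\medext$ on the abstract I/O synchronization labels, and $\medextp$ on the bytestring-carrying concrete I/O labels — so that $\med \circ \medextp = \medint \circ (\medext \circ \medextp)$ and the factor $\medext \circ \medextp$ touches only the synchronized I/O transitions. Because $\Lambda' = (\medext \circ \medextp)^{-1}(\Lambda)$, relabeling the concrete product by $\medext \circ \medextp$ commutes with synchronization: a concrete I/O transition of $\medint(\CC_i(\rid))$ synchronizes on $\Lambda'$ with the renamed interface label of $\realEnv$ exactly when its image under $\medext \circ \medextp$ lies in $\Lambda$, and the resulting synchronized transition carries the label $\emptyM$ on both levels. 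Combining this commutation with monotonicity of $\interl$ and $\sync{\Lambda}$ with respect to $\tracePre$ (the precongruence property of the parallel operators defined in Appendix~\ref{app:composition-defs}) and the two component refinements above gives the desired inclusion into $(|||_{i,\rid} \Rrole{i}(\rid)) \sync{\Lambda} \Renv^e$.

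I expect the main obstacle to be precisely this commutation of relabeling with synchronized composition in the presence of ghost terms. The delicate point is that synchronization on $\Lambda'$ must pair each program's concrete I/O action with the correct environment action, and this matching is only well-defined because every concrete I/O label carries both the bytestring and its ghost term: erasing the bytestring via $\medextp$ and renaming via $\medext$ must send both sides of a synchronizing pair to the \emph{same} abstract label $\lambda_F(\rid, \vect{x})$. Verifying that $\Lambda' = (\medext \circ \medextp)^{-1}(\Lambda)$ makes this correspondence exact — so that no concrete synchronization is lost or spuriously created, and the recomposed abstract system is faithfully $(|||_{i,\rid}\Rrole{i}(\rid)) \sync{\Lambda} \Renv^e$ — is where the argument must be carried out with care. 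The remaining chaining through Lemmas~\ref{lem:decomposition} and~\ref{lem:interface-model}, and the upgrade from $\tamtracepPre$ to $\tamtracePre$ via the stated containment of the relation symbols, is then routine.
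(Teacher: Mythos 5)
Your proposal is correct and follows essentially the same route as the paper's proof: the same component-level refinements ($\med(\medextp(\CC_i(\rid))) \tracePre \Rrole{i}(\rid)$ from Assumption~\ref{assm:verifier-assumption} plus Theorem~\ref{theorem:IO-soundness}, and Proposition~\ref{prop:environment-inclusion} for the environment), lifted by compositionality of $\interl$ and $\sync{\Lambda}$, combined with the relabeling/synchronization commutation step that turns $\Lambda$ into $\Lambda' = (\medext \circ \medextp)^{-1}(\Lambda)$, and finally chained through Lemmas~\ref{lem:decomposition} and~\ref{lem:interface-model} with the inclusion $\tracePre \subseteq \tamtracepPre \subseteq \tamtracePre$. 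The only cosmetic difference is that you merge the paper's two intermediate inclusions (\ref{eq:direct-composition}) and (\ref{eq:compositional-refinement}) into one, and you correctly single out the commutation of relabeling with synchronized composition as the step needing care.
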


\begin{figure}[t]
\centering
\begin{tikzpicture}

\tikzstyle{sys}=[node distance = 0.2cm]
\tikzstyle{sys-side}=[node distance = 0]
\tikzstyle{eq}=[rotate=90]
\tikzstyle{lem}=[node distance = 0.1cm, text=colorlem, font=\footnotesize, yshift=-0.02cm]

\node[sys] (A) {$\msrsys$};
\node[sys] (B) [below = 4mm of A] {$\Rintf$};
\node[sys] (C2) [below = 4mm of B] {$\sync{\Lambda}$};
\node[sys-side] (C1) [left = of C2] {$\big(\interl_{i,\rid}\Rrole{i}(\rid)\big)$};
\node[sys-side] (C3) [right = of C2] {$\Renv^e$};
\node[sys] (D) [below = 3mm of C1] {$\colormediator{\med\big(}\PP_{i}(\rid)\colormediator{\big)}$};
\node[sys] (E2) [below = 1.4cm of C2] {$\sync{\Lambda}$};
\node[sys-side] (E1) [left = of E2] {$\Big(\interl_{i,\rid} \colormediator{\med\big(\medextp(}\CC_{i}(\rid)\colormediator{)\big)}\Big)$};
\node[sys-side] (E3) [right = of E2] {$ \colormediator{\medext\big(\medextp(}\realEnv\colormediator{)\big)}$};
\node[sys] (F2) [below = 4mm of E2] {$\sync{\colormediator{\Lambda'}}$};
\node[sys-side] (F1) [left = of F2] {$\interl_{i,\rid} \colormediator{\medint\big(}\CC_{i}(\rid)\colormediator{\big)}$};
\node[sys-side] (F3) [right = of F2] {$\realEnv$};

\path (B) -- node[eq] (I1) {$\tamtracepPre$} (A);
\path (C2) -- node[eq] (I2) {$\tracePre$} (B);
\path (D) -- node[eq] (I3) {$\tracePre$} (C1);
\path (E1) -- node[eq] (I4) [yshift=-0.2cm, xshift=-0.05cm] {$\tracePre$} (D);
\path (E3) -- node[eq] (I5) [yshift=0.3cm] {$\tracePre$} (C3);
\path (F2) -- node[eq, yshift=0.15cm] (I6) {$\tracePre$} (E2);

\node[lem] (R1) [right = of I1.base] {(Lemma~\ref{lem:interface-model})};
\node[lem] (R2) [right = of I2.base] {(Lemma~\ref{lem:decomposition})};
\node[lem] (R3) [right = of I3.base] {(Theorem~\ref{theorem:IO-soundness})};
\node[lem] (R4) [right = of I4.base] {(Assumption~\ref{assm:verifier-assumption})};     
\node[lem] (R5) [right = of I5.base] {(Prop.~\ref{prop:environment-inclusion})};
\end{tikzpicture}

%\vspace{-0.3cm}

\caption{Overview of soundness proof, where $\med$ and $\Lambda'$ are defined by $\med = \medint \circ \medext$ and $\Lambda' = (\medext \circ \medextp)^{-1}(\Lambda)$.}
\label{fig:soundness}
%\vspace{-0.5cm}
\end{figure}

\begin{proof}[Proof]
We decompose the proof into a series of trace inclusions. 
Figure~\ref{fig:soundness} gives an overview of the proof.

The first trace inclusion is
\begin{equation}
\label{eq:direct-composition}
\begin{array}{l}
\phantom{\tracePre}\;  \big(\interl_{i,\rid} \medint(\CC_{i}(\rid))\big) \sync{\Lambda'} \realEnv  \\
\tracePre (\interl_{i,\rid} \med(\medextp(\CC_{i}(\rid)))) \sync{\Lambda} \medext(\medextp(\realEnv)),
\end{array}
\end{equation}
where the first line is \new{obtained from the second by pushing the relabeling $\medext \circ \medextp$ into the parallel composition, thus} changing the set of synchronizing labels from $\Lambda$ to $\Lambda'$. 
Next, we deduce 
\begin{equation}
\label{eq:system-side-inclusion}
\med(\medextp(\CC_{i}(\rid))) \tracePre \Rrole{i}(\rid)
\end{equation}
from Theorem~\ref{theorem:IO-soundness} and from the combination of \new{Assumption~\ref{assm:verifier-assumption} and the assumption $\vdash_{\new{\medextp}} \{\PP_{i}(\rid) \}\ \prog_i(rid) \ \{ \mathsf{true} \}$.}
We then leverage a general composition theorem~\cite[Theorem~2.3]{Igloo} that implies that trace inclusion is  compositional for  a large class of composition operators including $\interl$ and $\sync{\Lambda}$. We apply this to the trace inclusion~\eqref{eq:system-side-inclusion} and the one from Proposition~\ref{prop:environment-inclusion} to derive the trace inclusion 
\begin{equation}
\label{eq:compositional-refinement}
\begin{array}{l}
\phantom{\tracePre}\; (\interl_{i,\rid} \med(\medextp(\CC_{i}(\rid)))) \sync{\Lambda} \medext(\medextp(\realEnv)) \\
\tracePre (\interl_{i,\rid}\Rrole{i}(\rid)) \sync{\Lambda} \Renv^e.
\end{array}
\end{equation}

Our result follows by combining the trace inclusions \eqref{eq:direct-composition} and \eqref{eq:compositional-refinement} with Lemmas~\ref{lem:interface-model} and~\ref{lem:decomposition} and the relation inclusions $\tracePre \:\subseteq\: \tamtracepPre \:\subseteq\: \tamtracePre$ from Section~\ref{sec:background}.
\end{proof}

\begin{corollary}[Property preservation]
\label{cor:property-preservation}
\new{Any trace property $\Phi$ that holds for $\msrsys$ also holds for $(\interl_{i,\rid} \medint(\CC_{i}(\rid))) \sync{\Lambda'} \realEnv$.}
\end{corollary}

\begin{example}[Secrecy for Diffie-Hellman]
\label{example:DH10}
\looseness=-1
We have 
verified 
the \new{secrecy of and agreement on the exchanged key} for the
Tamarin protocol model from Example~\ref{example:DH2}.
\newcr{The former is} expressed as the requirement that in all possible traces of the system,
the attacker never learns a value $k$ for which a $\id{Secret}(k)$ action occurs in the trace.
We then implemented both roles by programs, and 
proved they satisfy their I/O specifications (Example~\ref{example:DH9}).
Our soundness result \new{and its corollary} guarantee that 
the composed system also satisfies \new{key secrecy and authentication}.
\end{example}

% !TEX root = main.tex
\section{\new{Application and WireGuard case study}}
\label{sec:wireguard}

\looseness=-1
To provide evidence that our approach to verifying cryptographic protocol 
implementations is general, powerful, and scales to complex
real-world protocols, we use it to verify \new{our Diffie-Hellman running example and} the WireGuard protocol.
\newcr{Both case studies are available open-source~\cite{artifact}.}

\subsection{\new{Applying our approach}}
\label{ssec:applying-approach}

\begin{newtext}
The application of our approach involves three steps:
\begin{enumerate}
\item \emph{Protocol model specification and verification in Tamarin.} Protocol models must satisfy our mild format restrictions (Section~\ref{sec:formatting}). Existing protocol models may require minor syntactic modifications. Verify the desired trace properties such as secrecy and authentication.

\item \emph{Generation of the roles' I/O specifications.} Our tool automatically generates each protocol role's I/O specification along with definitions of types and internal operations. It accepts all protocol models satisfying our format assumptions. The tool currently supports Gobra (for Go) and VeriFast (for Java).

\item \emph{Role implementation and verification.} Verify an existing or new role implementation against its I/O specification. This relies on user-provided (reusable) contracts for the I/O and cryptographic libraries used and on proofs of the relevant instances of the pattern requirement (e.g., using Assumption~\ref{assm:crypto-ops} and Proposition~\ref{prop:pattern-requirement}). 
\end{enumerate}
By Corollary~\ref{cor:property-preservation} (and Assumption~\ref{assm:verifier-assumption}), all properties proven for the Tamarin model are inherited by the implementation.

\looseness=-1
For our Diffie-Hellman example, we have verified a Go implementation (using Gobra) and a Java implementation (using VeriFast) against their generated specifications.
These two implementations are interoperable and exchange messages via UDP.
We have also produced a faulty implementation that sends~$x$ instead of $g^x$ as the first message and for which verification fails because the I/O permissions do not permit sending this payload.
\end{newtext}

\subsection{The WireGuard key exchange}

\looseness=-1
WireGuard is an open VPN (Virtual Private Network) system that is widely 
deployed on various platforms and integrated into the Linux kernel.
Its core is the WireGuard cryptographic protocol.

\looseness=-1
The WireGuard protocol mainly consists of a handshake, where
two agents establish secret session keys and authenticate each other, and 
a transport phase, where they use these keys to set up a secure channel for 
message transport.
We give an overview of the protocol in Figure~\ref{figure:wireguard}.
The complete protocol additionally features denial of service (DoS) protection mechanisms,
which we omit.

\begin{figure}
\centering
$\begin{array}{ll}
\multicolumn{2}{l}{\textcolor{colorlem}{\text{\it // handshake phase}}}\\
A \rightarrow B: & \tuple{1, sid_I, g^{ek_I}, c_{pk_I}, c_{ts}, mac1_I, mac2_I}\\
B \rightarrow A: & \tuple{2, sid_R, sid_I, g^{ek_R}, c_{empty}, mac1_R, mac2_R}\\
[0.3em]
\multicolumn{2}{l}{\textcolor{colorlem}{\text{\it // transport phase}}}\\
A \rightarrow B: & \tuple{4, sid_R, 0, \aead(\kIR, 0, p_0, '')}\\
B \rightarrow A: & \tuple{4, sid_I, 0, \aead(\kRI, 0, p'_0, '')}\\
A \rightarrow B: & \tuple{4, sid_R, 1, \aead(\kIR, 1, p_1, '')}\\
\dots
\end{array}
$
%\vspace{-.5em}
\caption{The WireGuard protocol.}
\label{figure:wireguard}
%\vspace{-1em}
\end{figure}

\looseness=-1
The protocol involves two roles, the initiator (Alice) and the responder (Bob),
each with long-term private and public keys.
It is assumed that Alice and Bob know each other's public keys $pk_I$ and $pk_R$
in advance. 
They may optionally use a pre-shared secret.
The protocol relies on an authenticated encryption with associated data (AEAD) 
construction $\aead$, and hash and key derivation functions.
The exact algorithms used are irrelevant for our presentation.
All protocol messages contain a tag: 1 and 2 for handshake messages,
3 for the optional DoS prevention messages (not shown), and 4 for 
transport messages. They also contain randomly generated unique 
session identifiers $sid_I$ or $sid_R$ (for each role).

\looseness=-1
The \emph{handshake phase} comprises two messages. Alice and Bob
generate fresh ephemeral Diffie-Hellman keys $ek_I$, $ek_R$, and exchange
the associated public keys $g^{ek_I}$, $g^{ek_R}$.
They also exchange ciphertexts $c_{pk_I}$, $c_{ts}$, and $c_{empty}$,
which respectively encrypt Alice's public key,
a timestamp, and the empty string,
with keys derived from both long term and ephemeral secrets.
The messages also contain message authentication codes (MACs) for the DoS protection mode, 
not described here.
At the end of the handshake, both agents compute two symmetric keys $\kIR$ and $\kRI$. 
The detailed message construction and key computation is given in Appendix~\ref{app:wireguard-message}.

\looseness=-1
These two keys are afterwards used to encrypt messages in the \emph{transport phase}:
$\kIR$ for messages from initiator to responder and $\kRI$ for the other direction.
Alice and Bob both keep two counters $\nIR$ and $\nRI$, counting the number of
messages sent in each direction. 
When Alice sends a message, she encrypts it with $\kIR$, using the current value of $\nIR$ 
as the AEAD nonce, and increments $\nIR$.
Thus, no confusion is possible regarding the order of messages. 
The use of different keys and counters allows
messages to be sent independently in each direction,
without requiring strict alternation.

\looseness=-1
Note that the protocol mandates that Alice sends the first transport message.
The reason is that it is used by Bob to confirm she has received 
his key. In contrast, Alice confirms this for Bob when she receives 
the second handshake message.

\subsection{Tamarin model}

\looseness=-1
We model the WireGuard protocol in Tamarin as a MSR system  
satisfying the assumptions from Section~\ref{sec:formatting}.
Our model features rules for each of the two roles' behavior, as well as 
environment rules modeling the initial long-term key distribution.
The environment may spawn any number of instances of each role,
i.e., an unbounded number of sessions running in parallel,
between the same or different agents.

\looseness=-1
Note that we not only model the handshake and first transport message, after which 
the key exchange is concluded, but also the loop that follows
where agents may exchange any number of transport messages, in any 
order, using the computed keys.
Verifying such unbounded loops is challenging for automated tools,
and often leads to non-termination. For this reason, they are usually not modeled in their full generality.
However, the presence of the loop in the implementation
required its inclusion in the model as well, so that the implementation 
adheres to the model's behavior.
We had to manually write three lemmas and an \emph{oracle} 
(a heuristic for Tamarin's proof search) to help the tool terminate. 
Tamarin can then prove these lemmas and verify the model automatically.

\looseness=-1
We formulate and prove in Tamarin trace properties expressing authentication 
(see Table~\ref{fig:wireguard-properties}).
More precisely, we show that, after the first transport 
message, the participants mutually agree on the resulting keys: 
if Alice believes she has exchanged $\kIR$ and $\kRI$ with Bob, 
then Bob also believes so, and conversely.
Moreover, we prove the forward secrecy of the keys $\kIR$ and $\kRI$:
they remain secret from the attacker, provided neither Alice nor Bob's long-term 
secrets were corrupted \emph{before the end of the key exchange}. %, even if they are corrupted afterwards.
The Tamarin file 
consists of about 250~lines of MSR rules,
and 100~lines of lemmas and properties. It is verified automatically
by Tamarin in about 3 minutes.

\begin{table}[t]
\renewcommand{\arraystretch}{1.1}
 	\caption{Properties verified for the WireGuard case study.
  By Corollary~\ref{cor:property-preservation}, the code inherits the protocol's properties.}
 	\label{fig:wireguard-properties}
	\centering
	\begin{tabular}{l | l} 
 		 Level  & {Verified properties}  \\
		 \hline
 		 Protocol & Agreement on the keys, forward secrecy	\\ 
 		 Code  & Memory safety, conformance with generated I/O spec	\\
 	\end{tabular}
%	\vspace{-1em}
\end{table}

\subsection{Implementation \new{and code verification}}
\label{ssec:implem}
\begin{newtext}\looseness=-1
We separately verified the initiator and responder code of the official Go implementation of WireGuard~\cite{WireGuardSources} in Gobra.
We first proved memory safety, which is independent of our approach, but a required initial step in tools like Gobra and VeriFast. 
We then verified each role implementation against its I/O specification, which we generated with our tool from the WireGuard Tamarin model.
\end{newtext}
We annotated the code with specifications, namely pre- and postconditions and loop invariants, and proof annotations, namely assertions, lemma calls, and predicate unfolding commands. 
The code can be compiled, since the annotations appear inside comments.
\new{Table~\ref{fig:wireguard-properties} summarizes all properties we proved.}

\begin{newtext}
\mypar{Changes to the implementation} \looseness=-1
We modified the official Go implementation in three ways.
First, we removed features not included in our Tamarin model, namely DDoS protection.
Second, we made changes to simplify proving memory safety. For this, we removed metrics and load balancing, which requires complex concurrency reasoning currently not supported by Gobra.
Lastly, we performed changes related to our approach. Namely, %
\end{newtext}%
we wrote stubs for cryptographic and network operations and equipped them with trusted specifications (cf.~Section~\ref{ssec:library-specs}) \new{and adapted the code accordingly}.

\looseness=-1
Our verified implementation is interoperable with the official implementation and can delegate OS traffic over a VPN\@.

\begin{newtext}

\mypar{Verified components} \looseness=-1
We verified both the handshake and transport phase for both roles. 
These components are responsible for all I/O operations of the implementation. 
We did not verify the setup code for network sockets and cryptographic keys. 
The stubs for cryptographic and network operations are trusted by assumption and thus also not verified.

\mypar{I/O specification} \looseness=-1
In addition to the I/O specifications generated by our tool, 
we declared the bytestring algebra operations and the homomorphism $\gamma$.
\end{newtext}

\looseness=-1
\new{Our verification of the I/O specification is standard.}
To verify a call to an I/O operation or an internal operation, we extract the corresponding I/O permission from the predicate $P_i(p,\rid,S)$.
This requires facts about the model state $S$.
For instance, to send a bytestring $b$, there must exist a term~$t$ with $\gamma(t) = b$ such that $\Outf_i(\rid, t) \in S$.
We verify such facts by relating the program state to the model state $S$.

\mypar{Tool expertise requirements}
\newcr{Performing the code-level verification requires a basic understanding of separation logic and tool-specific knowledge to complete the memory-safety proof (as is necessary when verifying any heap-manipulating program).
Justifying each of the program’s I/O operations by relating them to an I/O permission of the I/O specification is mostly straightforward.
}

\begin{newtext}
\mypar{Pattern requirement} \looseness=-1
Each of the protocol's three non-linear patterns induces two instances of the pattern requirement, realized as lemma functions (cf.~Figure~\ref{fig:pattern-property-contract}).

\looseness=-1
We proved the pattern requirement instances using Proposition~\ref{prop:pattern-requirement} and Assumption~\ref{assm:crypto-ops} by showing that all formats are (i) image-disjoint with each other and names and (ii) pattern-injective. 
Point (i) holds, since all formats start with a different constant and their lengths differ from the lengths of bytestrings representing names. Formats are also injective and hence satisfy (ii), as the arguments of all formats appear at fixed offsets in the bytestring representation.
\end{newtext}

\mypar{Statistics} \looseness=-1
Our \newcr{WireGuard} implementation consists of 608~lines of verified Go code, excluding library stubs.
Specifications and proof annotations make up \new{3936}~lines of code, \new{1241 \newcr{(32\%)} of which are generated by our tool.}
We required \new{87}~lines of code to declare the term and bytestring algebras and the axioms about~$\gamma$.
The verification runtime is about \new{148} and \new{138}~seconds for the initiator and responder, respectively.
\newcr{This} case study demonstrates that our approach is applicable to pre-existing real-world security protocols with implementations of considerable size.

\looseness=-1
\newcr{The smaller Diffie-Hellman implementations consist of 106 lines of verified Go code and 113 lines of verified Java code (both excluding library stubs). The Go and Java code require 1015 and 1022 lines of specification of which 623 (61\%) and 780 (76\%) are generated by our tool, respectively.}

% !TEX root = main.tex

\section{Related work}
\label{sec:relatedWork}

We compare our work with different kinds of approaches to formally verifying protocol implementations. \new{We focus on symbolic approaches, as we have already discussed their relation to computational approaches in the introduction.}%\christoph{Is this ok, or too radical?}

\mypar{Model extraction and code generation}
Bhargavan et al.~\cite{DBLP:journals/toplas/BhargavanFGT08} present a sound model extractor from (a first-order subset of) F\# to ProVerif models.
They work with an abstract datatype of bytestrings and corresponding interfaces for the cryptographic and network libraries, which they instantiate both to symbolic terms for prototyping and to actual library implementations.
This approach is used in~\cite{DBLP:journals/tissec/BhargavanFCZ12} to verify TLS~1.0. In~\cite{DBLP:conf/sp/BhargavanBK17}, the authors extract models from a typed JavaScript reference implementation of TLS~1.2 and TLS~1.3.
Several works generate both models for verification and executable code from abstract protocol descriptions. In~\cite{DBLP:conf/fps/Modesti15,DBLP:conf/birthday/AlmousaMV15}, Alice\&Bob-style protocol specifications are translated into ProVerif models and into JavaScript or Java implementations. While in \cite{DBLP:conf/birthday/AlmousaMV15}, the authors prove the correctness of a partial translation from a high-level to a low-level semantics, neither paper proves the full translation's correctness. Sisto et al.~\cite{DBLP:journals/fac/SistoCAP18} 
generate a ProVerif model and a refined Java implementation from an abstract Java protocol specification and prove the implementation's soundness.
We have already discussed the drawbacks of this family of approaches in the introduction.

\mypar{Code verification only}
Bhargavan et al.~\cite{DBLP:conf/popl/BhargavanFG10} modularly verify protocol code written in F\# using the F7 refinement type checker~\cite{DBLP:conf/csfw/BengtsonBFGM08}. They rely on protocol-specific invariants for cryptographic structures, e.g., stating which messages are public. 
Vanspauwen and Jacobs~\cite{DBLP:conf/sefm/Vanspauwen015,VanspauwenGijs2017Vcpi}
use a similar approach for protocols implemented in C and verified using VeriFast. They allow the concrete attacker to directly manipulate bytestrings, which they overapproximate symbolically by a set of terms. However, it is unclear what effect these manipulations have on message parsing in the protocol roles.
While the verification of global protocol properties in the earlier work~\cite{DBLP:conf/popl/BhargavanFG10} required additional hand-written proofs, the more recent work~\cite{BBHHKSW21} enables their verification in a single tool, F$^\star$, by explicitly incorporating a global event trace. 

\looseness=-1
\newcr{While these approaches are also modular and thus scale to protocols like Signal, finding a suitable protocol-specific invariant is challenging. Our approach not only decouples proving the security properties from verifying the implementation’s correctness, it also leverages Tamarin’s automated proof search.}

\mypar{Combined model and code verification}
Dupressoir et al.~\cite{DBLP:conf/csfw/DupressoirGJN11,DBLP:journals/jcs/DupressoirGJN14} use the interactive prover Coq for model verification in combination with the C code verifier VCC. Their approach involves reasoning about concrete bytestrings and their relation to terms. The central definitions of the protocol model are duplicated in Coq and in VCC and some theorems proven in Coq are imported as axioms into VCC. 

\new{Igloo~\cite{Igloo} is a framework for distributed system verification that soundly combines model refinement in Isabelle/HOL with code verification using I/O specifications. Their case studies include a simple authentication protocol. We follow similar steps to extract I/O specifications from Tamarin models,
but do this generically and automatically for a large class of protocol models. In contrast, these steps must be repeated in Igloo for each protocol, which requires Isabelle/HOL expertise. Moreover, our way of relating terms and bytestrings is more flexible and realistic than theirs, which assumes an injective function from bytestrings to terms.
Penninckx et al.~\cite{Penninckx15} introduced I/O specifications for verifying programs' I/O behaviors, but they did not propose a method for verifying global system properties. 
}

\mypar{Message parsing} 
Mödersheim and Katsoris~\cite{DBLP:conf/csfw/ModersheimK14} show that an abstract symbolic model using message formats soundly abstracts a more concrete model, which includes associative message concatenation, variable and fixed length fields, and tags. \new{Their result holds for protocols whose formats are uniquely parseable and image-disjoint and they give algorithms to check these conditions. This work has inspired our use of bytestring algebras and our decomposition of the pattern requirement into image disjointness and pattern injectivity. Their  focus is on abstraction soundness, whereas ours is on code verification.}
EverParse~\cite{DBLP:conf/uss/RamananandroDFS19} is a framework to generate provably secure (i.e., injective and surjective) parsers and serializers for authenticated message formats. 

% !TEX root = main.tex
\section{Conclusion}
\label{sec:conclusions}
We have proposed a novel approach to cryptographic protocol
verification that soundly bridges abstract design models, specified as
multiset rewriting systems, with code-level
specifications. This allows us to leverage the
automation and proof techniques available in Tamarin for design
verification together with state-of-the-art program verifiers
to obtain security guarantees for protocol implementations.
Our approach is general, compatible with different code verification tools,
and applicable to real-world protocols.

There are several exciting directions for future work.
Our framework is based on a Dolev-Yao attacker.
It would be interesting to relax this assumption and allow the concrete attacker to 
perform additional (non-Dolev-Yao) bytestring operations.
Moreover, we currently only support trace properties. 
Some security properties such as privacy properties can be formalized as observational equivalences, which Tamarin also supports~\cite{10.1145/2810103.2813662}.
How to obtain implementation-level equivalence guarantees from a symbolic model is an open problem.

\subsubsection*{Acknowledgements}
We thank the Werner Siemens-Stiftung (WSS) for their generous support of this project.
This work received funding from the France 2030 program managed by the French National Research Agency under grant agreement No.\ ANR-22-PECY-0006. We would also like to thank the anonymous reviewers for their helpful feedback and Sofia Giampietro for her useful comments on an earlier draft of this paper.

%\clearpage

\bibliographystyle{IEEEtran}
\bibliography{references}

\appendices
% !TEX root = main.tex

%%%%%%%%%%%%%%%%%%%%%%%%%%%%%%%%%%%%%%%%%%%%%%%%%%%%%%%%%%%%
\section{Protocol format details}
\label{app:protocol-format-details}
%%%%%%%%%%%%%%%%%%%%%%%%%%%%%%%%%%%%%%%%%%%%%%%%%%%%%%%%%%%%

We require that the rules' labels only contain facts from $\sigact$, i.e. for all $\ll\rew{\aa}\rr\in \msrsys$,
$\facts(\aa)\subseteq \sigact$. Here, $\facts(s)$ denotes the set of fact symbols that occur in 
a multiset of facts $s$. 
We assume that environment rules do not directly use the agents' internal states.
Namely, for all $\ll\rew{\aa}\rr\in\Renv$,
$\facts(\ll\cup\rr)\subseteq \sigenv$.
\new{In addition, any rule in $\Renv$ producing a $\Setupf_i$ fact must not produce any other facts on its right-hand side and its label must be empty.}

We also require that rules for role $i$ only use $i$'s state and may consume facts in $\sigin$ 
(but must not produce them) and may produce facts in $\sigout$ (but must not consume them).
More formally, we require, for all $\ll\rew{\aa}\rr\in  \msrsys_i$,
$\facts(\ll) \subseteq \sigstate{i} \cup \sigin$ and $\facts(\rr) \subseteq \sigstate{i} \cup \sigout$.

Finally, we require that for a protocol rule $\ll\rew{\aa}\rr\in \msrsys_i$,
at least one state fact appears in $\rr$, and that there is a $k_i\geq 1$
such that the tuple of the first $k_i$ arguments of all state facts in
$\ll\rew{\aa}\rr$ is the same.  Intuitively, these first $k_i$ arguments
represent parameters of the run of the protocol role: their value
remains fixed throughout the role's execution. They can be,
for instance, the agent's identity, a thread identifier, or any value
that is assumed to be known beforehand by the agent.  We assume that the
first one of these arguments, which we call $\rid$, is of type $\freshtype$.
It is intended to represent a thread identifier.  
For readability, we will usually group these $k_i$ initial parameters as a tuple, denoted by $\vect{init}$.

\new{In summary, these formatting rules only impose very mild constraints on Tamarin models. All protocol models in the Tamarin distribution could easily be adapted to conform to these constraints with only minor modifications. The main changes would be related to providing a separate setup rule for each role $i$ and keeping the arguments of the resulting $\Setupf_i$ fact as the initial arguments of all state facts as described above.}

%%%%%%%%%%%%%%%%%%%%%%%%%%%%%%%%%%%%%%%%%%%%%%%%%%%%%%%%%%%%
\section{Formal definition of parallel compositions}
\label{app:composition-defs}
%%%%%%%%%%%%%%%%%%%%%%%%%%%%%%%%%%%%%%%%%%%%%%%%%%%%%%%%%%%%

\begin{newtext}
We define the (indexed) interleaving parallel composition $|||$ and the (binary) synchronizing parallel composition $\sync{\Lambda}$. These compose their argument MSR systems into a labeled transition system. 

The (indexed) interleaving parallel composition $|||_{i,\rid} \msrsys_i(\rid)$ has as states 
%families of multisets of state facts of the form $f = \{S_{i, \rid}\}_{i, \rid}$ and 
functions $f$ that map each pair $(i, rid)$ to a multiset of state facts and 
transitions $f \trans{\aa} f'$ if, for some $i$ and $rid$, $f(i, rid) \redms{\msrsys_i(\rid)}{\aa} S'$ and $f' = f[(i, \rid) \mapsto S']$, where $f'$ agrees with $f$ except that it maps $(i, \rid)$ to $S'$.

The synchronized composed system $\msrsys_1 \sync{\Lambda} \msrsys_2$ has states of the form $(S_1, S_2)$ and transitions $(S_1, S_2) \trans{\aa} (S_1', S_2')$ if either 
\begin{enumerate}
\item[(i)] $\aa=\emptyM$ and there is an  $\aa' \in_\Eq \Lambda$ such that $S_1\redms{\msrsys_1}{\aa'} S_1'$ and $S_2 \redms{\msrsys_2}{\aa'} S_2'$, 
\item[(ii)] $\aa \notin_\Eq \Lambda$, $S_1\redms{\msrsys_1}{\aa} S_1'$ and $S_2' = S_2$, or 
\item[(iii)] $\aa \notin_\Eq \Lambda$, $S_2 \redms{\msrsys_2}{\aa} S_2'$ and $S_1' = S_1$. 
\end{enumerate}
Here, $\aa' \in_\Eq \Lambda$ means that $\aa' =_\Eq \aa$ for some $\aa \in \Lambda$.

\end{newtext}

%%%%%%%%%%%%%%%%%%%%%%%%%%%%%%%%%%%%%%%%%%%%%%%%%%%%%%%%%%%%
\section{Extended verifier assumption}
\label{app:verifier-assumption}
%%%%%%%%%%%%%%%%%%%%%%%%%%%%%%%%%%%%%%%%%%%%%%%%%%%%%%%%%%%%

\begin{newtext}
Following Penninckx et al.~\cite{Penninckx15}, we sketch an example of semantic assumptions on  programs and Hoare triples that make our extended verifier assumption (Assumption~\ref{assm:verifier-assumption}) hold semantically. The soundness of the program logic itself, i.e., that a provable Hoare triple $\vdash_{\alpha} \{\phi\} \; \prog \; \{\psi\}$ implies its semantic validity $\models_{\alpha} \{\phi\} \; \prog \; \{\psi\}$, is a separate topic beyond the scope of our paper.

We assume that the programming language semantics makes judgements of the form $s, \prog \Downarrow s', \tau$, meaning that the program $\prog$ when started in state $s$ terminates in state $s'$ and produces the I/O trace $\tau$. This semantics induces a LTS $\CC$, whose set of traces for a given starting state $s_0$ is thus 
\[
\traces(\CC) = \{ \tau \mid \exists s'.\, s_0, \prog \Downarrow s', \tau \}.
\]

I/O specifications $\phi$ have both a static and a dynamic semantics, which are defined in terms of \emph{(I/O) heaps}. Heaps are multisets of (ground) I/O permission and token predicates. The static semantics, written $h \models \phi$, intuitively means that $h$ contains (at least) the I/O permissions and tokens prescribed by $\phi$. The dynamic semantics defines the set of traces allowed by an I/O specification $\phi$ to contain those traces that are possible in all heap models of $\phi$, i.e., 
\[
\traces(\phi) = \{ \tau \mid \forall h.\; h \models \phi \Longrightarrow h\trans{\tau}\},
\]
where $h\trans{\tau}$ intuitively means that it is possible to produce a trace~$\tau$ by successively pushing the tokens in $h$ through the I/O permissions in $h$ (and thus consume these permissions).

The semantics of Hoare triples of the form $\{\phi\}\, \prog\, \{\mathsf{true}\}$ with respect to an abstraction function $\alpha$ from program-level I/O operations to abstract I/O permissions is given by
\begin{align*}
& \models_{\alpha} \{\phi\} \; \prog \; \{\mathsf{true}\} \\
& \stackrel{\text{def}}{\Longleftrightarrow} \forall s, \tau, s', h.\; s, \prog \Downarrow s', \tau \,\wedge\, h \models \phi \Longrightarrow h \trans{\alpha(\tau)}  \\
& \Longleftrightarrow \alpha(\traces(\CC)) \subseteq \traces(\phi) \\
& \Longleftrightarrow \alpha(\CC) \tracePre \phi.
\end{align*}
Here, $\alpha(\CC)$ denotes the LTS $\CC$ whose transition labels are renamed under $\alpha$. The final equivalence uses the equality $\alpha(\traces(\CC)) = \traces(\alpha(\CC))$. 

Penninckx et al.~\cite{Penninckx15}'s semantics is formulated for the case where $\alpha$ is the identity function. Both our and their semantics of Hoare triples also include non-trivial post-conditions, which we omit here to simplify the presentation.
\end{newtext}

%%%%%%%%%%%%%%%%%%%%%%%%%%%%%%%%%%%%%%%%%%%%%%%%%%%%%%%%%%%%
\section{WireGuard message construction}
\label{app:wireguard-message}
%%%%%%%%%%%%%%%%%%%%%%%%%%%%%%%%%%%%%%%%%%%%%%%%%%%%%%%%%%%%

The details of the construction of the ciphertexts in the messages for the WireGuard protocol are displayed in Figure~\ref{figure:wireguard-message},
where:
\begin{itemize}
\item $\aead$ is an AEAD algorithm, $\h$ is a hash function, $\kdf_1$, $\kdf_2$, $\kdf_3$ are key derivation functions;
\item \newcr{in $\aead(k, n, p, a)$, $k$ is the key, $n$ a nonce or counter, $p$ the payload (both authenticated and encrypted), and $a$ the additional data (authenticated, but not encrypted);}
\item $(k_I, pk_I)$ and $(k_R, pk_R)$ are the initiator and responder's long-term
private and public keys;
\item $(ek_I, epk_I=g^{ek_I})$ and $(ek_R, epk_R=g^{ek_R})$ are the initiator and 
responder's ephemeral private and public Diffie-Hellman keys, $g$ being the group 
generator; 
\item $\textit{info}$ and $\textit{prologue}$ are fixed strings containing 
protocol information (version, etc.); and
\item $psk$ is an optional pre-shared key -- if unused it is set to a string of zeros.
\end{itemize}

\begin{figure}[!h]
$c_{pk_I}$, $c_{ts}$, $c_{empty}$ are computed as follows.
\[\begin{array}{ll}
    c_0 =& \h(\textit{info})\\
    h_0 =& \h(\tuple{c_0, \textit{prologue}})\\
    h_1 =& \h(\tuple{h_0, pk_R})\\
    c_1 =& \kdf_1(\tuple{c_0, epk_I})\\
    h_2 =& \h(\tuple{h_1, epk_I})\\
    c_2 =& \kdf_1(\tuple{c_1, g^{k_R * ek_I}})\\
    k_1 =& \kdf_2(\tuple{c_1, g^{k_R * ek_I}})\\
c_{pk_I}=& \aead(k_1, 0, pk_I, h_2)\\
    h_3 =& \h(\tuple{h_2, c_{pk_I}})\\
    c_3 =& \kdf_1(\tuple{c_2, g^{k_R * k_I}})\\
    k_2 =& \kdf_2(\tuple{c_2, g^{k_R * k_I}})\\
 c_{ts} =& \aead(k_2, 0, \textit{timestamp}, h_3)\\
    h_4 =& \h(\tuple{h_3, c_{ts}})\\
    c_4 =& \kdf_1(\tuple{c_3,epk_R})\\
    h_5 =& \h(\tuple{h_4,epk_R})\\
    c_5 =& \kdf_1(\tuple{c_4, g^{ek_R * ek_I}})\\
    c_6 =& \kdf_1(\tuple{c_5, g^{ek_R * k_I}})\\
    c_7 =& \kdf_1(\tuple{c_6, psk})\\
    \pi =& \kdf_2(\tuple{c_6, psk})\\
    k_3 =& \kdf_3(\tuple{c_6, psk})\\
    h_6 =& \h(\tuple{h_5, \pi})\\
c_{empty}=&\aead(k_3, 0, "", h_6)\\
\end{array}
\]
$k_{IR}$ and $k_{RI}$ are the resulting exchanged keys: $k_{IR} = \kdf_1(c_7)$, $k_{RI} = \kdf_2(c_7)$.
\caption{The WireGuard message construction}
\label{figure:wireguard-message}
\end{figure}

\ifproofs{%
\newgeometry{left=4cm, right=4cm, bottom=4cm, top=4cm}
\clearpage
\onecolumn
% !TEX root = main.tex
%

\newcommand{\Setup}{\mathit{Setup}}

\newcommand{\eg}{e.g.\xspace}
\newcommand{\ie}{i.e.\xspace}

\newcommand{\R}{\mathscr{R}}
\newcommand{\I}{\mathcal{I}}
\newcommand{\Eio}{\E^{\mathsf{io}}}

\newcommand{\sigkeys}{\sigenv}
\newcommand{\sigev}{\sigact}

\newcommand{\initv}{\vect{init}}

\renewcommand{\tamtraces}{\traces''}

\section{Proofs}
\label{app:proofs}

In the following proofs, we use a slightly different formalisation compared to the description in Section~\ref{sec:theory}.
In that section, we formulated all transformation steps and lemmas in terms
of multiset rewriting (MSR) systems, to make them more legible.
In contrast, in this appendix, we formulate them in terms of the labelled transition systems (LTS) induced by the MSR systems.
This does not change the results, as these are two equivalent descriptions of the same 
system; we simply use the LTS formalism here so that the steps follow more closely the
general Igloo methodology from~\cite{Igloo}.

\subsection{Definitions and notations}
In this section we recall standard definitions for multisets,
as well as the usual Tamarin model and semantics for multiset rewriting,
that were introduced in Section~\ref{sec:background}.

\begin{definition}[Multisets]
A multiset constructed over a set $S$ is a mapping $M: S \rightarrow \N$,
where $M(x)$ represents the number of copies of $x$ in $M$.
We denote $\setM(M) = \{x\in S|M(x) > 0\}$ the set associated to $M$.
We write $x\in M$ if $M(x) > 0$.
We use the notation $\multileft \cdot \multiright$ to define a multiset explicitly: \eg $\multileft a,a,b\multiright$
denotes the multiset containing two copies of $a$ and one of $b$.
In rewriting rules, we alternatively use the notation $[\cdot]$ for the same purpose.

$\cupM$, $\setminusM$ and $\subM$ respectively denote multiset union, difference, and inclusion.
That is, if $A, B$ are two multisets, then for any element $x$,
$(A \cupM B)(x) = A(x)+B(x)$, and $(A\setminusM B)(x) = \max(A(x)-B(x),0)$; and $A \subM B$ when for all $x$, $A(x) \leq B(x)$.

In addition, for a set $S$ and a multiset $M$,
we denote $M\capM S$ the multiset containing only the elements from $M$ (with the same number of copies) that are also in $S$,
\ie $(M \capM S)(x) = M(x)$ if $x\in S$, and $0$ otherwise.
\end{definition}

\medskip

\begin{definition}[Terms]
Consider sets of names $\freshtype$, $\pubtype$, $\vartype$,
representing respectively fresh values, public values, and variables.
Let $\Sigma$ be a finite signature of function symbols, given with their arity.

The set of \emph{terms} constructed from $\Sigma$, $\freshtype$, $\pubtype$, and $\vartype$,
denoted by $\Terms=\Terms_\Sigma(\freshtype\cup\pubtype\vartype)$, is the smallest set containing $\freshtype \cup\pubtype\cup\vartype$ and closed under application of functions in $\Sigma$.

We also let $\GT=\GTerms$ denote the set of \emph{ground terms}, \ie terms without variables.
\end{definition}

\medskip

\begin{definition}[Equational theory]
An equational theory $\Eq$ on $\Terms$ is a set of equations of the form $l = r$,
where $l, r\in\Terms$ are terms without fresh names from $\freshtype$.
We denote $=_\Eq$ equality modulo $\Eq$, \ie the smallest equivalence relation on $\Terms$ that contains $\Eq$, and is stable by 
context and substitution.
\end{definition}

\medskip

\begin{definition}[Facts]
Consider a signature $\sigfacts=\siglin\uplus\sigper$ of \emph{fact symbols} with their arity, partitioned into \emph{linear} facts $\siglin$ and \emph{persistent} facts $\sigper$; as well as a set of terms $\Terms$ with an equational theory $\Eq$.

We write $\fac=\{f(t_1,\dots,t_k)\;|\; f\in \sigfacts \text{ with arity } k, t_1,\dots,t_k\in\Terms\}$ the set of facts 
instantiated with terms, partitioned into $\faclin\uplus\facper$ as expected.

For a set or multiset $s$ of facts, we denote by $\facts(s)$ the subset of $\sigfacts$ containing all fact symbols that
occur in $s$.
\end{definition}

\medskip

\begin{definition}[Multiset rewriting system]
Consider sets of facts $\fac$ and terms $\Terms$ with an equational theory $\Eq$ as defined earlier.
A multiset rewriting system on these terms and facts is a set $\msrsys$ of \emph{rewriting rules} of the form $\ll \rew{\aa} \rr$, where $\ll, \aa, \rr$ are multisets of facts in $\fac$.
We call $\ll$ the premises of the rule, $\rr$ the conclusions, and $\aa$ the events.

The associated transition relation $\redms{\msrsys,\Eq}{\cdot}$ is defined by the rule
\[
 \inferrule
  {\ll \rew{\aa} \rr \in \msrsys\\
   \theta \text{ ground inst. of vars. in } \ll, \aa, \rr\\
   \ll' \rew{\aa'} \rr' =_\Eq (\ll \rew{\aa} \rr)\theta\\
   \ll' \capM \faclin \subM S\\
   \setM(\ll') \cap \facper \subseteq \setM(S)}
  {S \redms{\msrsys,\Eq}{\aa'} S\setminusM (\ll'\capM \faclin) \cupM \rr'}.\]
\end{definition}

We will from now on only consider fact signatures containing reserved fact symbols $\K\in\sigper$, and $\Frf,\Inf,\Outf\in\siglin$.

\medskip

\begin{definition}[Message deduction rules]
We let $\MD_\Sigma$ denote the set of message deduction rules for $\Sigma$:

\[\begin{array}{r@{\;}c@{\;}l}
{[\Outf(x)]}              &\rew{[]}          & {[\K(x)]}\\
{[\K(x)]}                 &\rew{[\knowsf(x)]}& {[\Inf(pk)]}\\
{[]}                      &\rew{[]}          & {[\K(x\in\pubtype)]}\\
{[\Frf(x\in\freshtype)]}  &\rew{[]}          & {[\K(x)]}\\
{[\K(x_1),\dots,\K(x_k)]} &\rew{[]}          & {[\K(f(x_1,\dots,x_k))]} \qquad \text{for } f\in\Sigma \text{ with arity } k\\
\end{array}
\]
\end{definition}

\medskip

\begin{definition}[Freshness rule]
We let $\FreshR$ denote the freshness generation rule:
\[\begin{array}{r@{\;}c@{\;}l}
{[]} & \rew{[\Fre(x)]} & [\Frf(x \in \freshtype)]\\
\end{array}
\]
\end{definition}

\medskip

\begin{definition}[Traces]
The set of traces of a multiset rewriting system $\msrsys$ (for facts $\fac$, terms $\Terms$, and equations $\Eq$) is
\[\begin{array}{l@{}l}
\traces(\msrsys) = \{\tuple{\aa_1,\dots,\aa_m}\;|\;&\exists s_1,\dots,s_m\text{ ground multisets of facts}.\;\\
&\quad \emptyset \redms{\msrsys,\Eq}{\aa_1} s_1 \redms{\msrsys,\Eq}{\aa_2}\dots\redms{\msrsys,\Eq}{\aa_m} s_m\}.
\end{array}\]

The set of filtered traces, without empty labels, is
\[\tamtracesp(\msrsys)=\{\tuple{\aa_i}_{1\leq i \leq m, \aa_i\neq \emptyset} | \tuple{\aa_1,\dots,\aa_m}\in \traces(\msrsys)\}.\]

The set of Tamarin traces, additionally removing collisions in the random generation, is
\[\tamtraces(\msrsys) = \tamtracesp(\msrsys) \setminus \coll,\]
where 
$
\coll = \{\tuple{a_1,\ldots,a_m} \mid \exists i,j.\; i \neq j \land \Frf(n)\in a_i\capM a_j\}.
$
\end{definition}

\medskip

Note that we use the notation $\tuple{x_1,\dots,x_n}$, or alternatively $\tuple{x_i}_{1\leq i \leq n}$, to denote 
the ordered sequence containing the elements $x_1, \dots, x_n$.

\bigskip

A common technique to condition rewrite rules to certain boolean combinations of equalities between messages
is to use equality restrictions. 

\begin{definition}[Equality restrictions]
Assume from now on a set $\faceq\subseteq\siglin$ of fact symbols used to record equalities
(e.g. $\mathsf{Eq}(x,y)$, $\mathsf{NotEq}(x,y)$), that occur only as actions in rules, i.e. in $\aa$ for a rule $\ll \rew{\aa} \rr$.
Consider a mapping $\feq$ from each fact symbol $f\in\faceq$ of arity $k$ to a formula $\feq(f)$ of the form
$\forall x_1, \dots, x_k.\; \phi$ where $\phi$ is a boolean combination of equalities (modulo $\Eq$) between variables $x_1, \dots, x_k$.
Typically, $\feq(\mathsf{Eq}) = \forall x, y.\; x =_\Eq y$, and $\feq(\mathsf{NotEq}) = \forall x, y.\; x \neq_\Eq y$.

The equality restriction associated to $\feq$ is the set of traces
\[
\req = \{\tuple{a_1,\ldots,a_m} \mid \forall i.\; \forall f/k\in\faceq.\;
\forall t_1, \dots, t_k.\; f(t_1,\dots, t_k)\in a_i \Rightarrow \feq(f)(t_1,\dots, t_k)\}.
\]
(abusing notations, we apply $\feq(f)$ to terms to signify instantiating the universally quantified variables with those terms).

Adding this restriction, Tamarin will prove properties of the restricted set of traces
\[\tamtraceseq(\msrsys) = \tamtraces(\msrsys)\cap\req.\]
\end{definition}

\bigskip
We admit the following (easily proved) property, stating that this usual encoding of equalities behaves as expected,
i.e. is equivalent to enforcing the equality conditions at each transition step.

\begin{proposition}
\label{prop:eq-semantics}
Consider the modified semantics $\redmseq{\msrsys,\Eq}{\cdot}$ for the multiset rewriting system $\msrsys$,
that is defined by the rule
\[
 \inferrule
  {\ll \rew{\aa} \rr \in \msrsys\\
   \theta \text{ ground inst. of vars. in } \ll, \aa, \rr\\
   \ll' \rew{\aa'} \rr' =_\Eq (\ll \rew{\aa} \rr)\theta\\
   \ll' \capM \faclin \subM S\\
   \Phieq\\
      \setM(\ll') \cap \facper \subseteq \setM(S)}
  {S \redmseq{\msrsys,\Eq}{\aa'} S\setminusM (\ll'\capM \faclin) \cupM \rr'}\]
  where
  \[\Phieq = \bigwedge_{f\in\faceq \wedge f(t_1,\dots,t_k)\in\aa'}\feq(f)(t_1,\dots,t_k).\]
Let $\tamtracespeq(\msrsys)$ and $\tamtraceseqq(\msrsys)$ be the associated set of traces,
respectively filtered and filtered without collisions, defined as for the normal semantics.
We have
\[\tamtracespeq(\msrsys) = \tamtracesp(\msrsys)\cap\req \quad\text{and}\quad \tamtraceseqq(\msrsys) = \tamtraceseq(\msrsys).\]
\end{proposition}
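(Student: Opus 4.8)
The plan is to reduce the statement to a single identity at the level of \emph{full} (unfiltered) runs, from which both displayed equalities follow by routine set manipulation. The crucial observation is that the two transition relations $\redms{\msrsys,\Eq}{\cdot}$ and $\redmseq{\msrsys,\Eq}{\cdot}$ differ only by the added premise $\Phieq$, and that this premise depends solely on the transition label $\aa'$: it is precisely the conjunction asserting that every equality fact $f(t_1,\dots,t_k) \in \aa'$ with $f \in \faceq$ satisfies its associated formula $\feq(f)(t_1,\dots,t_k)$. Hence a standard transition $S \redms{\msrsys,\Eq}{\aa'} S'$ is \emph{also} a modified transition exactly when $\Phieq$ holds, and this is the very condition that the restriction $\req$ imposes at the position labelled $\aa'$.

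First I would prove the full-trace identity
\[
  \traces_{\mathsf{eq}}(\msrsys) = \traces(\msrsys) \cap \req,
\]
where $\traces_{\mathsf{eq}}(\msrsys)$ denotes the unfiltered traces of $\redmseq{\msrsys,\Eq}{\cdot}$, by a straightforward induction on the number of transitions, matching runs of one relation with runs of the other. For $\subseteq$, every step of a modified run is a standard step satisfying $\Phieq$, so the resulting sequence is a standard trace and satisfies the $\req$-constraint at each position. For $\supseteq$, a standard run whose full trace lies in $\req$ has, at every position $i$, all equality facts in $\aa_i$ satisfying their formulas; this is exactly $\Phieq$ for that step, so each transition is legal under the modified relation and the run is a modified run. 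The induction is immediate since both relations have identical premises apart from $\Phieq$, which is determined locally by the current label, and both runs start from the same initial state $\emptyM$.

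Next I would transfer this to the filtered level. Since an empty label $\aa_i = \emptyM$ contains no facts, it trivially satisfies the $\req$-constraint, so membership in $\req$ is invariant under inserting or deleting empty-label positions. Consequently the filtering operation commutes with intersecting with $\req$, giving
\[
  \tamtracespeq(\msrsys) = \tamtracesp(\msrsys) \cap \req,
\]
which is the first claim. The second claim then follows purely by set algebra: writing $\tamtraceseqq(\msrsys) = \tamtracespeq(\msrsys) \setminus \coll$ and using that removing $\coll$ (an intersection with its complement) commutes with intersecting with $\req$, we obtain
\begin{align*}
  \tamtraceseqq(\msrsys) &= (\tamtracesp(\msrsys) \cap \req) \setminus \coll = (\tamtracesp(\msrsys) \setminus \coll) \cap \req \\
  &= \tamtraces(\msrsys) \cap \req = \tamtraceseq(\msrsys).
\end{align*}

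The proof has no deep obstacle, matching the paper's remark that the property is easily proved; the only care required is bookkeeping. Specifically, one must check that the label $\aa'$ appearing in a transition is literally the multiset occupying the corresponding trace position, so that the \emph{step-local} premise $\Phieq$ coincides exactly with the \emph{position-local} constraint of $\req$, and that the filtering map commutes with $\cap\,\req$ (which relies on empty labels carrying no equality facts). These are the two points I would write out in full, with everything else being a direct unfolding of the definitions.
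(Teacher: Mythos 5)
Your proof is correct. Note that the paper itself offers no proof of this proposition: it is explicitly ``admitted'' as easily proved, so there is no argument of the authors' to compare against. Your write-up supplies exactly the reasoning that justifies that claim, and it correctly isolates the two points where care is needed: first, that the extra premise $\Phieq$ of the modified rule depends only on the transition label $\aa'$ and coincides with the position-local condition that $\req$ imposes, which gives the unfiltered identity $\traces_{\mathsf{eq}}(\msrsys)=\traces(\msrsys)\cap\req$ in both directions (the $\supseteq$ direction works precisely because $\Phieq$ is state-independent, so any standard run of a trace in $\req$ is verbatim a modified run); and second, that deleting empty labels commutes with intersecting with $\req$ because an empty label contains no facts and hence vacuously satisfies the restriction. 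The final step for $\tamtraceseqq(\msrsys)=\tamtraceseq(\msrsys)$ is, as you say, pure set algebra using that $\setminus\,\coll$ and $\cap\,\req$ commute. This is a complete and correct filling-in of a proof the paper leaves implicit.
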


\bigskip

We prove the refinement results (Lemmas~\ref{lem:interface-model} and~\ref{lem:decomposition})
on the modified semantics for multiset rewriting, which enforces equality checks at each step rather than
in the end as a trace restriction. That is, we prove that the traces of the composition of component systems
are included in the traces of the original multiset rewriting system \emph{for the modified semantics}.

This is more convenient, as it allows us, when writing component systems as guarded event systems, to incorporate the equality 
conditions directly in the guard of each event. That is closer to the way these checks are performed in the implementation,
and therefore helps produce specifications that will be easier to verify on the code.

Therefore, in the end, the global soundness result we obtain states that the traces of the
parallel composition of each role's implementation are included in the traces of the original MSR system
\emph{for the modified semantics}.

Thanks to Proposition~\ref{prop:eq-semantics}, these are the same as the traces for the usual semantics,
with the equality restriction, which is the set of traces considered by Tamarin.

\bigskip
\subsection{Assumptions}
We now recall in detail the formatting assumptions introduced in Section~\ref{sec:formatting}.

\medskip

We fix sets of names $\pubtype$, $\freshtype$, $\vartype$, a signature $\Sigma$,
and the set of terms $\Terms$ constructed on them, with an equational theory $\Eq$.
Let us also fix $n\geq 1$, the number of roles in the system we consider.

\medskip

We consider a fact signature of the form
\[\sigfacts = \sigev \udis \sigkeys \udis (\bigudis_{1\leq i \leq n} \sigstate{i})\]
for some arbitrary (but disjoint) sets of facts symbols $\sigkeys, \sigev, \sigstate{i}$ for $1\leq i \leq n$,
intended to represent respectively environment facts, the action facts, and each role's 
internal states.
Note that each of these sets may contain linear and persistent facts.
We additionally assume that $\sigenv$ contains subsets
$\sigin$, $\sigout$ which we call input and output fact symbols, with $\Frf, \Inf\in\sigin$, $\Outf\in\sigout$, and $\K\in\sigenv\setminus(\sigin\cup\sigout)$.
Furthermore, we assume that there is an initialization fact symbol $\Setupf_i \in \sigin$ for each protocol role $i$. 

\medskip

We then consider a multiset rewriting system $\msrsys$ of the form
\[
\msrsys = \Renv \udis (\bigudis_{1 \leq i \leq n} \msrsys_i).
\]
where $\msrsys_i$ and $\Renv$ are arbitrary (disjoint) sets of rules intended to represent respectively each role
and environment rules, and $\MD_\Sigma\cup\{\FreshR\}\subseteq\Renv$.

\medskip

We assume that for all rule $\ll\rew{\aa}\rr\in\msrsys$, $\facts(\aa)\subseteq\sigact$.
For all environment rule $\ll\rew{\aa}\rr\in\Renv$,
we assume that $\facts(\ll\cup\rr)\subseteq \sigenv$.
In addition, the fact $\Setupf_i$ is only allowed to occur as the only fact produced by rules in $\Renv$.
For all $i$, for all $\ll\rew{\aa}\rr\in \msrsys_i$, we assume:
\begin{itemize}
\item $\facts(\ll)\subseteq\sigin\cup\sigstate{i}$;
\item $\facts(\rr)\subseteq\sigout\cup\sigstate{i}$;
\item at least one state fact appears in $\rr$;
\item the first $k_i\geq 1$ arguments of all state facts in $\ll\rew{\aa}\rr$,
as well as the $\Setupf_i$ fact,
are reserved for role $i$'s parameters, i.e. have the same value in all
state facts and all occurrences of $\Setupf_i$ in the rule.
In addition, the first of these $k$ parameters must be a thread identifier, i.e. a value of
$\freshtype$ called $\rid$.
These $k$ parameters are thus never changed once the $\Setupf_i$ fact is produced to start a 
run of role $i$.
For readability, we will group all these initial parameters as a tuple, denoted by $\initv$.

\end{itemize}

\medskip

For each $\rid\in\freshtype$, and each $i$,
we denote $\msrsys_{i,\rid}$ the set of rules in $\msrsys_i$ 
where the first argument of all state facts is instantiated with $\rid$.

\subsection{Step 1: Interface model}

As stated in Section~\ref{sec:theory},
our goal will be to separate the multiset rewriting system into several transition systems, 
representing each component.
We first introduce interfaces between each component (i.e., each role, and the environment),
to make this separation easier.
These interfaces are additional facts, which we call \emph{buffer} facts.
We add buffers for all operations we wish to consider as I/O, \ie the input and output facts. These take the form of additional rules, called \emph{I/O rules},
that transform each such fact into a buffered version (for $\sigin$) or vice versa (for $\sigout$).

\medskip
We first extend the fact signature by adding, for each input or output fact $F$, 
a ``buffered'' copy $F_{i}$ for each role $i$. We define
\begin{align*}
\sigbuf{i}  & = \{F_{i} \mid F\in\sigin \cup \sigout \}\\
\sigrole{i} & = \sigstate{i} \cup \sigbuf{i} \\
\sigfacts'       & = \sigact \udis \sigenv \udis (\bigudis_{i} \sigrole{i}).
\end{align*}

We then replace the facts used by the protocol rules as follows.
Let $\sigin^-$ be the set of input facts without the role initialisation facts $\Setupf_i$.
For each role $i$, let $\msrsys'_i$ be the set of rules obtained by replacing, in all
rules in $\msrsys_i$, each fact $F(t_1,\dots,t_k)$ (with $F\in\sigin^-\cup\sigout$) with
$F_i(\rid,t_1,\dots,t_k)$, where $\rid$ is the thread id parameter present in the state facts 
in the rule.

We also introduce the set $\Rbuf$ of \emph{I/O rules}, which translate between input or output 
facts and their buffered versions. The set $\Rbuf$ contains the following rules, for each role $i$. 
\begin{align*}
[F(x_1,\dots,x_k)] \xrightarrow{[]} [F_{i}(\rid, x_1,\dots,x_k)]  &&& \text{for $F \in \sigin^-$} \\
[G_{i}(\rid, x_1,\dots,x_k)] \xrightarrow{[]} [G(x_1,\dots,x_k)]  &&& \text{for $G \in \sigout$}
\end{align*}
We also count the role setup rules, which 
generate the $\Setupf_i$ facts, as I/O rules. Hence, we move them from the set $\Renv$ 
to $\Rbuf$, calling the set of remaining environment rules $\Renv^-$.

We then consider the system
\[\Rintf = \Renv^- \udis \Rbuf \udis (\bigudis_{1 \leq i \leq n} \msrsys'_{i}).\]

\bigskip

We can now prove the following lemma, which corresponds to Lemma~\ref{lem:interface-model}
in Section~\ref{ssec:interface-model} (formulated here for the semantics where equality checks are
performed in each transition, rather than as a restriction, which is more convenient later on).
\begin{lemma}
\[\tamtracespeq(\Rintf) \subseteq \tamtracespeq(\msrsys)\]
\end{lemma}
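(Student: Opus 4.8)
The plan is to prove this trace inclusion by a forward simulation of (the LTS induced by) $\Rintf$ by that of $\msrsys$, following the appendix's LTS reformulation. The central device is a \emph{buffer-flushing} abstraction $\rho$ on ground facts that undoes the buffering introduced by the interface construction: it maps each buffered fact $F_i(\rid, \vect{t})$, for $F \in \sigin^- \cup \sigout$, to the corresponding unbuffered fact $F(\vect{t})$, and acts as the identity on every other fact (those with symbols in $\sigact$, $\sigenv$, $\sigstate{i}$, and on the setup facts $\Setupf_i$, which the construction does \emph{not} buffer). I would lift $\rho$ to a multiset homomorphism on states, observing that each buffered copy $F_i$ inherits the linear/persistent status of $F$, so that $\rho$ preserves the partition $\faclin \uplus \facper$, commutes with $\cupM$ and with $\setminusM$ on included submultisets, and --- since it never alters terms --- commutes with substitution and with $=_\Eq$. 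The simulation relation would then be $S_{\msrsys} = \rho(S_{\Rintf})$, which holds initially because both systems start from $\emptyM$ and $\rho(\emptyM) = \emptyM$.

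Next I would classify the rules of $\Rintf = \Renv^- \udis \Rbuf \udis (\bigudis_i \msrsys'_i)$ and show each transition is matched. The buffer-translation rules in $\Rbuf$ of the form $[F(\vect{x})] \rew{\emptyM} [F_i(\rid, \vect{x})]$ and $[G_i(\rid, \vect{x})] \rew{\emptyM} [G(\vect{x})]$ become \emph{stuttering} steps: their label is empty and their premise and conclusion have the same $\rho$-image, so $\rho(S) = \rho(S')$ and the abstract state is unchanged. Every remaining rule --- a protocol rule of $\msrsys'_i$, an environment rule of $\Renv^-$, or a relocated-but-unchanged setup rule --- is matched by a single rule of $\msrsys$: for a protocol rule, $\rho$ precisely inverts the renaming that turned $\msrsys_i$ into $\msrsys'_i$, so the $\rho$-image of the fired instance $\ll' \rew{\aa'} \rr'$ is an $=_\Eq$-instance of the original rule in $\msrsys_i$; for environment and setup rules $\rho$ fixes every fact involved, so literally the same rule fires. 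In all cases the action label $\aa'$ is preserved.

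For the non-stuttering steps I would then check that the matched abstract instance is indeed enabled at $\rho(S)$ and produces $\rho(S')$. Since $\rho$ is a multiset homomorphism preserving linearity, the guard $\ll' \capM \faclin \subM S$ together with the persistent-fact inclusion transfers verbatim to $\rho(\ll')$ and $\rho(S)$, and $\rho(S') = \rho(S) \setminusM (\rho(\ll') \capM \faclin) \cupM \rho(\rr')$ is exactly the successor obtained by firing the abstract instance. The equality guard $\Phieq$ of the modified semantics depends only on the action multiset $\aa'$, which lies in $\sigact$ and is fixed by $\rho$, so it holds for the abstract step exactly when it holds for the concrete one. Iterating the simulation along an arbitrary run of $\Rintf$ from $\emptyM$ yields a run of $\msrsys$ from $\emptyM$ with the same labels apart from interspersed empty-labelled stuttering steps; after filtering empty labels the two sequences coincide, which is $\tamtracespeq(\Rintf) \subseteq \tamtracespeq(\msrsys)$.

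The hard part will not be conceptual but will lie in the bookkeeping of the last two paragraphs: verifying carefully that the I/O rules are genuinely invisible under $\rho$ (both empty-labelled \emph{and} abstract-state-preserving), that $\rho$ correctly inverts the buffering renaming between $\msrsys'_i$ and $\msrsys_i$ including the dropped thread-identifier argument $\rid$, and that the homomorphism laws for $\rho$ --- in particular $\rho(A \setminusM B) = \rho(A) \setminusM \rho(B)$ for $B \subM A$ and commutation with $=_\Eq$ and substitution --- are exactly what is needed to transport both the enabledness conditions and the per-step equality guard $\Phieq$ across the abstraction.
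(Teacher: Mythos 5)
Your proposal is correct and follows essentially the same route as the paper's proof: the paper's simulation relation $\R$ (relating $s$ to $s'$ when $s$ is obtained by stripping the role index and the added $\rid$ argument from buffered facts) is exactly your buffer-flushing map $\rho$, and the case analysis is identical --- protocol and environment/setup rules are matched by single steps of $\msrsys$, while the $\Rbuf$ translation rules are empty-labelled stuttering steps that leave the abstract state unchanged. The bookkeeping you flag (multiset homomorphism laws, preservation of linearity, and the fact that $\Phieq$ depends only on the action multiset) is precisely what the paper verifies in its case analysis.
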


\begin{proof}

We prove this inclusion by establishing a refinement, using a simulation relation $\R$.
That is, we show that
\begin{enumerate}[(1)]
\item $(\emptyset,\emptyset)\in \R$;
\item for all states $(s_1,s'_1)\in \R$, for all transition steps $s'_1 \redmseq{\Rintf,\Eq}{a'} s'_2$
there exists a sequence of transitions $s_1 \redmseq{\msrsys,\Eq}{a_1}\dots\redmseq{\msrsys,\Eq}{a_m} s_m$
such that $(s_m,s'_2)\in \R$, and $\tuple{a_i}_{1\leq i\leq m, a_i\neq \emptyset} = \tuple{a'}$.
In other words, there exists a sequence of transitions that reaches a state related to $s'_2$,
and produces a sequence of $m$ actions, among which one is equal to $a'$, while the others are empty.
In the proof we will actually only have $m\in\{0,1\}$.
\end{enumerate}

We use the relation $\R$ such that $(s,s')\in \R$ if and only if
$s$ is the state obtained from $s'$ by removing the indices $i$ from all facts,
as well as the first argument $\rid$ added to buffered facts.
The first point, $(\emptyset,\emptyset)\in\R$, is clear.

Let $(s_1,s'_1)\in \R$, and consider a step $s'_1 \redmseq{\Rintf,\Eq}{a} s'_2$.
We can distinguish several cases for the rule in $\Rintf$ of which it is an instance.

\begin{itemize}
\item \case{if it is a rule $\ll'\rew{\aa'} \rr'\in \msrsys'_{i}$:} the transition is thus
\[s'_1 \redmseq{\Rintf,\Eq}{\aa'\theta} s'_2 = s'_1 \setminusM (\ll'\theta\capM\faclin) \cupM \rr'\theta\]
for some ground $\theta$ such that $\ll'\theta\capM\faclin \subM s'_1$ and $\setM(\ll'\theta)\cap\facper \subseteq \setM(s'_1)$. Let $\rid$ be the name with which $\theta$
instantiates the first argument (thread id) of the rule's state facts and buffered facts.
By construction, there exists a rule $\ll\rew{\aa}\rr\in \msrsys_{i}$ that can be obtained from $\ll'\rew{\aa'}\rr'$ by
replacing each symbol $F_{i}$ with the unlabelled symbol $F$, and removing its first argument (which is $\rid$).
Note that, by assumption, $\aa$ does not contain fact symbols in $\sigenv$,
and thus $\aa=\aa'$.

Since $(s_1,s'_1)\in\R$ and $\ll'\theta\capM\faclin \subM s'_1$, we have $\ll\theta\capM\faclin \subM s_1$.
For similar reasons, we also have $\setM(\ll\theta)\cap\facper \subseteq \setM(s_1)$.
Hence, by applying rule $(\ll\rew{\aa}\rr)\theta$, we have
\[s_1 \redmseq{\msrsys,\Eq}{\aa\theta} s_1 \setminusM (\ll\theta\capM\faclin) \cupM \rr\theta.\]
Since $(s_1,s'_1)\in\R$, we also have $(s_1 \setminusM (\ll\theta\capM\faclin) \cupM \rr\theta, s'_1 \setminusM (\ll'\theta\capM\faclin) \cupM \rr'\theta)\in\R$.

\item \case{if it is a rule $\ll'\rew{\aa'} \rr'\in \Renv^-$:} the transition is thus
\[s'_1 \redmseq{\Rintf,\Eq}{\aa'\theta} s'_2 = s'_1 \setminusM (\ll'\theta\capM\faclin) \cupM \rr'\theta\]
for some ground $\theta$ such that $\ll'\theta\capM\faclin \subM s'_1$ and $\setM(\ll'\theta)\cap\facper \subseteq \setM(s'_1)$.
By construction, $\ll'\rew{\aa'}\rr'\in\Renv$. By the formatting assumptions,
$\facts(\ll'\cup\rr')\subseteq\sigenv$.
Hence, the facts in $\ll'\theta$ and $\rr'\theta$ are not touched by $\R$.
Thus, since $(s_1,s'_1)\in\R$ and $\ll'\theta\capM\faclin \subM s'_1$,
we have $\ll'\theta\capM\faclin \subM s_1$. Similarly, 
$\setM(\ll'\theta)\cap\facper \subseteq \setM(s_1)$.
Thus, by applying rule $\ll'\rew{\aa'}\rr'$, we have
\[s_1 \redmseq{\msrsys,\Eq}{\aa'\theta} s_1 \setminusM (\ll'\theta\capM\faclin) \cupM \rr'\theta.\]
Since $(s_1,s'_1)\in\R$ and $\facts(\ll'\cup\rr')\subseteq\sigenv$, we also have
$(s_1 \setminusM (\ll'\theta\capM\faclin) \cupM \rr'\theta, s'_1 \setminusM (\ll'\theta\capM\faclin) \cupM \rr'\theta)\in\R$.

\item Finally, the case of I/O rules remains.
We write the proof for the case of an input rule in $\sigin$, as the output case is similar,
and the setup case is similar to the previous case ($\Renv^-$).
Assume the transition rule applied is
$[F(x_1,\dots,x_k)]\rew{[]}[F_i(\rid, x_1,\dots,x_k)]$.
In that case the transition is
\[s'_1=s'\cupM\multileft F(M_1,\dots,M_k)\multiright \redmseq{\Rintf,\Eq}{\emptyset} s_2'=s'\cupM\multileft F_i(\rid,M_1,\dots,M_k)\multiright\]
for some $s'$, $M_1,\dots,M_k$, $\rid$.
Since $(s_1,s'_1)\in\R$, there exists $s$ such that $s_1 = s \cupM \multileft F(M_1,\dots,M_k)\multiright$
and $(s,s')\in\R$.
Hence we also have $(s_1,s'_2)\in\R$, and no transition needs to be performed on $s_1$.
\end{itemize}

\end{proof}

\subsection{Step 2: express the MSR system as an event system}

Before we separate our monolithic system into the composition of several component systems,
we switch to the formalism of guarded event systems, which will be more appropriate when 
extracting an I/O specification.

We associate to $\Rintf$ a guarded transition system $\EE=(\S,\E,\G,\U)$:
\begin{itemize}
\item $\S$ is the set of multisets of ground facts from the signature $\sigfacts'$;
\item $\E = (\bigcup_{1\leq i \leq n, \text{ for each }\rid} \E^s_{i,\rid}) \cup \E^e \cup \Eio$
with
\[\begin{array}{l@{}l}
\E^s_{i,\rid} = \{\skipE\}\cup\{\evs{i,\rid,j}(\theta,\ll',\aa',\rr') \;|\; &
\ll',\aa',\rr' \text{ multisets of ground facts}, \\
&\theta \text{ ground inst. of } \vars(\ll)\cup\vars(\aa)\cup\vars(\rr)\\
&\text{where }
\ll\rew{\aa}\rr \text{ is the $j$th rule in } \msrsys'_{i} \text{ with the first}\\
&\text{parameter of all state and setup facts}\\
&\text{instantiated with } \rid\}
\end{array}\]
and
\[\begin{array}{l@{}l}
\E^e = \{\skipE\}\cup\{\eve{j}(\theta,\ll',\aa',\rr') \;|\; &
\ll',\aa',\rr' \text{ multisets of ground facts}, \\
&\theta \text{ ground inst. of } \vars(\ll)\cup\vars(\aa)\cup\vars(\rr)\\
&\text{where }
\ll\rew{\aa}\rr \text{ is the $j$th rule in } \Renv^-\}\\
\end{array}\]
and
\[\begin{array}{l@{}l}
\Eio = \{\lambda_{F,i,\rid}(\theta,\ll',\aa',\rr') \;|\; &
F\in\sigin\cup\sigout\\
&\ll',\aa',\rr' \text{ multisets of ground facts}, \\
&\theta \text{ ground inst. of } \vars(\ll)\cup\vars(\aa)\cup\vars(\rr)\\
&\text{where }
\ll\rew{\aa}\rr \text{ is the rule associated to fact $F$ and}\\
&\text{role $i$ in $\Rbuf$, with its first parameter instantiated with $\rid$}\}\\
\end{array}\]

\item $\G$: the guard for a parametric event $e(\theta,\ll',\aa',\rr')$ associated to a rule $\ll \rew{\aa} \rr\in \msrsys'$,
in the current state $s$, is
\[\ll'=_\Eq \ll\theta \;\wedge\; \aa'=_\Eq \aa\theta \;\wedge\; \rr'=_\Eq \rr\theta \;\wedge\; G(\ll',s) \;\wedge\;\Phieq(\aa')\]
where
\[G(\ll',s) = (\ll' \capM \faclin \subM s) \;\wedge\; (\setM(\ll')\cap \facper \subseteq \setM(s))\]
is the actual condition for the rule
and
\[\Phieq(\aa') = \bigwedge_{f\in\faceq \wedge f(t_1,\dots,t_k)\in\aa'}\feq(f)(t_1,\dots,t_k)\]
is the equality check.
\item $\U$: a parametric event $e(\theta,\ll',\aa',\rr')$ associated to a rule $\ll \rew{\aa} \rr\in \msrsys'$
updates the state $s$ to $s' = U(\ll',\rr',s)$, where
\[U(\ll',\rr',s) = s \setminusM (\ll'\capM \faclin) \cupM \rr'.\]
\end{itemize}

Note that the event label $\lambda_{F,i,\rid}(\theta,\ll',\aa',\rr')$ we assign to 
events associated with rules in $\Rbuf$ corresponds to the synchronisation label
$\lambda_F(\vect{x})$ described in Section~\ref{ssec:decomposition}.
Since we use the LTS formalism here, it is now an event name, rather than a fact annotating a 
rewrite rule.
We include more parameters here compared to that section to make the form of all our events
uniform -- even though, since the form of the rules in $\Rbuf$ is known, by construction,
it would be sufficient to only give the instantiation of each parameter of $F$, as was done
in the paper.

We let $\redev{\cdot}$ denote the reduction relation defined by checking the guards and applying the updates for each event in $\E$. Formally, $s \redev{E} s'$ if $E$ is an event $e(\theta,\ll',\aa',\rr')$ associated to a rule
$\ll \rew{\aa} \rr\in \Rintf$, such that $\ll'=_\Eq \ll\theta \;\wedge\; \aa'=_\Eq \aa\theta \;\wedge\; \rr'=_\Eq \rr\theta \;\wedge\; G(\ll',s) \;\wedge\;\Phieq(\aa')$, and $s' = U(\ll',\rr',s)$.

\medskip

Let $\pi$ denote the mapping from $\E$ to multisets of facts, that associates to each parametric event
$e(\theta,\ll',\aa',\rr')\in \E$ the multiset of facts $\aa'$.

We denote $\pip$ the extension of $\pi$ to sequences of events and multisets, which applies $\pi$
and removes empty multisets of facts, \ie
for any sequence $e=\tuple{e_1,\dots,e_n}$ of events,
\[\pip(e)=\tuple{\pi(e_i)\;|\; i=1,\dots,n \;\wedge\;\pi(e_i)\neq\emptyset}.\]
Note that Section~\ref{ssec:io-specs} introduced the same mapping, but in the MSR formalism.

\medskip

The traces of the event system are defined as
\[\traces(\EE) = \{e_1,\dots,e_m\;|\; (\forall i.\; e_i\in\E) \;\wedge\; \exists s_1,\dots,s_m\in\S.\; \emptyset \redev{e_1} s_1 \redev{e_2}\dots\redev{e_m} s_m\}.\]

\begin{lemma}
\label{proof-lemma:ev-rintf}
\[\pip(\traces(\EE)) = \tamtracespeq(\Rintf)\]
\end{lemma}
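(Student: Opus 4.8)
The plan is to show that the guarded event system $\EE$ is nothing more than a reformulation of the interface model $\Rintf$ under the modified (per-step) equality semantics $\redmseq{\Rintf,\Eq}{\cdot}$ of Proposition~\ref{prop:eq-semantics}, the only discrepancy being the stuttering events $\skipE$. Since both $\pip$ and the trace filtering built into $\tamtracespeq$ discard exactly the empty action labels, the two filtered trace sets must coincide. I would establish this through a step-level correspondence followed by a double inclusion.

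First I would record the step-level correspondence. By construction, an event $e(\theta,\ll',\aa',\rr')$ associated with a rule $\ll\rew{\aa}\rr \in \Rintf$ is enabled in a state $s$ exactly when $\ll'=_\Eq\ll\theta$, $\aa'=_\Eq\aa\theta$, $\rr'=_\Eq\rr\theta$, $G(\ll',s)$, and $\Phieq(\aa')$ all hold, and it then updates $s$ to $U(\ll',\rr',s)$. This is verbatim the premise and conclusion of the inference rule defining $\redmseq{\Rintf,\Eq}{\aa'}$. Hence for every rule-event $e$ we have $s \redev{e} s'$ if and only if $s \redmseq{\Rintf,\Eq}{\pi(e)} s'$, with $\pi(e)=\aa'$; moreover the three event families $\E^s_{i,\rid}$, $\E^e$, and $\Eio$ range precisely over the ground instances of the rules in the partition $\Rintf = \Renv^- \udis \Rbuf \udis (\bigudis_i \msrsys'_i)$. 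The only remaining events are the $\skipE$ self-loops, which are always enabled, leave the state unchanged, and satisfy $\pi(\skipE)=\emptyset$.

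For the inclusion $\tamtracespeq(\Rintf) \subseteq \pip(\traces(\EE))$, I would take a run $\emptyset \redmseq{\Rintf,\Eq}{\aa_1} s_1 \cdots \redmseq{\Rintf,\Eq}{\aa_m} s_m$, read off from each step its rule and a witnessing ground instance, and package it as the corresponding rule-event $e_j$ with $\pi(e_j)=\aa_j$. By the step-level correspondence this yields an $\EE$-run $\emptyset \redev{e_1} s_1 \cdots \redev{e_m} s_m$, so $\tuple{e_j}_j \in \traces(\EE)$ and $\pip(\tuple{e_j}_j) = \tuple{\aa_j}_{\aa_j\neq\emptyset}$, which is exactly the filtered MSR trace. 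For the converse inclusion I would take an $\EE$-run and delete its $\skipE$ steps; since $\skipE$ preserves the state, the retained rule-events still compose into a valid run, which by the step-level correspondence is a modified-MSR run of $\Rintf$. Its action sequence is the subsequence of $\tuple{\pi(e_j)}_j$ obtained by dropping the $\skipE$ labels, and filtering the resulting MSR trace removes the remaining empty labels; on the $\EE$ side $\pip$ removes all empty labels at once, so the two results agree, giving $\pip(\tuple{e_j}_j) \in \tamtracespeq(\Rintf)$.

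The point requiring the most care is the uniform but distinct treatment of the two sources of empty labels. Empty-action rule-events — such as the instances of the I/O and setup rules in $\Rbuf$, which are labeled $[]$ — genuinely change the state and correspond to real MSR transitions, so they must be retained in the $\subseteq$ direction; only the state-preserving $\skipE$ events may be spliced out, precisely because the MSR semantics has no idle transition to match them. Both kinds nonetheless map to the empty multiset under $\pi$, and since $\pip$ and the definition of $\tamtracespeq$ both erase all empty labels, the label sequences match exactly.
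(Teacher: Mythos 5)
Your proof is correct and follows essentially the same route as the paper's: a step-level correspondence between the rule events of $\EE$ and the transitions of $\redmseq{\Rintf,\Eq}{\cdot}$ (the paper's claims (a) and (b)), followed by a double inclusion in which both $\pip$ and the filtering built into $\tamtracespeq$ erase exactly the empty labels. Your extra care in splicing out $\skipE$ is sound but strictly unnecessary here, since the paper's formal definition of $\redev{\cdot}$ only gives transitions for parametric rule events, so $\skipE$ never occurs in a trace of $\EE$ and only becomes relevant later in the synchronized composition.
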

\begin{proof}
We show this lemma by proving that for any state $s$:
\begin{enumerate}[(a)]
\item\label{proof:item1} for any event $e\in\E$ and any $s'$, if $s \redev{e} s'$ then $s \redmseq{\Rintf,\Eq}{\pi(e)} s'$
\item\label{proof:item2} for any reduction step $s \redmseq{\Rintf,\Eq}{\aa} s'$, there exists $e$ such that $s \redev{e} s'$ and $\pi(e)=\aa$.
\end{enumerate}
Once these two properties are established, the lemma follows easily by applying them successively to each reduction step --
\ref{proof:item1} proves the first inclusion $(\subseteq)$ and \ref{proof:item2} proves the second one $(\supseteq)$.

\bigskip

We first show \ref{proof:item1}.
Consider states $s, s'$ and an event $e\in\E$ such that $s \redev{e} s'$.
By construction of $\E$, there exist $e'$, $\theta$, $\ll'$, $\aa'$, $\rr'$ such that $e=e'(\theta,\ll',\aa',\rr')$.
In addition, $e$ is associated to some rule $\ll \rew{\aa} \rr\in \Rintf$.
By definition of the guard associated to $e$, we have $(\ll',\aa',\rr')=_\Eq (\ll,\aa,\rr)\theta$, $\ll'\capM\faclin \subM s$,
the equality conditions $\Phieq(\aa')$ are satisfied,
and $s'=\setM(\ll')\cap\facper\subseteq\setM(s)$.
By definition of the update associated to $e$, we have
$s'=s \setminusM (\ll'\capM \faclin) \cupM \rr'$.
Thus by multiset rewriting, $s \redmseq{\Rintf,\Eq}{\aa'} s'$.
Finally, by definition of $\pi$ we have $\pi(e)=\aa'$, which proves $\ref{proof:item1}$.

\medskip

Let us now show \ref{proof:item2}.
Consider a reduction step $s \redmseq{\Rintf,\Eq}{\aa} s'$.
By definition of multiset rewriting, there exists a rule $\ll\rew{\aa} \rr\in \Rintf$,
a ground instantiation $\theta$ 
of $\vars(\ll)\cup\vars(\aa)\cup\vars(\rr)$, and multisets of ground facts $\ll', \aa', \rr'$
such that $(\ll',\aa',\rr') =_\Eq (\ll,\aa,\rr)\theta$, $\ll'\capM\faclin \subM s$, $\setM(\ll')\cap\facper\subseteq\setM(s)$,
the equality conditions $\Phieq(\aa')$ hold,
and $s'= \setM(\ll')\cap\facper\subseteq\setM(s)$.

By construction of $\Rintf$, $\ll\rew{\aa} \rr$ is either in $\msrsys'_{i}$ for some $i$,
in $\Renv^-$, or in $\Rbuf$.
In either case, there exists a parametric event $e$ such that the family of events 
\[\{e(\theta'',\ll'',\aa'',\rr'')\;|\; \theta'' \text{ gr. inst. of } \vars(\ll)\cup\vars(\aa)\cup\vars(\rr),
\ll'', \aa'', \rr'' \text{ multisets of gr. facts}\}\] is in $\E$.
The three cases differ only by the event name, which is inconsequential here.

Consider the instance $e(\theta,\ll',\aa',\rr')$: its guard is satisfied, and it updates 
$s$ into $\setM(\ll')\cap\facper\subseteq\setM(s)$, \ie $s'$.
Hence $s \redev{e(\theta,\ll',\aa',\rr')} s'$. Since $\pi(e(\theta,\ll',\aa',\rr'))=\aa'$, this proves \ref{proof:item2}.
\end{proof}

From now on, we will actually slightly strengthen the guards of the protocol events in $\E$,
i.e. those in $\E^s_{i,\rid}$, to use true equality instead of equality modulo $\Eq$.
This only reduces the set of possible executions, and thus preserves the previous trace 
inclusion.

\subsection{Step 3: separating the event system into each component}

We are now ready to split the event system into its components.

We will separate it into one system $\EE^s_{i,\rid}$ for each role instance,
and a system $\EE^e$ for the environment.
The interface events, from $\Eio$ (associated to rule in $\Rbuf$), will be split in two
halves, one to be added to $\EE^e$, and the other to one of the $\EE^s_{i,\rid}$.

\bigskip

For each role instance $(i,\rid)$ we consider the event system $\EE^s_{i,\rid}$ containing the events in $\E^s_{i,\rid}$ and the associated 
guards and updates, as well as events, guards and updates encoding the additional rules
\[\begin{array}{rcl@{\quad}l}
{[F_{i}(\rid,\vect{x})]} &\rew{[]}& [] & \text{for $F\in\sigout$}\\
{[]}                  &\rew{[]}& [F_i(\rid,\vect{x})] & \text{for $F\in\sigin$}\\
\end{array}\]
We denote the events for these transitions $\lambda^{s}_{F,i,\rid}$.
Formally:
\begin{itemize}
\item if $F\in\sigout$, the guard for $\lambda^{s}_{F,i,\rid}(\vect{x})$ requires that the state contains $F_{i}(\rid,\vect{x})$,
and its update simply removes that fact from the state or leaves it, depending on whether $F\in\faclin$ or not;
\item if $F\in\sigin$, the guard for $\lambda^{s}_{F,i,\rid}(\vect{x})$ is always satisfied, and its update adds fact $F_{i}(\rid,\vect{x})$
to the state.
\end{itemize}

\bigskip

We also consider the event system $\EE^e$ containing the events in
$\E^e$ and the associated guards and updates, augmented with events encoding the rules
\[\begin{array}{rcl@{\quad}l}
{[F(\vect{x})]}           &\rew{[]}& [] & \text{for $F\in\sigin^-$}\\
{[\vect{F'(\vect{x})}]}                &\rew{[]}& [] & \text{for $F=\Setupf_i$ for some $i$}\\
&&&\text{where $\vect{F'(\vect{x})}$ are the premises of the setup rule for $F$}\\
{[]} &\rew{[]}& [F(\vect{x})] &\text{for $F\in\sigout$.}
\end{array}\]
We denote the events associated to these rules $\lambda^{e}_{F,i,\rid}(\vect{x})$
(i.e. we include one copy of the event for each $i,\rid$).
Formally:
\begin{itemize}
\item if $F\in\sigin^-$, the guard for $\lambda^{e}_{F,i,\rid}(\vect{x})$ requires that the state contains $F(\vect{x})$,
and its update either removes this fact from the state or leaves it, depending on whether $F\in\faclin$ or not;
\item if $F=\Setupf_i$, the guard for $\lambda^{e}_{F,i,\rid}(\vect{x})$ requires that the state contains $[\vect{F'(\vect{x})}]$, and its update removes from the state the facts in this list that are in $\faclin$;
\item if $F\in\sigout$, the guard for $\lambda^{e}_{F,i,\rid}(\vect{x})$ is always satisfied, and its update adds $F(\vect{x})$ to the state.
\end{itemize}

\bigskip

We define the partial function $\chi:(\bigcup_{i,\rid} \EE^s_{i,\rid})\times \EE^e \rightarrow \E$ (conflating $\EE^e$ with its state space, and similarly for $\EE^s$) that synchronises labels $\lambda_F$, i.e.
\begin{itemize}
\item $\chi(\lambda^s_{F,i,\rid}(\vect{m}), \lambda^e_{F,i,\rid}(\vect{x})) = 
\lambda_{F,i,\rid}([\vect{x}\mapsto \vect{m}],\ll',\aa',\rr')$,
where $\ll\rew{\aa}\rr$ is the rule associated with $F$ in $\Rbuf$,
instantiated with $\rid$, $\vect{x}$ are its variables, and $\ll'\rew{\aa'}\rr'$ is its instantiation with $\vect{m}$;
\item $\chi(\skipE,e)=e$ when $e$ is not of the form $\lambda^s_{F,i,\rid}$;
\item $\chi(e,\skipE)=e$ when $e$ is not of the form $\lambda^e_{F,i,\rid}$.
\item $\chi(e,e')$ is undefined in all other cases.
\end{itemize}

\medskip

We introduce two composition operators. $|||$ is parallel composition, i.e. the state space
of $A ||| B$ is the product of those of $A$ and $B$, and its events are $e_A$ updating state 
$(s_A,s_B)$ to $(s_A', s_B)$ if $e_A$ updates $s_A$ to $s_A'$ in $A$, and similarly for $e_B$ in $B$.
${||}_\chi$ is synchronising composition, and is defined similarly:
the state space
of $A ||_\chi B$ is the product of those of $A$ and $B$,
and its events are $\chi(e_A,e_B)$, updating state $(s_A,s_B)$ to $(s_A', s_B')$,
if $e_A$ updates $s_A$ to $s_A'$ in $A$, and similarly for $e_B$ in $B$.

\bigskip

We can then state the following composition lemma,
which corresponds to Lemma~\ref{lem:decomposition} in Section~\ref{ssec:decomposition}
(when chaining it by transitivity with Lemma~\ref{proof-lemma:ev-rintf}).

\begin{lemma}
\[\traces(({|||}_{i,\rid}\EE^s_{i,\rid}) {||}_{\chi} \EE^e) \subseteq \traces(\EE).\]
\end{lemma}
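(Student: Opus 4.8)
The plan is to prove the inclusion by exhibiting a lockstep forward simulation from the composed event system $({|||}_{i,\rid}\EE^s_{i,\rid}) {||}_{\chi} \EE^e$ to $\EE$, in the same spirit as the proof of Lemma~\ref{lem:interface-model}, but now matching each concrete step by a \emph{single} abstract step carrying the same label. A state of the composed system consists of a family of local role states $s^s_{i,\rid}$ (one per role instance, each a multiset of ground facts over $\sigrole{i}$) together with an environment state $s^e$ (a multiset over $\sigenv$). I would relate such a composed state to the global state
\[
  S \;=\; \Big(\cupM_{i,\rid}\, s^s_{i,\rid}\Big) \cupM s^e,
\]
and call this relation $\R$. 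Since all components start empty and the empty multiset union is $\emptyset$, the initial states are related, which is the base case. Before the step analysis, I would record the structural invariant that the components act on pairwise disjoint fact sets: by construction $\EE^s_{i,\rid}$ only ever produces or consumes facts whose symbol lies in $\sigrole{i}$ and whose thread-id argument is $\rid$, whereas $\EE^e$ works only over $\sigenv$. In particular every local state is a sub-multiset of $S$, and each fact of $S$ belongs to a unique component.

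The core is a case analysis on the event firing in the composed system, which by the definition of $\chi$ has exactly one of three forms. \emph{(i)} An interleaved protocol event $e \in \E^s_{i,\rid}$, i.e.\ $\chi(e,\skipE)=e$: its guard is checked only against $s^s_{i,\rid}$, and since $s^s_{i,\rid} \subM S$, both the linear sub-multiset condition and the persistent subset condition in $G(\ll',\cdot)$ lift monotonically to $S$; the equational side-conditions and the equality check $\Phieq$ are state-independent, so the \emph{same} event $e$ is enabled in $\EE$. \emph{(ii)} An interleaved environment event $e \in \E^e$: symmetric, using $s^e \subM S$. In both cases I would then check that updating the single affected local state and re-forming the union yields exactly $U(\ll',\rr',S)$; this is a short multiset computation that goes through because the consumed linear facts $\ll' \capM \faclin$ all lie in that one component (so $S$ has at least their multiplicity) and the produced facts $\rr'$ touch only that same component. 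As the firing event is literally an event of $\EE$, its label is preserved and $\R$ is re-established.

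The main obstacle is case \emph{(iii)}, the synchronised I/O event $\chi(\lambda^s_{F,i,\rid}(\vect{m}), \lambda^e_{F,i,\rid}(\vect{x})) = \lambda_{F,i,\rid}([\vect{x}\mapsto\vect{m}],\ll',\aa',\rr') \in \Eio$, where I must show that the two split halves recombine into the original I/O rule of $\Rbuf$. The matching $\vect{x}\mapsto\vect{m}$ built into $\chi$ forces the environment half to be instantiated consistently with the role half, so $(\ll',\aa',\rr')$ is precisely the instantiation of the unsplit I/O rule. Enabledness of $\lambda^e_{F,i,\rid}$ then provides exactly the premises of that rule inside $S$: for $F \in \sigin^-$ the environment half consumes $F(\vect{m}) \in s^e \subM S$; for $F \in \sigout$ the role half consumes $F_i(\rid,\vect{m}) \in s^s_{i,\rid} \subM S$; and for $F=\Setupf_i$ the environment half consumes the setup rule's premises from $s^e$. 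In each subcase the combined effect — the environment half acting on its $\sigenv$ facts and the role half on its $\sigrole{i}$ facts — coincides with $U(\ll',\rr',S)$ for the corresponding unsplit rule, precisely because the two halves operate on disjoint parts of $S$ by the invariant. Since I/O rules carry empty action labels, $\Phieq$ is vacuously true and the emitted label is the same $\Eio$ event, so I obtain a single matching transition $S \redev{\lambda_{F,i,\rid}(\ldots)} S'$ with the new composed state related to $S'$. The bulk of the effort here is the bookkeeping of which facts each half touches and verifying the three subcases (input, output, setup) of the recombination.

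Finally I would invoke the standard fact that a forward simulation which relates the initial states and matches every concrete step by an abstract step with the same label entails inclusion of trace sets. This yields $\traces(({|||}_{i,\rid}\EE^s_{i,\rid}) {||}_{\chi} \EE^e) \subseteq \traces(\EE)$, as required; chaining with Lemma~\ref{proof-lemma:ev-rintf} then recovers the decomposition statement of Lemma~\ref{lem:decomposition} at the MSR level.
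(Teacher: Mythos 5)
Your proposal is correct and follows essentially the same route as the paper's own proof: a forward simulation whose relation maps a composed state to the multiset union of its role and environment components, verified by the same three-way case split on $\chi$ (interleaved role event, interleaved environment event, synchronised I/O event), using guard monotonicity under super-multisets and the disjointness of the components' fact signatures to recombine the split I/O rules. The only cosmetic difference is that you state the disjointness property as an explicit invariant and spell out all three I/O subcases, where the paper treats disjointness implicitly and writes out only the output case.
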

\begin{proof}
We prove this lemma by establishing a refinement with a simulation relation $\R$ between abstract states (from $\EE$)
and concrete ones (from $({|||}_{i}\EE^s_{i,\rid}) {||}_{\chi} \EE^e$).
The concrete states are of the form $((s_{i,\rid})_{1\leq i \leq n, \text{ for each }\rid},s_e)$,
\ie they are composed of one multiset of facts for each $i,\rid$, and one for the environment.

The refinement relation we use is defined by $(s,s')\in\R$ iff $s = (\cupM_{i,\rid} s'_{i,\rid}) \cupM s'_e$,
where $s'=((s'_{i,\rid}),s'_e)$.

For better legibility we will denote $\EE'=({|||}_{i}\EE^s_{i,\rid}) {||}_{\chi} \EE^e$ the composed system,
and $\redev{\cdot}_{\EE'}$, $\redev{\cdot}_{\EE}$, $\redev{\cdot}_{i,\rid}$, $\redev{\cdot}_{e}$
the transition relations defined respectively by $\EE'$, $\EE$, $\EE^s_{i,\rid}$ and $\EE^e$.

It is clear that the initial states of $\EE'$ and $\EE$ are related:
$(((\emptyset,\dots,\emptyset),\emptyset),\emptyset)\in\R$. We now show that for 
all states $(s_1,s'_1)\in \R$, for all transition steps $s'_1 \redev{e}_{\EE'} s'_2$
there exists a transition $s_1 \redev{e}_{\EE} s_2$
such that $(s_2,s'_2)\in \R$.
We will denote $s'_j=((s'_{j,i,\rid})_{1\leq i \leq n, \text{ for each }\rid},s'_{j,e})$ for $j\in\{1,2\}$.

Following the definition of $\chi$, we can distinguish several cases for the transition step $s'_1 \redev{e}_{\EE'} s'_2$.
\begin{itemize}
\item \case{if $e=\chi(e',\skipE)$ for some $e'\in \E^s_{i,\rid}$:} then $e'=e$,
and by definition of $||_\chi$ and $|||$, since $s'_1 \redev{e}_{\EE'} s'_2$,
we have $s'_{1,i,\rid}\redev{e}_{i,\rid} s'_{2,i,\rid}$,
$s'_{2,j,\rid'}=s'_{1,j,\rid'}$ for all $(j,\rid')\neq (i,\rid)$, and $s'_{2,e}=s'_{1,e}$.

By definition of $\EE^s_{i,\rid}$, the guard and update of $e$ in that system are the same as in $\EE$.
In addition, since $(s'_1,s_1)\in\R$, we have $s'_{1,i,\rid}\subM s_1$.
It is immediate from the form of the guard in $\EE$ that it is stable by supermultiset, and thus holds for $s_1$.
It is also clear from the form of the update in $\EE$ that applying it to the whole state is exactly the same as
applying it to a submultiset of the state
that satisfies the guard, here $s'_{1,i,\rid}$, and leaving the rest of the state untouched.

Therefore, we have
\[s_1=(\cupM_{j,\rid'} s'_{1,j,\rid'}) \cupM s'_{1,e} \;\redev{e}_{\EE}\;(\cupM_{(j,\rid')\neq(i,\rid)} s'_{1,j,\rid'})\cupM s'_{2,i,\rid} \cupM s'_{1,e}.\]

As noted earlier, we have $s'_{2,e}=s'_{1,e}$ and $s'_{2,j,\rid'}=s'_{1,j,\rid'}$ for all $(j,\rid')\neq (i,\rid)$,
hence $(s'_2,(\cupM_{(j,\rid')\neq(i,\rid)} s'_{1,j,\rid'})\cupM s'_{2,i,\rid} \cupM s'_{1,e})\in\R$,
which concludes the proof in this case.

\item \case{if $e=\chi(\skipE,e')$ for some $e'\in \E^e$:} this case is similar to the previous one.

\item The remaining case is \case{the synchronisation case,
where $e=\chi(\lambda^s_{F,i,\rid}(\vect{m}), \lambda^e_{F,i,\rid}(\vect{m}))$} {for some $F, i, \rid, \vect{m}$.}

Then $e=\lambda_{F,i,\rid}(\theta_m,\ll',\aa',\rr')$
where $\theta_m=[\vect{x}\mapsto \vect{m}]$, $\ll\rew{\aa}\rr$ is the rule associated with $F$ in $\Rbuf$,
instantiated with $\rid$, $\vect{x}$ are its variables, and $\ll'\rew{\aa'}\rr'$ is its instantiation with $\theta_m$.

By definition of $||_\chi$ and $|||$, since $s'_1 \redev{e}_{\EE'} s'_2$,
we have $s'_{1,i,\rid}\redev{\lambda^{s}_{F,i,\rid}(\vect{m})}_{i,\rid} s'_{2,i,\rid}$,
$s'_{1,e}\redev{e^{e,in}(m)}_{e} s'_{2,e}$, and
$s'_{2,j,\rid'}=s'_{1,j,\rid'}$ for all $(j,\rid')\neq (i,\rid)$.

We now need to distinguish the cases where $F\in\sigin^-$, $F=\Setup_i$, and $F\in\sigout$.
We write the proof for the case $F\in\sigout$ and $F\in\faclin$,
as the other cases are similar.

We then have $\ll'=\multileft F_i(\rid,\vect{m})\multiright$, $\aa'=[]$, and
$\rr'=\multileft F(\vect{m})\multiright$.

By definition of $\lambda^{s}_{F,i,\rid}(\vect{m})$, we have 
$F_i(\rid,\vect{m})\in $
$s'_{2,i,\rid}=s'_{1,i,\rid}$, and
$s'_{1,i,\rid}\setminusM\multileft F_i(\rid,\vect{m}) \multiright$.

By definition of $\lambda^e_{F,i,\rid}(\vect{m})$, we have $s'_{2,e}=s'_{1,e}\cupM\multileft F(\vect{m})\multiright$.
Therefore, $s_1=(\cupM_{k,\rid'} s'_{1,k,\rid'}) \cupM s'_{1,e}$ satisfies the guard of the event $\lambda_{F,i,\rid}(\theta_m,\multileft F_i(\rid,\vect{m})\multiright,\emptyset,\multileft F(\vect{m})\multiright)$
associated with rule $[F_i(\rid,\vect{m})] \rightarrow [F(\vect{m})]$ in $\EE$.

Hence we have
\[\begin{array}{l}
s_1 \;\redev{\lambda_{F,i,\rid}(\theta_m,\multileft F_i(\rid,\vect{m})\multiright,\emptyset,\multileft F(\vect{m})\multiright)}_{\EE}\\
\qquad\qquad((\cupM_{(k,\rid')} s'_{1,k,\rid'}) \cupM s'_{1,e}\setminusM\multileft F_i(\rid,\vect{m}) \multiright)\cupM\multileft F(\vect{m}) \multiright.
\end{array}\]
As noted earlier, we have $s'_{2,i,\rid}=s'_{1,i,\rid} \cupM \multileft  F(\vect{m})\multiright$,
$s'_{2,e}=s'_{1,e}\setminusM\multileft F_i(\rid,\vect{m}) \multiright$, and $s'_{2,k,\rid'}=s'_{1,k,\rid'}$ for all $(k,\rid')\neq (i,\rid)$.
Thus $(s'_2,((\cupM_{(k,\rid')} s'_{1,k,\rid'}) \cupM s'_{1,e}\setminusM\multileft F_i(\rid,\vect{m}) \multiright)\cupM\multileft F(\vect{m})\multiright)\in\R$,
which concludes the proof in this case.
\end{itemize}

\end{proof}
}{}

\end{document}